\documentclass[runningheads]{llncs}
\usepackage{url}
\usepackage{listings}
\usepackage{booktabs}
\usepackage{array}
\usepackage{dcolumn}\newcolumntype{d}[1]{D{.}{.}{#1}}
\usepackage{longtable}
\usepackage{microtype}
\usepackage[T1]{fontenc}
\usepackage[utf8]{inputenc}
\usepackage{amsfonts}
\usepackage{amssymb}
\usepackage{mathtools}
\usepackage{prftree}
\usepackage{algorithmic}
\usepackage{algorithm}
\usepackage{scalefnt}
\usepackage{pdfpages}
\usepackage{stmaryrd}
\usepackage{orcidlink}
\definecolor{Crayola}{HTML}{F45B69}
\definecolor{Cornflower}{HTML}{788BFF}
\definecolor{Fawn}{HTML}{F7B267}
\definecolor{Celadon}{HTML}{98C9A3}
\definecolor{Midnight-green}{HTML}{13505B}

\newcommand{\class}[1]{{\relax\ifmmode\mathbf{#1}\else\textbf{#1}\fi}}
\usepackage{hyperref}
\usepackage{color}

\spnewtheorem{notation}{Notation}{\bfseries}{\itshape}
\newcommand{\appref}[1]{the full version of the paper~\cite{fullversion}}

\DeclareMathOperator{\Erf}{erf}
\DeclareMathOperator{\Erfc}{erfc}
\DeclarePairedDelimiterXPP\erf[1]{\Erf\mkern1mu}(){}{#1}
\DeclarePairedDelimiterXPP\erfc[1]{\Erfc\mkern1mu}(){}{#1}

\DeclareMathOperator{\lfp}{lfp}
\DeclareMathOperator{\gfp}{gfp}
\DeclareMathOperator{\var}{var}
\DeclareMathOperator{\fn}{fn}
\DeclareMathOperator{\Free}{free}
\DeclareMathOperator{\conv}{conv}
\DeclareMathOperator{\link}{lk}

\newcommand{\declarebig}[2]{%
	\expandafter \newcommand\csname Scale#1\endcsname[1]{\vcenter{\hbox{\scalefont{##1}#2}}}%
	\expandafter \DeclareMathOperator\expandafter*\csname big#1\endcsname{%
		\vphantom\sum%
		\mathchoice{\csname Scale#1\endcsname{2}}{\csname Scale#1\endcsname {1.4}}{\csname Scale#1\endcsname {1}}{\csname Scale#1\endcsname {0.75}}}%
}

\declarebig{forall}{$\forall$}
\declarebig{exists}{$\exists$}
\declarebig{expects}{$\mathbb{E}$}

\newcommand{\R}{\stackrel{\scriptstyle \$}{\leftarrow}}

\newcommand{\algrule}[1][.2pt]{\par\vskip.5\baselineskip\hrule height #1\vskip.5\baselineskip}
\newcommand{\procname}[1]{%
	\textup{\texttt{#1}}%
}

\renewcommand{\O}{\mathcal{O}}
\newcommand{\sem}[1]{\llbracket {#1} \rrbracket}
\newcommand{\gbcshort}{GBC}
\newcommand{\gbclong}{generalized boolean circuit}
\newcommand{\Solver}[1]{\textnormal{\textsf{#1}}\relax}
\newcommand{\checker}{\Solver{iSMC}}
\newcommand{\prot}{\procname{TraceCert}}
\newcommand{\protGC}{\procname{TraceCertRev}}
\newcommand{\lib}{\Solver{BL-IP}}
\newcommand{\blic}{\Solver{blic}}
\newcommand{\blictwo}{\Solver{clic}}

\newcommand{\F}{\mathbb{F}_p}
\newcommand{\B}{\mathbb{B}}
\newcommand{\Alg}[1]{\textnormal{\texttt{#1}}}
\newcommand{\CPCertify}{\texttt{CPCertify}}
\newcommand{\Claimset}{\mathcal{C}}
\newcommand{\Eval}[1]{\sem{{#1}}}
\newcommand{\Pev}[1]{{\mathrm{\Pi}}_{{#1}}}

\newcommand{\ApplyEBDD}{\Alg{ApplyEBDD}}
\newcommand{\ComputeEBDD}{\Alg{ComputeEBDD}}
\newcommand{\Apply}{\Alg{Apply}}
\newcommand{\Reduce}{\Alg{Reduce}}
\newcommand{\etal}{\textit{et al.\ }}
\newcommand{\Abs}[1]{\mathopen|#1|\mathclose}

\newcommand{\nusmvoverhead}{5.84}
\newcommand{\proveroverhead}{2.00}
\newcommand{\verifierspeedup}{33.4}
\newcommand{\verifiermaxtime}{3.6}
\newcommand{\improvementsspeedup}{2.26}
\newcommand{\improvementmemory}{16.12}
\newcommand{\improvementmemorymedian}{26.26}
\newcommand{\naivetimeout}{4}
\newcommand{\minprobability}{$2.54 \cdot 10^{-7}$}

\begin{document}
\title{{\checker}: A BDD-based Symbolic Model Checker with Interactive Certification}%
\titlerunning{A BDD-based Symbolic Model Checker with Interactive Certification}

\author{
Philipp Czerner \orcidlink{0000-0002-1786-9592} \and
Javier Esparza   \orcidlink{0000-0001-9862-4919} \and
Konrad Winslow  \orcidlink{0009-0001-2506-097X} 
}
\authorrunning{P.\ Czerner, J.\ Esparza, K.\ Winslow}

\institute{Technical University of Munich, Germany,\\
\email{\{czerner, esparza, mko\}@cit.tum.de}
}

\maketitle
\begin{abstract}
We present {\checker}, the first self-certifying model checker with interactive certification, a certification paradigm based on the theory of interactive proof systems. {\checker} is a symbolic BDD-based model checker for arbitrary properties of Computation Tree Logic (CTL) with justice requirements. After solving an instance of the model-checking problem,  {\checker} conducts a certification procedure that guarantees with high probability (chosen by the user) that the answer is correct. {\checker} is based on the technology of the QBF-solver with interactive certification presented by Couillard \etal at CAV 2023. We extend, improve on, and re-implement this technology, adapting it to the needs of CTL model checking. 
\end{abstract}


\section{Introduction}
\label{ch:1}

Interactive certification is a certification paradigm based on the theory of interactive
proof systems \cite{GoldwasserMR85,Arora2009}. In the standard certification paradigm 
an agent solves a problem (e.g. satisfiability of a Boolean formula), produces a certificate (e.g. a satisfying assignment or a resolution proof of unsatisfiability), and sends it to another agent, which checks the certificate and accepts it or not. Interactive certification generalizes this paradigm by allowing the agents to engage in a protocol with multiple rounds in which the verifier repeatedly and adaptively ask questions to the other agent.

We present {\checker}, the first self-certifying model checker with
interactive certification. {\checker} is a symbolic
BDD-based model checker for properties expressed in Computation Tree Logic (CTL)
\cite{Chaki2018,Bryant2018}. It accepts flattened boolean models in the
established \textsf{smv} format and arbitrary CTL formulas and justice
requirements \cite{McMillan93,Piterman2018}. After solving an instance of the
model-checking problem,  {\checker} conducts a certification procedure that
guarantees with high probability---chosen by the user---that the answer is
correct. {\checker} extends, improves, and reimplements technology of the
BDD-based QBF-solver developed by Couillard et al. in \cite{CAV23}.

\smallskip\noindent \textbf{Architecture and functionality of {\checker}.}
Conceptually, {\checker} consists of three modules: Solver, Prover and Verifier.
Solver and Prover are implemented on top of the {\blictwo} BDD-library, an
optimization and extension of the {\blic} library of \cite{CAV23}. Given a
system and a CTL specification, Solver first computes a BDD representing the set
of all reachable states of the system satisfying the specification; essentially,
Solver implements  
the same algorithm as SMV,  NuMSV, or (the BDD-based part of) NuXmv
\cite{Chaki2018,CavadaCDGMMMRT14,Cimatti2000,McMillan93}. Then, Prover and
Verifier engage in an \emph{interactive proof protocol} called {\prot}. Loosely
speaking, an interactive proof protocol specifies a sequence of interactions
between Prover and Verifier, starting with a claim by Prover about the answer to
a computational question, and ending with Verifier deciding to believe Prover or
not \cite{Arora2009}.  At the start of {\prot}, Prover sends Verifier the
\emph{execution trace} of Solver on the model-checking instance, consisting of
the sequence of calls to the BDD library executed by Solver, interspersed with
assertions describing the decisions made at conditional branches. The trace ends
with an assertion about the final result, stating either that the system satisfies the
specification or that it does not\footnote{The length of the execution trace can grow
exponentially in the size of the model-checking instance in the worst case, but it is usually much
smaller than the computation time, and can even be exponentially smaller.}.
After receiving the trace, Verifier asks Prover questions about it in such a way
that, after {\prot} terminates, Verifier can tell with high probability whether
all the assertions of the trace are true, \emph{without executing it}. 
For this, {\prot} encodes boolean functions as multivariate polynomials over a
finite field $\F$, where $p$ is some large prime. For example, the function $x_1
\vee x_2$  is encoded as the polynomial $x_1 + x_2 - x_1x_2$. Intuitively, the
polynomial behaves like the formula for $x_1, x_2 \in \{0,1\}$, but Verifier
asks questions about the values of the polynomials at points chosen uniformly at
random from $\F$. This guarantees that the error probability, defined as the
probability that Verifier does not catch a wrong result by Prover  or any
malicious attempt by Prover to  ``fool'' Verifier, is very small.

{\checker} exhibits three fundamental properties:
\begin{enumerate}
\item If at least one of the assertions of the execution trace is false, then
the error probability is at most $\ell/|\F|$. In our experiments we use $p=
2^{61} -1$, and the error probability never exceeds
\minprobability.\footnote{For smaller probabilities one can take a larger $p$ or run $\prot$ multiple times.}
\item Verifier runs in time $O(n^2 \ell)$, where $n$ is the number of variables
of the model-checking instance, and $\ell$ is the length of the execution trace.
(Verifier's runtime depends only on the length of the trace, not on the time it
takes to execute it. Since the trace consists of a sequence of BDD operations,
the time can be exponentially larger.) In particular, Verifier runs in
polynomial time in the number of variables for systems whose state space has
fixed diameter or for bounded model-checking problems.  In experiments conducted
with a timeout of 15 minutes for Prover, Verifier never needs more than \verifiermaxtime\
seconds.
\item For any model-checking instance, if Solver runs in time $S$, then Prover
runs in time $O(S)$. In our experiments, the constant hidden in the big-oh
notation lies between 1 and 2.8.
\end{enumerate}
Currently, the price to pay for interactive verification is a penalty in the
efficiency of Solver. In our experiments, conducted with a timeout of 15
minutes, the average slowdown factor w.r.t.\ NuSMV is \nusmvoverhead. The properties above
make {\checker} particularly attractive for architectures in which a client with
limited computational resources asks a powerful but \emph{untrusted} server to
solve a model-checking instance. In such an architecture, {\checker}'s Solver
and Prover run on the server, while Verifier runs on the client. By property 2.,
Verifier only invests linear time in $\ell$, and so it can certify even very
large instances of the model-checking problem. By property 3., this is achieved
with reasonable overhead. 

To the best of our knowledge, properties 2.-3. are a unique feature of
{\checker}. The reason is a fundamental theoretical limit:  to the best of our
knowledge, all certification procedures implemented in current model checkers
are non-interactive protocols in which Prover sends Verifier one or more
objects, called \emph{certificates}, and then, \emph{without further interaction
with Prover}, Verifier runs an algorithm on the certificate and emits a verdict.
It is well-known that such certificates have worst-case exponential length in
$\ell$  unless $\class{NP}=\class{coNP}$ (see e.g.~\cite{Arora2009}), and so
Verifier needs exponential time in $\ell$. 



\smallskip\noindent \textbf{Main technical contributions.}  Our first main
contribution is the observation that the problem of checking all assertions of
the execution trace can be reduced to the problem studied in \cite{CAV23}:
computing the number of satisfying assignments of a \emph{boolean circuit with
partial evaluation}, an extension of standard boolean circuits introduced in
\cite{CAV23}. (The reduction is sketched in the next section.) This allows one
to reuse {\CPCertify},  the interactive proof protocol of \cite{CAV23}, and its
implementation on top of the {\blic} BDD library, also developed in
\cite{CAV23}.  However, the resulting tool is inefficient, because neither
{\CPCertify} nor {\blic} are tailored to the needs of symbolic model checking.
Our three other main contributions are solutions to three bottlenecks of this
direct approach:

\begin{enumerate}
\item {\CPCertify} proceeds in rounds, one for each gate of the input boolean
circuit.  
Many of these gates are labeled with equivalence and renaming operators.
However, in {\CPCertify} the rounds for these gates are very expensive for
Prover:  the round for an equivalence gate $\psi_1\equiv \psi_2$ takes $\O(n_1
\cdot n_2)$ time, where $n_i$ is the BDD-size of $\psi_i$, and the round for a
renaming gate $[X'/X]\psi$ takes $\O(n^{2^k})$ time, where $n$ is the BDD-size
of $\psi$ and $k = |X|$. In {\prot}, our new protocol, these rounds take $\O(1)$
and $\O(n)$ time, respectively.
\item {\CPCertify} traverses the circuit in topological order, starting at its
output gate and moving towards its input gates. However, Solver proceeds in the
\emph{reverse} order, from inputs to outputs. For this reason, Solver needs to
store all BDDs for all gates of the circuit. This prevents the use of garbage
collection, an important feature of BDD-libraries for discarding BDD-nodes no
longer required by the application (see e.g. \cite{LvSX13,Somenzi2015}). We show
that, under the assumption that Prover acts as an oracle (meaning that it does
not store information from previous queries) our new protocol {\prot} can
traverse the circuit in the same order as Solver without runtime penalty. 
\item BDD-libraries use a global \emph{computation cache} of BDD-nodes for
\emph{all} the boolean functions computed along an execution trace and their
sub-functions (see e.g. \cite{Somenzi2015}). For model-checking applications,
the table leads to efficiency gains of 1-2 orders of magnitude \cite{Yang1998}.
However, {\blic} does not use a global table. The reason is that {\blic}
manipulates not only BDDs, but \emph{extended BDDs} (eBDDs), a data structure
introduced in \cite{CAV23}, and {\blic}'s implementation of eBDD operations is
incompatible with a global computation cache.  We introduce a novel representation of eBDDs
that solves this problem.
\end{enumerate}
In our experimental comparison, the
contributions 1.-3. lead to an average twofold reduction in execution time, where the reduction factor increases with the execution time and reaches a maximum of 73, and an average sixteen fold reduction in memory usage.

\smallskip\noindent\textbf{Related Work.}  
Namjoshi introduced a certification procedure for $\mu$-calculus model checking
based on deductive proof systems \cite{Namjoshi2001}. Griggio \etal also propose
to use deductive systems for certification of  LTL model checking
\cite{GriggioRT21}. We follow a different approach that does not require to use
deductive systems. Yu \etal have developed a certification procedure for
SAT-based model checking  that uses inductive invariants as certificates for
$k$-induction \cite{YuBH20,YuFBH22,YuFBH23,FroleyksYBH24}. Jussila \etal
presents a method to generate proof certificates from BDDs that can also be used
to construct certificates for symbolic BBD-based model-checking of safety
properties \cite{Jussila2006}. Conchon \etal and Mebsout and Tinelli construct
certificates for SMT-based model checking of safety properties of infinite-state
and parameterized systems \cite{ConchonMZ15,MebsoutT16}.  All these approaches are limited to safety properties, while we target arbitrary CTL properties with justice requirements. Kuismin and Heljanko present a certification procedure for LTL liveness properties that works by reduction to certification of safety
properties \cite{KuisminH13}.

All the approaches above generate certificates of worst-case exponential size
in the size of the instance, even for bounded model-checking problems, and so 
their Verifier components need exponential time and space in the size of the instance. In our approach Verifier only needs polynomial time. 

Interactive certification for QBF-solving and SAT-solving using interactive
proof systems has been studied in \cite{CAV23,CzernerEK24}. There is also
recent interest  in zero-knowledge, interactive proof systems for unsatisfiable
SAT formulas \cite{Luo2022}. The approach has been recently extended to
zero-knowledge proofs for all PSPACE problems \cite{karthikeyan2025}. 

Our approach follows the paradigm of certifying computations without reexecuting them \cite{WalfishB15}. We focus on computations consisting of calls to a BDD-library, which allows us to obtain a certification procedure with much smaller overhead.

Verified model checkers are an alternative to certification
\cite{Sprenger98,EsparzaLNNSS13,Wimmer19,TsaiFLSWY23a}. They do not need to
produce or check certificates, but require to maintain the correctness proof
whenever the implementation of the model checker changes.

\smallskip\noindent\textbf{Structure of the paper.} Section~\ref{ch:2} fixes some
notation on CTL model checking and introduces interactive proof protocols.
Section~\ref{sec:checker} describes the structure of  {\checker}. Section
\ref{sec:prot} presents {\prot}, the extension of the interactive proof protocol
of \cite{CAV23} used by {\checker}. Section \ref{sc:prover} presents our
implementation of the Prover of \ref{sec:prot}. Section \ref{sec:eval} presents
our experimental results.  Throughout the paper we refer the reader to several appendices
containing formal definitions and proofs.

\section{Preliminaries}
\label{ch:2}

We assume the reader is familiar with computation tree logic (CTL), the
bottom-up  model-checking algorithm for CTL, and its symbolic implementation
with BDDs \cite{Clarke2018,Piterman2018,Emerson1986}. We briefly recall a few
notions.

Given a Kripke structure with set of states $S$ and transition relation $R
\subseteq S \times S$ and a CTL  formula $\varphi$, the algorithm computes the
set $\sem{\varphi} \subseteq S$ of states satisfying $\varphi$ in bottom-up
manner. For example, for the formula $\mathbf{EG} \mathbf{EF} p$ the algorithm
first computes $\sem{p}$; then it computes $\sem{\mathbf{EF} p}$ using the
identity  $\sem{\mathbf{EF} p} = \lfp_Z (R^{-1}(Z) \cup \sem{p})$, where $\lfp$
denotes the least fixpoint of the mapping $Z \mapsto R^{-1}(Z) \cup \sem{p}$;
finally, it  computes $\sem{\mathbf{EG} \mathbf{EF} p}$ using the identity
$\sem{\mathbf{EG} \mathbf{EF} p} = \gfp_Z (R^{-1}(Z) \cap \sem{\mathbf{EF} p})$,
where $\gfp_Z$ denotes the greatest fixpoint of $Z \mapsto R^{-1}(Z) \cup
\sem{\mathbf{EF} p})$. 

Symbolic model checking encodes a state of the Kripke structure as a valuation
of a set $X$ of boolean variables, a set of states $T \subseteq S$ as a boolean
function $T^b(X)$ over $X$, and the transition relation $R$ as a boolean
function $R^b(X, X')$ over the variables $X \cup X'$, where $X' = \{v' \mid v
\in X\}$ is a second set of primed variables. In particular, we have $(S \cap
D)^b(X) = S^b(X) \wedge D^b(X) $, $(S \setminus D)^b(X)  = S^b(X) \wedge \neg
D^b(X)$, and $(R^{-1}(S))^b(X)  = \exists X'\  R^b(X,X') \wedge S^b(X)[X'/X]$,
where $S^b(X)[X'/X]$ denotes the result of substituting $x'$ for $x$ in $S^b(X)$ for
every variable $x \in X$. 

BDD-based symbolic model checkers for CTL, like NuSMV~\cite{Cimatti2000}, represent and manipulate
boolean functions as (ordered and reduced) binary decision diagrams
(BDDs)~\cite{Bryant1986,Bryant2018}. Table~\ref{table:bddops} shows the
BDD-based boolean function library interface of our library {\blictwo}, with primitive operations above the line and derived operations below it. More details on these operations are given in Appendix \ref{apx:bdds}.

\begin{table}[ht]
  \centering
  \caption{BDD operations of \blictwo.}
  \begin{tabular}{llll}
    \toprule
    Name                      & Boolean formula                                          & Complexity             & Implementation                                                            \\
    \midrule Binary Operation & $f \circledast g$                                        & $O(|f| \cdot |g|)$     & $\texttt{Apply($f, g, \circledast$)}$                                     \\
    Negation                  & $\neg f$                                                 & $O(|f|)$               & $\texttt{Apply($f, 1, \overline{\wedge}$)}$                               \\
    Restriction               & $f|_{x_i \leftarrow b \in \{0, 1\}}$                     & $O(|f|)$               & $\texttt{Restrict($f, v_i, b$)}$                                          \\
    Renaming                  & $[x_i/{x_i}']f$                                          & $O(|f|)$               & $\texttt{Rename($f, x_i, {v_i}'$)}$                                       \\
    Equivalence Check            & $f \equiv g$                                    & $O(1)$                 & Root comparison                                                           \\
    Counting Solutions        & $\left|\{\overline{v} \mid f(\overline{v}) = 1\}\right|$ & $O(|f|)$               & Counting paths to $1$                                                     \\
    Evaluation                & $f(x_0,\dots,x_n)$                                       & $O(n)$                 & Graph traversal                                                           \\  \hline
    Composition               & $f|_{x_i \leftarrow g}$                                  & $O(|f|^2 \cdot |g|^2)$ & $g \wedge f|_{v_i \leftarrow 1} \vee \neg g \wedge f|_{v_i \leftarrow 0}$ \\
    Exists                    & $\exists_{x_i}f$                                         & $O(|f|^2)$             & $f|_{x_i \leftarrow 0} \vee f|_{v_i \leftarrow 1}$                        \\
    Forall                    & $\forall_{x_i}f$                                         & $O(|f|^2)$             & $f|_{x_i \leftarrow 0} \wedge f|_{x_i \leftarrow 1}$                      \\
    \bottomrule
  \end{tabular}
  \label{table:bddops}
\end{table}


\noindent \textit{Interactive Proof Protocols.}
Consider a computational problem consisting of, given a function $f$, computing
$f(x)$ for an input $x$. An \emph{interactive protocol} for $f$ is a
communication protocol  between two algorithms, called a \emph{Prover} and a
\emph{Verifier}. The protocol is run after Prover claims to Verifier that
$f(x)=d$ holds for some value $d$. The protocol tells Verifier how to choose a sequence of
questions for Prover, depending on $x$ and on Prover's answers to previous
questions, and how to decide whether to believe Prover's claim or not. As a
simple example, assume Prover claims to Verifier that two graphs $G_1, G_2$ are
\emph{not} isomorphic. The interactive protocol instructs Verifier to repeatedly
pick $G \in \{G_1, G_2\}$ uniformly at random, pick a permutation $G_\sigma$ of
$G$, again uniformly and random, and ask Prover which of $G_1$ and $G_2$ is
isomorphic to $G_\sigma$. After a fixed number $k$ of rounds, 
Verifier believes Prover's claim if{}f all of Prover's answers are correct.
It is easy to see that for any Prover algorithm, the probability that Prover
fools Verifier (i.e., that $G_1$ and $G_2$ are isomorphic, but Verifier believes
Prover's claim) is at most $2^{-k}$.

The complexity class $\class{IP}$~\cite{Arora2009} contains all decision
problems for which there exists a probabilistic Verifier (i.e., an algorithm that
can flip coins) such that (a) Verifier runs in polynomial time; (b) for every
input $x$, Verifier believes the \emph{honest Prover} that claims the correct
value for $f(x)$ and answers questions truthfully; and (c) for every input $x$
and for every Prover claiming a wrong value for $f(x)$, Verifier believes Prover
with probability at most $2^{-|x|}$.

It is easy to see that $\text{SAT}$ is in $\class{IP}$, which implies
$\class{NP} \subseteq \class{IP}$: the 1-round protocol where Prover sends
Verifier an assignment, and Verifier checks in polynomial time that it satisfies
the formula, shows it.  It is harder to show $\text{UNSAT}  \in \class{IP}$:  the
1-round protocols in which Prover sends Verifier the complete truth table of the
formula or, say, a resolution proof do not work, because Verifier cannot check
them (not even read them!) in polynomial time \emph{in the size of the formula}
\footnote{Verifier can check them in polynomial time in the size of the truth
table or the resolution proof, but that is not what the class $\class{IP}$ specifies.}. The same happens
for certificates in any other proof system unless $\class{NP} =  \class{coNP}$.
Finally, Shamir's celebrated $\class{IP} = \class{PSPACE}$ theorem shows that
every problem in \class{PSPACE} has an interactive protocol with a
polynomial-time Verifier \cite{Shamir1992,Lund1992}. 


\section{Introduction to {\checker}} 
\label{sec:checker}
We explain the
algorithm underlying {\checker} by means of an example . Consider the following scenario. Let $I(X)$ and  $R(X, X')$ be
boolean functions representing the set of initial states and the transition
relation of a system, and let $p(X)$ be a function representing the set of
states satisfying an atomic proposition $p$.  We run {\checker}'s Solver to
decide if all initial states satisfy $\mathbf{EF} p$. For simplicity, we assume that Solver computes
the set of all predecessors of $p(X)$ (i.e., all states from which it is possible to reach some state satisfying $p$ in arbitrarily many steps) and then intersects it with $I(X)$, as
shown at the top of figure \ref{fig:inf} on the left. We abbreviate $I(X), R(X,X'), C(X) \ldots$ to  $I, R, C \ldots$, and let $\textbf{Pre}_C(X) :=\exists X' \colon (R(X, X') \wedge C(X) [X'/X])$
denote the set of immediate predecessors of a give set $C$ of states.

\begin{figure}[htb]
\begin{center}
{\footnotesize
\begin{minipage}[c]{0.6\textwidth}
\begin{minipage}[t]{0.45\textwidth}
  \begin{algorithmic}
    \STATE $C \leftarrow p$
    \REPEAT 
    \STATE $\textit{OldC} \leftarrow C$
    \STATE $C \leftarrow C \vee \textit{PreC}$
    \UNTIL $C \equiv OldC$
    \IF{$C \land I \equiv \textit{false}$} \RETURN $\textbf{false}$ 
    \ELSE \RETURN $\textbf{true}$
    \ENDIF
  \end{algorithmic}
\end{minipage}
\begin{minipage}[t]{0.45\textwidth}
\newcommand{\pho}{}
  \begin{algorithmic}
    \STATE \pho1: $C_0 \leftarrow p$
    \STATE \pho2: $\textit{PreC}_0 \leftarrow \textbf{Pre}_{C_0}$
    \STATE \pho3: $C_1 \leftarrow C_0 \vee \textit{PreC}_0$
    \STATE \pho4: \textcolor{magenta}{assert $C_1 \not\equiv C_0$}
    \STATE \pho5: $\textit{PreC}_1 \leftarrow \textbf{Pre}_{C_1}$
    \STATE \pho6: $C_2 \leftarrow C_1 \vee \textit{PreC}_1$
    \STATE \pho7: \textcolor{magenta}{assert $C_2 \equiv C_1$}
    \STATE \pho8: $C_3:= C_2 \land I$
    \STATE \pho9: \textcolor{magenta}{assert $C_3 \equiv \textit{false}$}
  \end{algorithmic}
\end{minipage}
\end{minipage}

\begin{minipage}[c]{0.7\textwidth}
  \vspace{0pt}
  \centering
  \includegraphics[width=\textwidth]{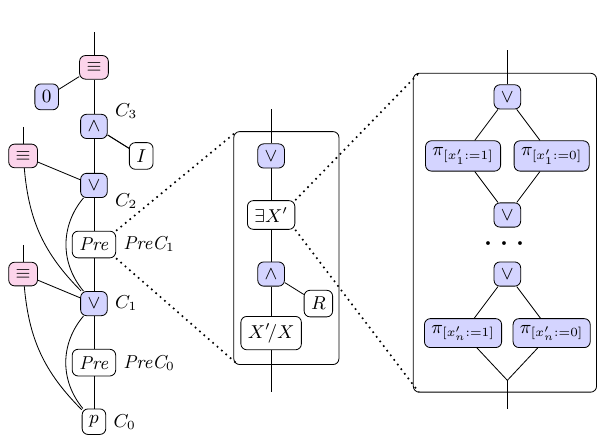}
\end{minipage}
}
\end{center}
\caption{At the top, symbolic checking algorithm for the formula $\textbf{EF} p$ (on the left),  and one of its
possible execution traces (on the right). At the bottom, circuit for this execution. Blue nodes correspond to basic gates, white nodes to ``macros''. }
\label{fig:inf}
\end{figure}

Assume further that Solver exits the repeat loop after two iterations and returns
\textit{true}. The corresponding execution trace is shown at the top of figure \ref{fig:inf}, on the right. 
The trace is transformed into a \emph{generalized boolean circuit}, a generalization of the 
boolean circuits with partial evaluation used in \cite{CAV23} defined below, and a set of \emph{equivalence claims}. In a nutshell, every instruction of the
trace is transformed into a gate of the circuit, and each assertion is translated into an equivalence claim. In particular, this reduces the problem
of certifying that the execution trace is correct is reduced to a generalization of the problem solved in \cite{CAV23},
for which we then give an interactive protocol. 

\medskip \noindent \textit{Generalized boolean circuits (GBCs).}  Let $X$ be a set
of boolean variables.  GBC-expressions over $V$ are given by the grammar
$$\varphi ::= \top \mid  \bot \mid x \mid \neg \varphi \mid \varphi
  \circledast \psi \mid \pi_{[x:=b]}\varphi  \mid \varphi[y/x]$$ \noindent
where $\circledast$ is any boolean operator, $x,y \in X$, $b \in \{0,1\}$, and $y$ does not appear in $\varphi$.
The GBC of a GBC-expression
$\varphi$ is the acyclic graph with one node for each subexpression of $\varphi$
and edges corresponding to the immediate subexpression relation. In analogy to boolean circuits, we call the nodes of a GBC \emph{gates}.
GBCs extend the circuits used in \cite{CAV23} with renaming gates $\varphi[x'/x]$.

The semantics of a GBC $\varphi$ is a boolean function $\fn(\varphi)$ over $X$,
defined by $\fn(\top):=1$, $\fn(\bot):=0$, ${\fn(x):= x}$, $\fn(\varphi
\circledast \psi) = \fn(\varphi) \circledast \fn(\psi)$, $\pi_{[x := b]}\varphi
= \fn(\varphi)|_{x \leftarrow b}$ (set $x$ to $b$ in $\fn(\varphi)$), and
$\fn(\varphi[y/x]) := \neg x'\wedge \fn(\varphi)|_{x \leftarrow 0} \vee x'
\wedge \fn(\varphi)|_{x_ \leftarrow 1}$ (substitute $y$ for $x$ in
$\fn(\varphi)$). We denote by $\Free(\varphi)$ the set of variable gates
reachable from $\varphi$'s root.

\smallskip\noindent \textit{From traces to GBCs.} We illustrate the translation from execution traces to GBCs and equivalence claims by example.
The GBC for the  execution trace at the top of figure \ref{fig:inf} is shown at the bottom of the figure. The GBC corresponds to the blue and white nodes, where the blue nodes model basic gates and the white nodes 
are ``macros'', standing themselves for a circuit. For example, the node on the left labeled by \textit{Pre} is a ``macro'' for the circuit shown in the middle (recall that $\textit{Pre}_C(X) :=\exists X' \colon (R(X, X') \wedge C(X)[X'/X]$), and the node in the middle labeled by $\exists X'$ is a macro for the circuit on the right; we use the equivalence $\exists x \colon \varphi \equiv \pi_{x:=1}(\varphi) \vee \pi_{x:=0} (\varphi)$, valid for every formula $\varphi$. The nodes labeled by $p, I, R$ are macros for circuits for the boolean formulas $p(X), I(X), R(X,X')$. Observe that the size of the circuit is linear in the length of the trace
and of the inputs $I$, $R$, and $p$.   The pink nodes are just a graphical representation of the equivalence claims corresponding to the assertions of the trace.


\section{{\prot}: An interactive proof protocol for GBCs}
\label{sec:prot}

We describe {\prot}, our improvement on {\CPCertify}. {\prot} adds support for
efficient variable renaming and equivalence assertions, which are common in
execution traces. It takes as input the circuit $\varphi$ obtained from the
execution trace of Solver. A first preprocessing phase, already present in
{\CPCertify}, transforms $\varphi$ into a GBC with additional so-called degree reduction nodes, denoted
$\conv(\varphi)$.
We define $\conv(\varphi)$ in Section \ref{subsec:preproc}. $\prot$ itself is
presented in Section \ref{subsec:prot}.

\subsection{Preprocessing: From $\varphi$ to $\conv(\varphi)$}
\label{subsec:preproc}

\noindent \textit{Arithmetization of GBCs.} Arithmetization is a core concept
for interactive proof systems and underlies the proof of $\class{IP} =
\class{PSPACE}$~\cite{Lund1992}. The arithmetization of a boolean function $f$
is a polynomial $\Eval{f}$ over a finite field $\F$ s.t.\ $f(\sigma) =
\Eval{f}(\sigma)$ for every \emph{boolean} assignment
$\sigma$~\cite[Prop.~1]{CAV23}. This is achieved by defining
$\Eval{0}:=0, \Eval{1}:=1, \Eval{\neg f} := 1 - \Eval{f}$, $\Eval{f \land g} :=
\Eval{f} \cdot \Eval{g}$ and $\Eval{f \lor g} := \Eval{f} + \Eval{g} - \Eval{f}
\cdot \Eval{g}$. The arithmetization of a GBC $\varphi$, denoted
$\Eval{\varphi}$, is defined as the arithmetization of its associated boolean
function. For example, the arithmetization of the GBC $(x_4 \vee \pi_{[x_3 :=
0]} x_3) \wedge \pi_{[x_3 := 1]}x_3$ is $(x_4 + 0 - x_4 \cdot 0) \cdot 1 = x_4$
(For more details see Appendix \ref{apx:preprocess}). Arithmetization allows to
design interactive protocols in which Verifier asks Prover to partially evaluate
polynomials of circuits on  \emph{non-boolean} assignments, which allows her to
detect cheating Provers with good probability. We define partial evaluation
$\pi_{[x:=a]} p$ of polynomials, which replaces $x$ by $a$ in p. The notation
$\Pev\sigma p$ stands for partially evaluating p on each assignment $\sigma(x_i)
= a_i$.

\smallskip \noindent \textit{Degree reductions and the circuit $\conv(\varphi)$.} It is easy to see
that the degree of the polynomial $\Eval{\varphi}$ may be exponential in the
height of $\varphi$\footnote{Observe that, for example, $\Eval{\psi_1 \wedge \psi_2}=\Eval{\psi_1} \cdot \Eval{\psi_2}$,
and so the degree of $\Eval{\psi_1 \wedge \psi_2}$ can be twice the degree of $\Eval{\psi_1}$ and $\Eval{\psi_2}$.}
The design of interactive proof systems requires to limit
the degree, because Verifier otherwise will not run in polynomial time. As in
\cite{CAV23}, for every variable $x_k$ we introduce a
\emph{degree reduction operator} $\delta_{x_k}$ that reduces the exponents of all
powers of $x_k$ to $1$, e.g., $\delta_{x_2}(x_1x_2^3 - 2 x_1x_2^2 + 4) = x_1 v_2 - 2
  x_1x_2 + 4 = -x_1x_2+4$; observe that degree-reducing a polynomial does not
change its value under boolean assignments, and so a polynomial and its degree
reduction encode the same boolean function.

Since we encode boolean functions as polynomials, we can interpret GBCs as
arithmetic circuits over the field $\F$. For example, $\land$-gates correspond
to multiplication. To limit the degree of $\Eval{\varphi}$, we replace each
boolean gate $\psi = \psi_1 \circledast \psi_2$ of $\varphi$ by a circuit
$\psi':=\delta_{x_1}(\delta_{x_2 }( \cdots  \delta_{n}(\psi) \cdots))$, where $\delta_{x_k}$ is a
\emph{degree-reduction gate} for the variable $x_k$, with semantics
$\Eval{\delta_{x_k}(\varphi)} := \delta_{x_k}(\Eval{\varphi})$. We denote the final
result by $\conv(\varphi)$
It is easy to see that for every circuit $\varphi$ the polynomial
$\Eval{\conv(\varphi)}$ is unique and multilinear \cite[Prop. 3]{CAV23}; loosely speaking, the degree-reduction
nodes reduce all exponents of $\Eval{\varphi}$ to $1$.

\subsection{$\prot$: High-level view.}
\label{subsec:prot}

We introduce the protocol $\prot$. In this section we  give a high level view, and in the next
describe the elements of the protocol in more detail. Pseudocode for the
different parts of the protocol and correctness proofs can be found in
Appendix~\ref{apx:prot}, with the top-level procedure being described in
Appendix~\ref{subsec:ebcertifytoplevelapx}.

%

\smallskip\noindent \textit{Initialisation and goal.} Given a GBC $\conv(\varphi)$ over a
set $X$ of $n$ variables, obtained from an execution trace, Prover and Verifier
initiate $\prot$ by choosing a prime number $p \geq 2^{|X|}$. From now on, all
polynomials they exchange are over the finite field  $\F$. Prover starts by
making a set of \emph{ claims} about 
$\conv(\varphi)$ (Sec.~\ref{subsc:claimsgenerationapx} of Appendix \ref{apx:preprocess}). 
For GBCs coming from execution traces, these are equivalence claims corresponding
to the assertions of the trace; in other words, Prover is claiming that the execution
trace indeed satisfies the assertions. For example, the claims for the execution
trace of the program on the left of figure \ref{fig:inf} state that the while
loop was exited after two iterations (which implicitly means that $C_2$ is the
set of all predecessors of $p$), and that the set of states $(\neg C_2 \land I)(X)$ is empty.

Prover's initial claims are the initial content of a set $\Claimset$ of
claims that gets repeatedly updated throughout {\prot}. 
Some updates are deterministic, while others are probabilistic, i.e., the claims replacing a given one are sampled from a certain set. Both kinds of
updates maintain \emph{claim equivalence with high probability}, or CEHP,
defined as follows:  if all claims of $\Claimset$  are true
before an update, then all claims after the update are true, and if some claim of
$\Claimset$ is false before an update,  then, with probability $1-(k/|\F|)$ for
$k \ll |\F|$, some claim of $\Claimset$ is false after the update. At the end of $\prot$,
$\Claimset$ contains final claims that Verifier can check in polynomial time
(which was not true of the initial claims).  Verifier checks the final claims
and, if all are true, declares to believe Prover's initial claims.

\smallskip\noindent \textit{Round for a gate.} $\prot$ iterates over the gates of
$\conv(\varphi)$ in topological order, starting at the output gates and handling
each non-input gate at most once. At each non-input gate, Prover and Verifier
engage in a \emph{round}. Verifier starts the round for a gate $\psi$ by
collecting all claims of $\Claimset$ about $\psi$, and replacing them by a
single \emph{principal claim}. Then, Verifier (using the principal claim) sends
Prover a set of \emph{challenges}: questions about the immediate successors of
$\psi$ according to the topological order. Prover answers the challenges with
polynomials for each of $\psi$'s successors. Verifier conducts some consistency
checks on the answers. If the polynomials do not pass the checks, Verifier
declares it does not believe Prover and {\prot} terminates; otherwise, Verifier
replaces the principal claim (about $\psi$) by the new claims (about $\psi$'s
successors) derived using Prover's answers.

\smallskip\noindent \textit{Decision.} After the rounds, Prover's initial equivalence claims---which Verifier cannot directly check because their polynomials have worst-case exponential size---have been replaced by claims
about atomic circuit expressions of the form $\top, \bot, x_k$, whose polynomials are $0$, $1$, or $x_k$. Verifier checks these claims herself in linear time, and believes Prover's initial claims if{}f they are all true.

The next sections provide details. The syntax and semantics of Prover's claims
and Verifier's challenges, as well as a  \emph{claim normalization} step,
are described in Section \ref{ssubsec:clch}. The round for a gate is described in Section
\ref{subsubsec:round}.

\subsection{Detailed view of $\prot$: Claims, challenges, and claim normalization.}
\label{ssubsec:clch}

During the execution of $\prot$ Prover can make claims of the following types, where $\psi, \psi_1, \psi_2$ denote gates of $\conv(\varphi)$:
\begin{itemize}
  \item \emph{$\F$-Evaluation}: An $\F$-evaluation claim $\Pev{\sigma}(\Eval{\psi}) =
          k$, where $\sigma \colon X \rightarrow \F$ and $k \in \F$,
        states that evaluating the polynomial $\Eval{\psi}$ on $\sigma$ yields $k$.
  \item \emph{$\B$-Evaluation}: An evaluation claim $\Pev\sigma \fn(\psi) = b$, where
        $\sigma \colon X \rightarrow \{0, 1\}$ and $b \in \{0, 1\}$, states that evaluating the boolean function  $\fn(\psi)$ on $\sigma$ yields $b$.
  \item \emph{Count}: A count claim $\Sigma \fn(\psi) = k$ states that
        the boolean function $\fn(\psi)$ has exactly $k$ satisfying assignments.
  \item \emph{$\B$-Equivalence}: A $\B$-equivalence (or just equivalence) claim $\psi_1 \equiv \psi_2$ or $\psi_1 \not\equiv \psi_2$ states that $\fn(\psi_1)=\fn(\psi_2)$,  or $\fn(\psi_1) \neq \fn(\psi_2)$, respectively.
\end{itemize}
Verifier can pose challenges to Prover of these two types:
\begin{itemize}
  \item A \emph{partial evaluation challenge} \texttt{Challenge($\Pev\sigma
        \psi$)}  asks Prover to provide the result of evaluating the
        arithmetization $\Eval{\psi}$ on an assignment $\sigma \colon X
        \rightarrow \F$ that leaves at most one variable unassigned. If all
        variables are assigned, Prover answers with a field element. If one
        variable is unassigned, then Prover answers with a polynomial in one
        variable of degree at most two.
  \item A \emph{distinct asignment challenge} \texttt{ChallengeDistinct($\psi_1,
        \psi_2$)} asks Prover to provide an assignment to all variables such
        that $\Pev\sigma \Eval{\psi_1} \neq \Pev\sigma \Eval{\psi_2}$.
\end{itemize}

\paragraph{Claim normalization.} Once $\Claimset$ is initialized with Prover's
initial claims, Prover and Verifier execute \texttt{Normalize}($\Claimset$), a
procedure that replaces all claims of $\Claimset$, of all four types, by
$\F$-evaluation claims, while respecting CEHP. In particular, normalization
deals with $\B$-equivalence claims---which are very numerous in
circuits derived from model-checking executions---more efficiently than
\cite{CAV23}.  Here we only sketch the normalization procedure. 
A detailed description, including pseudocode and the proof that
CEHP is respected, can be found in Appendix \ref{subsc:normalizeapx}. 

A $\B$-evaluation claim $\Pev\sigma \fn(\varphi) = b$ is replaced by $\Pev\sigma
  \Eval{\varphi} = b$. A count claim $\Sigma \fn(\varphi) = k$ is replaced by
$\Pev\sigma \Eval{\varphi} = k \cdot 2^{-(\Abs{\Free(\varphi)})}$, where $\forall
  {x_i \in \Free(\varphi)}.~ \sigma(x_i) = 2^{-1}$ (the proof that this
replacement preserves CEHP is non-trivial;  it uses Lemma 2 of
\cite{CAV23}). For a $\B$-equality claim about $(\psi_1 \equiv \psi_2) =
  1$, Verifier samples an assignment $\sigma \colon \F \to X$ uniformly at random ($\R$),
sends the challenges $k_1 = \texttt{Challenge($\Pev\sigma \psi_1$)}$ and $k_2 =
  \texttt{Challenge($\Pev\sigma \psi_2$)}$ to Prover, and adds the two claims
returned by Prover to $\Claimset$ after checking $k_1 = k_2$. For a
$\B$-equivalence claim $(\psi_1 \equiv \psi_2) = 0$, Verifier sends the challenge
$\sigma = \texttt{ChallengeDistinct($\psi_1, \psi_2$)}$ to Prover, and adds the
claims $\Pev\sigma \Eval{\psi_1} = k_1, \Pev\sigma \Eval{\psi_2} = k_2$ returned
by Prover ($\sigma$ is now chosen by Prover) to $\Claimset$ after checking $k_1
  \neq k_2$. (Loosely speaking, CEHP is preserved by the DeMillo-Lipton-Schwartz-Zippel lemma: since $\Eval{\psi_1}$
and $\Eval{\psi_2}$ have total degree at most $n$, the polynomial $\Eval{\psi_1}-
  \Eval{\psi_2}$ has at most $n$ zeroes; so the polynomials differ almost
everywhere, and the probability that they differ for an assignment picked
uniformly at random is at least $1-n/|\F|$.)

\subsection{Detailed view of $\prot$: Rounds}
\label{subsubsec:round}
We describe the round of $\prot$ for a gate $\psi$. It consists of two parts: a
procedure \texttt{Merge}, that reduces all claims about $\psi$ to a unique
\emph{principal claim} while preserving CEHP, followed by a procedure
\texttt{Propagate} that replaces the principal claim about $\psi$ by one claim
for each successor of $\psi$\footnote{So, if a gate $\psi'$ has multiple
  predecessors $\psi_1, \ldots, \psi_k$, $\Claimset$ may contain up to $k$ claims
  about $\psi$.}. Again, pseudocode and  proofs can be found in Appendix
\ref{subsc:roundapx}.

\texttt{Merge} is taken from {\CPCertify} and was already described in
\cite{CAV23}; since its role is more technical, it is only presented in.
Here it suffices to note that \procname{Merge} preserves CEHP with probability
at least $1 - 2n/|\F|$. \texttt{Propagate} is the core of \prot. It consists of
four procedures, depending on whether the gate is labeled with a boolean
operation, a degree reduction, a projection, or a renaming.  We describe the
four cases, assuming that the principal claim is $\Pev\sigma\Eval{\psi} = k$ for
some $\sigma \colon X \to \F$ and some $k \in \F$. The first three are as in
\cite{CAV23}, and the last one is novel.

\medskip\noindent {\texttt{Propagate} [Binary Operation $\psi = \psi_1
      \circledast \psi_2$]:} Verifier sends Prover the challenges
\texttt{Challenge($\Pev\sigma \Eval{\psi_1}$)} and \texttt{Challenge($\Pev\sigma
    \Eval{\psi_2}$)}. Prover answers with claims $\Pev\sigma \Eval{\psi_1}=k_1$ and
$\Pev\sigma \Eval{\psi_1}=k_2$. Verifier checks the consistency condition,  $k_1
  \widehat{\circledast} k_2 = k$, where $\widehat{\circledast}$ is the operation
on polynomials satisfying $\Eval{p_1 \circledast p_2} = p_1
  \widehat{\circledast} p_2$; for example, $p_1 \widehat{\land} p_2 = p_1 \cdot
  p_2$ and $p_1 \widehat{\lor} p_2 = p_1 + p_2 - p_1p_2$. If the condition does
not hold, Verifier rejects Prover's initial claims, and otherwise replaces
$\Pev\sigma\Eval{\psi} = k$ by $\{ \Pev\sigma \Eval{\psi_1}=k_1, \Pev\sigma
  \Eval{\psi_1}=k_2\}$. This update trivially satisfies CEHP with probability 1.

\medskip\noindent {\texttt{Propagate} [Degree Reduction $\psi = \delta_{x_i} \psi'$]:}
Observe that, by the definition of degree reduction, $\Eval{\psi} = x_i \cdot
  \Eval{\psi'}|_{x_i:=1} + (1 -x_i) \Eval{\psi'}|_{x_i:=0}$. Verifier sends Prover
the challenge \texttt{Challenge($\Pev{\sigma'} \Eval{\psi'}$)}, where $\sigma'
  \colon X \setminus \{x_i\} \to \F$ is the assignment satisfying $\sigma'(y) =
  \sigma(y)$ for every $y \neq x_i$. Prover answers with a claim $\Pev\sigma
  \Eval{\psi'}=p(x_i)$, where $p(x_i)$  is a univariate polynomial of degree two.
Verifier checks the consistency constraint $\sigma(x_i) \cdot p(1) + (1-\sigma(x_i))
  \cdot p(0) = k$. If the condition does not hold, Verifier rejects. Otherwise,
Verifier picks $r \in \F$ u.a.r. and replaces  $\Pev\sigma\Eval{\psi} = k$ by
$\Pev{\sigma[x_i:= r]} \Eval{\psi'}=p(r)$. This update is shown to preserve CEHP
with probability at least $1 - 2/|\F|$ in \cite{CAV23}. Intuitively, if
$\Pev\sigma\Eval{\psi} \neq k$ then $\Eval{\psi'}$ and $p$ are different
polynomials of grade two, and so they differ almost everywhere (in at least $\F
  - 2$ points). Therefore, with probability at least $1- 2 / |F|$ we have
$\Pev{\sigma[x_i:= r]} \Eval{\psi'} \neq p(r)$.

\medskip\noindent {\texttt{Propagate} [Projection $\psi = \pi_{[i := b]} \psi'$]:}
Verifier replaces the claim $\Pev\sigma\Eval{\psi} = k$ by $\{\Pev{\sigma'}
  \Eval{\psi'} = k\}$, where $\sigma' := \sigma[x_i \rightarrow \Eval{b}]$. This update
trivially preserves CEHP.

\medskip\noindent{\texttt{Propagate} [Renaming $\psi=\psi'[x_l/x_t]]$:} Verifier
replaces the claim $\Pev\sigma\Eval{\psi} = k$ by $\Pev{\sigma'} \Eval{\psi'} =
  k$, where $\sigma' := \sigma[x_l :=\sigma(x_t)]$. This update preserves CEHP
only because $x_l$ does not occur in $\psi'$. (See Appendix
\ref{subsc:roundapx} for the proof.)
\subsection{Correctness}
%
The soundness and completeness of \prot\ follow from the CEHP preservation
properties of each of the subprocedures it uses:

\begin{lemma} If $\mathcal{C}$
  contains a false claim about an \gbcshort\ $\conv(\varphi)$ with $n$ variables,
  then Verifier accepts with probability at most $\left({4n |\varphi| +
      n}\right)/{\F}$ for any Prover. If all claims in $\mathcal{C}$
  are true, Verifier accepts with probability $1$ for the honest prover
  (completeness).
  \label{lemma:ebcertifycorrectness}
\end{lemma}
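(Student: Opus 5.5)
Completeness and soundness will be handled separately, both by induction over the sequence of updates that \prot{} performs on $\Claimset$, using the CEHP guarantees stated for each subprocedure.

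\emph{Completeness.} For completeness we observe that every update, namely \texttt{Normalize}, \texttt{Merge} and the four cases of \texttt{Propagate}, maps a set of true claims to a set of true claims when Prover answers honestly. The honest Prover's answers pass every consistency check by construction. For example, in the degree-reduction case the honest polynomial $p$ is exactly $\Pev{\sigma'}\Eval{\psi'}$, so $\sigma(x_i)p(1)+(1-\sigma(x_i))p(0)=k$ holds. In the renaming case we use that $x_l$ does not occur in $\psi'$. By induction the final claims about $\top,\bot,x_k$ are true, and Verifier accepts with probability $1$.

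\emph{Soundness.} Fix any Prover and suppose $\Claimset$ contains a false claim. Let $E_j$ be the event that the $j$-th update turns a set containing a false claim into one with only true claims, or that Verifier rejects. Since the final claims are checked exactly, Verifier can accept only if some update ``loses'' the false claim. By a union bound, the acceptance probability is at most $\sum_j \Pr[\text{update } j \text{ loses falsity}]$, with each term bounded by the CEHP error of that update. We then count:
\begin{itemize}
\item Normalization of equivalence claims costs at most $n/|\F|$ in total. The equality case uses Schwartz--Zippel on the multilinear polynomials $\Eval{\psi_1}-\Eval{\psi_2}$ of total degree at most $n$. The inequality and count cases are deterministic; the count case uses Lemma~2 of \cite{CAV23}.
\item Each gate receives at most one \texttt{Merge}, costing $2n/|\F|$.
\item Degree-reduction rounds cost $2/|\F|$ each. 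A boolean gate of $\varphi$ becomes at most $n$ degree-reduction gates in $\conv(\varphi)$, so these contribute at most $2n|\varphi|/|\F|$.
\item Binary, projection and renaming propagations cost $0$.
\end{itemize}
Summing over the at most $|\varphi|$ original gates (each gate of $\conv(\varphi)$ having one merge, but merges of degree-reduction chains being accounted per original gate), the total is at most $(2n|\varphi|+2n|\varphi|+n)/|\F|=(4n|\varphi|+n)/|\F|$.

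\emph{Main obstacle.} The delicate step is the accounting. Merges happen at every gate of $\conv(\varphi)$, of which there are up to $n|\varphi|$, and $2n$ per merge would give a bound of order $n^2|\varphi|$. The plan is to argue, as in \cite{CAV23}, that a degree-reduction gate has a single predecessor in $\conv(\varphi)$. Its claim set is therefore a singleton and needs no random merge, so nontrivial merges occur only at the $|\varphi|$ original gates. The second subtle point is the renaming case, where CEHP holds with probability $1$. Here we must show that $\Pev{\sigma'}\Eval{\psi'}=\Pev{\sigma}\Eval{\psi'[x_l/x_t]}$, which requires checking that renaming commutes with arithmetization on multilinear polynomials when $x_l\notin\Free(\psi')$.
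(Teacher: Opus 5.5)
Your proposal is correct and follows essentially the same argument as the paper. It is a union bound that charges $n/|\F|$ to \texttt{Normalize}, $2n/|\F|$ to each of the at most $|\varphi|$ nontrivial merges, and $2/|\F|$ to each of the at most $n|\varphi|$ degree-reduction propagations, with binary, projection and renaming rounds free. The merge count uses, as the paper does, that intermediate degree-reduction gates have a single predecessor and carry no initial claims; the paper simply packages the merge and propagation counts as a separate lemma for \texttt{CertifyAssignments} before adding the normalization error.
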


\begin{corollary}
  Any trace of \checker\ containing $N$ operations and using $n$ distinct
  boolean variables is rejected by Verifier if at least one assertion in the trace is wrong
  with probability at least $1 - \left({4n N + n}\right)/{\F}$, and
  otherwise accepted with probability $1$.
\end{corollary}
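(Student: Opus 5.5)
The corollary follows by composing the trace-to-circuit reduction of Section~\ref{sec:checker} with Lemma~\ref{lemma:ebcertifycorrectness}. First, translate the trace of $N$ operations into a GBC $\varphi$ together with a set of equivalence claims, one per assertion. Each trace instruction becomes one gate, or a ``macro'' of gates. We then need $|\varphi|$ to be bounded by $N$; more precisely, the quantity playing the role of $|\varphi|$ in the lemma, the number of gates of the original circuit before $\conv$, must be at most $N$. The reason this is plausible is that the factor $4n$ in the lemma already accounts for the $n$ degree-reduction gates that $\conv$ inserts after each boolean gate. So the per-gate error of at most $4n/|\F|$ arises from the $\delta$-rounds contributing $2/|\F|$ each, together with the \texttt{Merge} contributing $2n/|\F|$.

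Second, relate assertions to claims. By construction the execution trace is correct exactly when every initial equivalence claim in $\Claimset$ is true, because each assertion $C_i \equiv C_j$ or $C_i \not\equiv C_j$ is literally the $\B$-equivalence claim $\fn(\psi_i) = \fn(\psi_j)$ or its negation, where $\psi_i$ is the gate computing $C_i$. This uses that the GBC semantics $\fn$ agrees with the BDD library's operations in Table~\ref{table:bddops}, including the expansions of $\exists$ into projections and $\vee$, and of renaming into renaming gates. For soundness: if some assertion is wrong, $\Claimset$ initially contains a false claim, and the lemma bounds the acceptance probability by $(4n|\varphi|+n)/|\F| \le (4nN+n)/|\F|$. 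Rejection therefore happens with the complementary probability. The additive $n$ term is the normalization cost for equivalence claims via Schwartz--Zippel, which is already inside the lemma. For completeness: if all assertions hold, all initial claims are true, and the honest Prover is accepted with probability $1$ by the completeness half of the lemma.

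The main obstacle is the size accounting, namely justifying $|\varphi| \le N$. The macros for $\textbf{Pre}$, for $\exists X'$ (which expands into $2|X'|$-ish projection and disjunction gates), and for the input circuits $I$, $R$, $p$ are not single gates. My plan is to count $N$ as the number of primitive BDD-library calls in the trace, with derived operations such as $\exists$ and composition expanded into their primitives per Table~\ref{table:bddops}. I would also treat the inputs $I$, $R$, $p$ as part of the trace, as the circuits that build them. Under that convention each primitive call corresponds to exactly one gate, which gives $|\varphi| \le N$. If instead $N$ counts high-level operations, the bound would have to carry a multiplicative factor of $O(n)$, and I would note that the statement implicitly uses the primitive-call convention.
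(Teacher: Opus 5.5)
Your proposal is correct and follows the same route as the paper. The paper's proof is only the observation that a trace of length $N$ yields a GBC with $|\varphi| \le N$, after which Lemma~\ref{lemma:ebcertifycorrectness} gives both the soundness bound $(4n|\varphi|+n)/|\F| \le (4nN+n)/|\F|$ and completeness. Your size accounting makes explicit a convention that this one-line proof leaves implicit: you count primitive library calls, and unfold both derived operations such as $\exists$ and the construction of $I$, $R$, $p$ into them. This matters because Section~\ref{sec:checker} itself only claims the circuit is linear in the length of the trace and the inputs.
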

\begin{proof}
Traces of length $N$ result in GBCs of size $\leq N$.
\end{proof}

\section{BDD-Based Implementation of Prover}
\label{sc:prover}
\begin{figure}[t]
  \begin{center}
    {\footnotesize
      \begin{minipage}[t]{0.4\textwidth}
      \begin{algorithm}[H]
       \caption{{\Apply}($v_1, v_2, \circledast$)}
        \begin{algorithmic}
          \IF{$v_1 \in \{0, 1\}$ \OR $v_2 \in \{0, 1\}$}
          \RETURN $v_1 \circledast v_2$
          \ENDIF
          \STATE $\langle f_i,f_l,f_r \rangle = v_1$
          \STATE $\langle g_i,g_l,g_r \rangle = v_2$
          \IF{$f_i < g_i$}
          \STATE $f_l \leftarrow v_1$
          \STATE $f_r \leftarrow v_1$
          \ELSIF{$g_i < f_i$}
          \STATE $g_l \leftarrow v_1$
          \STATE $g_r \leftarrow v_1$
          \ENDIF
          \STATE $l \leftarrow {\Apply}(f_l, g_l, \circledast)$
          \STATE $r \leftarrow {\Apply}(f_r, g_r, \circledast)$
          \RETURN {\Reduce}$(\langle v_i, l, r \rangle)$
        \end{algorithmic}
        \end{algorithm}
      \end{minipage}
      \qquad
      \begin{minipage}[t]{0.50\textwidth}
        \newcommand{\pho}{}
        \begin{algorithm}[H]
       \caption{{\ApplyEBDD}($v_1, v_2, \circledast$)}
        \begin{algorithmic}
          \IF {$v_1 \in \{0, 1\}$ \OR $v_2 \in \{0, 1\}$}
          \RETURN $v_1 \circledast v_2$
          \ENDIF
          \STATE $\langle f_i, f_l, f_r \rangle = v_1$
          \STATE $\langle g_i, g_l, g_r \rangle = v_2$
          \IF {$f_i < g_i$}
          \STATE $f_l \leftarrow v_1$
          \STATE $f_r \leftarrow v_1$
          \ELSIF {$v_w < v_u$}
          \STATE $g_l \leftarrow v_2$
          \STATE $g_r \leftarrow v_2$
          \ENDIF
          \STATE $l \leftarrow$ {\ApplyEBDD}($f_l, g_l, \circledast$)
          \STATE $r \leftarrow$ {\ApplyEBDD}($f_r, g_r, \circledast$)
          \STATE $\mathit{final} \leftarrow$ \texttt{Reduce}($\langle v_i, \link(\link(l)), \link(\link(r)) \rangle$)
          \STATE $\mathit{node} \leftarrow \langle v_i, l, r \rangle$; $\link(\mathit{node}) \leftarrow \mathit{final}$
          \STATE $b \leftarrow \langle v_1 \circledast v_2 \rangle$; $\link(b) \leftarrow \mathit{node}$
          \RETURN $b$
        \end{algorithmic}
        \end{algorithm}
      \end{minipage}
    }
  \end{center}
  \caption{Comparison of {\Apply}($v_1, v_2, \circledast$) and our new
  algorithm {\ApplyEBDD}($v_1, v_2, \circledast$).}
  \label{fig:applyandnew}
\end{figure}

$\prot$ describes \emph{what} Prover has to do (compute certain polynomials and
evaluate them), but not \emph{how} to do it. We give an implementation of Prover
that improves on the one presented in \cite{CAV23}. In
Section~\ref{subsec:CAV23Prover}, we briefly recall the implementation of
\cite{CAV23}. In Section~\ref{subsec:shortcomings} we explain its shortcomings
and sketch our novel implementation.

\subsection{Implementation of Prover by Couillard \etal \cite{CAV23}}
\label{subsec:CAV23Prover}
At first sight, the computations performed by Solver and Prover on an instance
of the model-checking problem seem to be unrelated: Solver computes BDDs for the
(boolean functions of) the gates of a circuit $\varphi$, while Prover computes
polynomials for the gates of $\conv(\varphi)$. More precisely, consider a gate
$\psi = \psi_1 \circledast \psi_2$ of a circuit $\varphi$.  After
degree-reductions with the gate $\psi' := \delta_{x_n} \delta_{x_{n-1}} \cdots
\delta_{x_1} (\psi_1  \circledast \psi_2)$ of $\conv(\varphi)$ corresponding to
$\varphi$, then:
\begin{itemize}
  \item Solver's task is to compute a BDD-node for $\fn(\psi)$ from BDD-nodes
  for $\fn(\psi_1)$ and $\fn(\psi_2)$.  For this, Solver uses a well-known
  recursive algorithm {\Apply}($v_1, v_2, \circledast$), where $v_1$ and
  $v_2$ are the unique BDD nodes representing $\fn(\psi_1) , \fn(\psi_2)$.
  \texttt{Apply} is shown on the left of Figure \ref{fig:applyandnew}.
  \item Prover's task  is to compute polynomials for each of the gates
        $g_0:=\psi_1 \circledast \psi_2$, $g_1:= \delta_{x_{n-1}} g_0$, $g_2:=
        \delta_{x_{n-2}} g_1$, \ldots, $g_n=\delta_{x_1} g_{n}$, where $x_1,
        \ldots x_n$ are the free variables of $\psi_1 \circledast \psi_2$, and evaluate
        them at assignments chosen by Verifier.
\end{itemize}
Couillard \etal show in \cite{CAV23} that, \emph{if polynomials are encoded
using an appropriate data structure}, then Prover does not have to compute them
because, surprisingly,  they are already computed by Solver. More precisely,
Couillard \etal transform \texttt{Apply($v_1, v_2, \circledast$)} into
\texttt{ComputeEBDD($v_1, v_2, \circledast$)}, another algorithm which, despite
having the same runtime as {\Apply}($v_1, v_2, \circledast$), computes not
only the BDD-node for $\fn(\psi)$ but also encodings for all the polynomials in
the data structure. The Solver of \cite{CAV23} just runs
\texttt{ComputeEBDD($v_1, v_2, \circledast$)} instead of \texttt{Apply($v_1,
v_2, \circledast$)}. Appendix~\ref{apx:newname} describes the algorithm in
detail.
 
The data structure is called \emph{extended BDDs} (eBDDs). Formally, an eBDD
node $e$ is either $0$, $1$, a node $\langle x, e_0, e_1 \rangle$, where $x \in
X$ and $e_0, e_1$ are eBDD nodes, called the 0-child and 1-child of $e$,
or---and this is the extension---a \emph{binary operation node} $\langle v_1
\circledast v_2 \rangle$, where $\circledast$ is a binary boolean operator and
$v_1, v_2$ are BDD nodes. The semantics of an eBDD, say $e$, is the polynomial
$\Eval{e}$ defined by

\smallskip\begin{centerline}{\small $\Eval{0}:=0$ \hspace{0.1cm} $\Eval{1}:=1$ \hspace{0.1cm}
$\Eval{\langle x, e_0, e_1 \rangle} := x \cdot \Eval{e_1} + (1-x) \cdot
\Eval{e_0}$ \hspace{0.1cm}  $\Eval{\langle v_1 \circledast v_2 \rangle} := \Eval{v_1}
\widehat{\circledast} \Eval{v_2}$}
\end{centerline}

\smallskip \noindent  Figure \ref{fig:ebddcomp}
  shows eBDDs encoding the polynomials $\Eval{\psi_1 \vee \psi_2}$,
  $\delta_{x_1} \Eval{\psi_1 \vee \psi_2}$, and $\delta_{x_2} \delta_{x_1}
  \Eval{\psi_1 \vee \psi_2}$ for  $\psi_1 = x_1$ and $\psi_2 =  x_1 \wedge x_2$. Binary
  operation nodes are shaded blue.
  
\smallskip \noindent Observe that evaluating a polynomial on an assignment takes
linear time in the size of the eBDD encoding it. For example, in order to
compute $\Eval{\langle x, e_0, e_1 \rangle}(\sigma)$ for an assignment $\sigma$
we just use $\Eval{\langle x, e_0, e_1 \rangle}(\sigma) = \sigma(x) \cdot
\Eval{e_1}(\sigma) + (1 -\sigma(x)) \cdot \Eval{e_0}(\sigma)$.

\begin{figure}[t]
  \centering
  \includegraphics[width=\textwidth]{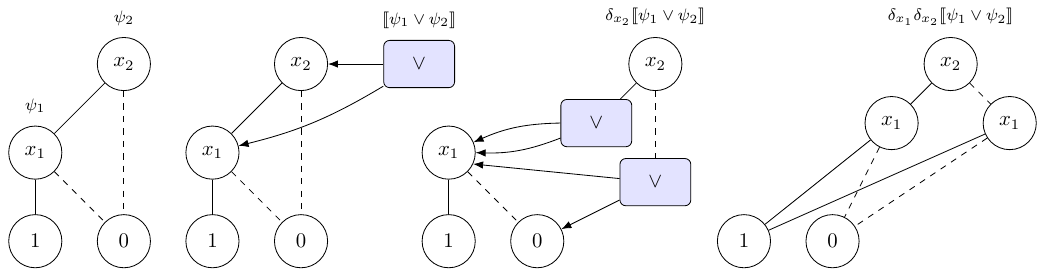} 
  \caption{eBDDs for $\Eval{\psi_1 \vee \psi_2}$, $\delta_{x_1} \Eval{\psi_1 \vee \psi_2}$, and
  $\delta_{x_2} \delta_{x_1} \Eval{\psi_1 \vee \psi_2}$, where $\psi_1 = x_1$ and $\psi_2 =  x_1 \wedge x_2$, and BDD obtained after simplifying the latter.}
  \label{fig:ebddcomp}
\end{figure}


\subsection{Improving {\ComputeEBDD} and {\blic}}
\label{subsec:shortcomings}
While {\ComputeEBDD}($v_1, v_2, \circledast$) has the same runtime as
{\Apply}($v_1, v_2, \circledast$), it has two strong shortcomings in practice:
\begin{itemize}
  \item {\Apply}($v_1, v_2, \circledast$) is a recursive algorithm with
  memoization. Like all recursive algorithms, it produces a tree of recursive
  calls that are processed in depth-first manner using the recursion stack.
  Loosely speaking, {\ComputeEBDD} explores the same tree, but in
  breadth-first manner, which prevents the re-use of the recursion stack across
  multiple invocations.
  \item BDD-libraries are implemented on top with a common optimization: A
  \emph{global computation cache} containing BDD-nodes for \emph{all} of the boolean
  (sub-\nolinebreak)functions computed so far.  If two functions use the
  same sub-function as part of their logic, they will share the computation of its BDD.
  However, {\ComputeEBDD} is incompatible with a global cache. In order to
  match the runtime of {\Apply}, it uses a mutable data structure that
  stores eBDDs for the polynomials in-place, instead of creating new nodes. The
  mutations are written to an \emph{undo-log}, which can be applied by Prover to
  access overwritten eBDDs. While a global cache was not necessary for \blic,
  using one tends to highly benefit model checking, with \cite{Yang1998} finding
  a reduction of repeated sub-computations of at least an order of magnitude.
\end{itemize}

\begin{figure}[ht]
  \centering
  \includegraphics[width=0.7\textwidth]{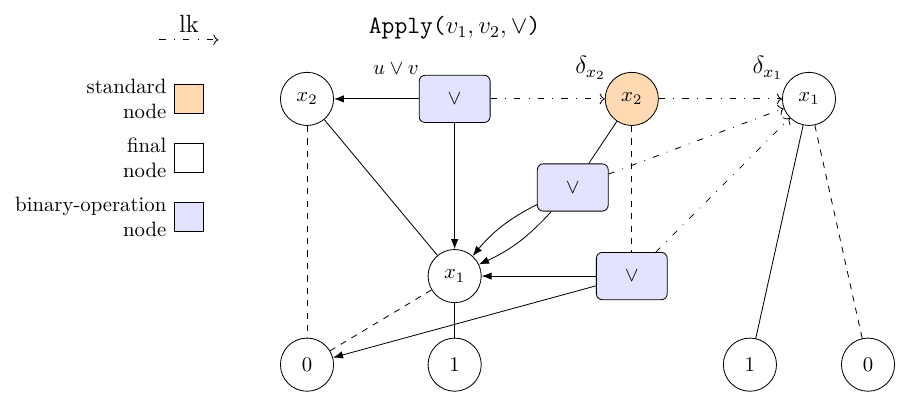}
  \caption{New computation of immutable eBDDs using links for versioning.}
  \label{fig:ebddcomputenew}
\end{figure}

We present a new algorithm {\ApplyEBDD},  shown on the right of Figure
\ref{fig:applyandnew}, that solves these two problems. {\ApplyEBDD} runs on
top of a global cache. On top of the links to its children, every eBDD node $w$
has an additional  \emph{link}, accessible via $\link(w)$. Intuitively
$\link(w)$ is used to access multiple versions of the \emph{same} eBDD node as a
linked-list. This data structure allows for an arbitrary number of versions for
each node, but in {\ApplyEBDD} it is always limited to at most three: one
for the initial, binary operation node; a second version for an eBDD node that
has some binary operation node as a descendant (\emph{standard} eBDD node), and
a third fully reduced \emph{final} BDD node. Figure~\ref{fig:ebddcomputenew}
shows the resulting immutable eBDDs that encode each degree reduction
({\ComputeEBDD} would produce the eBDDs shown in
Fig.~\ref{fig:ebddcompute}). Note that any duplicate final nodes are only
pictured for easier illustration and refer to the same object in the unique
table. Appendix~\ref{apx:newname} contains more
details on {\ApplyEBDD}.

The only difference between  {\ApplyEBDD}  and {\Apply} is that after
computing the two recursive results $l$ and $r$, it creates \emph{three} new
eBDD nodes instead of one, and connects them using the link field. Prover
computes the eBDD for any polynomial of the form $\delta_{x_k} \dotsb
\delta_{x_1} \Eval{u} \circledast \Eval{v}$ for all $1 \leq k \leq n$ by
evaluating BDD-nodes $d$ with $\var(d) > x_k$ as $\link(d)$, and otherwise
ignoring links. The unique final BDD is available by evaluating the root as
$\link(\link(r))$.

Since {\ApplyEBDD} follows the depth-first recursive structure of
{\Apply}, and does not mutate any existing nodes, the values it returns can easily
be tabled according to their recursion parameters. The table can
either be global, or cleared at any point throughout program execution.
We prove in Appendix \ref{apx:proposition1}:

\begin{proposition}\label{prop:computebdd} Let $\psi_1,\psi_2$ denote nodes of
  $\conv(\varphi)$ and $u_1,u_2$ BDDs with $\Eval{u_i}=\Eval{\psi_i}$,
  $i\in\{1,2\}$. Then {\ApplyEBDD}$(u_1, u_2, \circledast)$ satisfies
  $\Eval{w_0}=\Eval{\psi_1\circledast\psi_2}$ and
  $\Eval{w_{i+1}}=\delta_{x_{n-i}}\Eval{w_{i}}$ for every $0 \leq i \leq n-1$;
  moreover, $w_n$ is a BDD with $w_n={\Apply}(u_1,u_2, \circledast)$.
  Finally, the algorithm runs in time $O(T)$, where $T$ is the time taken by
  {\Apply}($u_1, u_2, \circledast$).
\end{proposition}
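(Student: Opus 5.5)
We prove the three claims of Proposition~\ref{prop:computebdd} by a single induction on the recursion tree of {\ApplyEBDD}, whose shape is exactly that of {\Apply}. First we fix how the polynomials $w_0,\dots,w_n$ are read off the returned node $b=\langle v_1\circledast v_2\rangle$. Given a threshold $k\in\{0,\dots,n\}$, we evaluate the eBDD rooted at $b$ as follows. A node $d$ with $\var(d)>x_{n-k}$ (i.e.\ a variable already degree-reduced) is replaced by $\link(d)$, while every other node is read with its ordinary semantics. The $k$-th view $w_k$ is the eBDD obtained this way. For $k=0$ no link is followed at the top level, so $\Eval{w_0}=\Eval{\langle u_1\circledast u_2\rangle}=\Eval{u_1}\widehat{\circledast}\Eval{u_2}=\Eval{\psi_1\circledast\psi_2}$ by hypothesis.

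For $w_n$ every node is linked through to its final version. We show by induction on the recursion that $\link(\link(b))$ equals the result of {\Apply} on the same arguments. At a leaf both algorithms return $v_1\circledast v_2$. At an inner call we have $\mathit{final}=\Reduce(\langle v_i,\link(\link(l)),\link(\link(r))\rangle)$, and by induction $\link(\link(l))$ and $\link(\link(r))$ are the {\Apply} results for the children. Hence $\mathit{final}$ coincides with the node {\Apply} returns, using the unique table and the same cofactor splitting. This gives $w_n={\Apply}(u_1,u_2,\circledast)$.

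The core step, and the main obstacle, is $\Eval{w_{i+1}}=\delta_{x_{n-i}}\Eval{w_i}$. The plan is to compare $w_i$ and $w_{i+1}$ node by node. They differ only at recursion nodes labelled with variable $x_{n-i}$: in $w_i$ such a node $b'=\langle f\circledast g\rangle$ is read as an operation node, while in $w_{i+1}$ it is read as $\langle x_{n-i},l,r\rangle$, whose children are in turn read in the $(i+1)$-view. Below these nodes only larger variables occur, so both views already see fully reduced nodes there (by the $w_n$ argument restricted to subcalls). Consequently $l$ and $r$ denote the multilinear BDD polynomials of the cofactors $(f\circledast g)|_{x_{n-i}:=0}$ and $(f\circledast g)|_{x_{n-i}:=1}$. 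Then $x\Eval{r}+(1-x)\Eval{l}$ is exactly $\delta_{x}$ applied to $\Eval{f}\widehat\circledast\Eval{g}$, because degree reduction in a single variable $x$ is determined by the values at $x=0,1$, and $\Eval f\widehat\circledast\Eval g$ agrees with those cofactors there. Above level $x_{n-i}$ the nodes are operation nodes of smaller variables, identical in both views. Since $\delta_x$ commutes with $\widehat\circledast$ only up to multilinearity in $x$, we must argue carefully that $\delta_{x_{n-i}}$ distributes through the nodes above. For this we use that $\delta_x(p\cdot q)$ and $\delta_x(p)\delta_x(q)$ agree when neither factor depends on $x$ after the prefix, together with $\delta_x(yp+(1-y)q)=y\delta_x p+(1-y)\delta_x q$ for $y\neq x$. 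Branches skipped by variable ordering (the case $f_i<g_i$ reusing $v_1$) need a separate check that the duplicated operand is handled consistently; this bookkeeping is where we expect the difficulty.

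Finally, for the running time, {\ApplyEBDD} makes the same recursive calls as {\Apply}, with identical memoization keys, since it mutates nothing. Each call does $O(1)$ extra work: it creates three nodes, sets two links, and performs one {\Reduce} identical to {\Apply}'s. Hence its total time is $O(T)$.
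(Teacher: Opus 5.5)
Your handling of $w_0$, of $w_n$ (via $\link(\link(b))=\mathit{final}$ and induction against {\Apply}), and of the running time is sound and matches the paper. The gap is in the core step $\Eval{w_{i+1}}=\delta_{x_{n-i}}\Eval{w_i}$: you have the eBDD upside down.

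{\ApplyEBDD} splits on $\max(f_i,g_i)$. So the root carries the largest variable, variables decrease towards the leaves, and the first reduction $\delta_{x_n}$ acts at the root. Take a recursion node $b'$ with $\var(b')=x_{n-i}$. The nodes \emph{above} it carry larger, already reduced variables, and both views read them through their links as Shannon nodes $\langle y,l,r\rangle$. The nodes \emph{below} it carry smaller variables and are \emph{not} linked in $w_{i+1}$. Hence the children $l,r$ of $\link(b')=\langle x_{n-i},l,r\rangle$ are read as the unreduced operation polynomials $\Eval{f_l}\widehat\circledast\Eval{g_l}$ and $\Eval{f_r}\widehat\circledast\Eval{g_r}$. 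They are not ``multilinear BDD polynomials of the cofactors''.

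This matters. Had they been multilinear, $x\Eval{r}+(1-x)\Eval{l}$ would be the degree reduction of $Q=\Eval{f}\widehat\circledast\Eval{g}$ in \emph{all} its variables, not $\delta_x Q$. Your justification that $\delta_x$ ``is determined by the values at $x=0,1$'' holds only if those values are exactly the polynomials $\pi_{[x:=0]}Q$ and $\pi_{[x:=1]}Q$. Polynomials that merely agree with them on Boolean points do not suffice.

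There are also no operation nodes of smaller variables above level $x_{n-i}$ with level-$x_{n-i}$ nodes as operands. An unlinked operation node $\langle v_1\circledast v_2\rangle$ is read as $\Eval{v_1}\widehat\circledast\Eval{v_2}$ of its argument BDDs, and the reading stops there. So the product rule for $\delta_x$ you plan to use is neither needed nor applicable.

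The fix is what the paper does in Lemma~\ref{lemma:evaluateebddcorrectness}, on top of the invariants of Lemma~\ref{lemma:computeebdds}. Let $P_k(b)$ be the polynomial read from the node $b$ returned by the call $(f,g)$ in view $k$. Prove by induction over recursion nodes the closed form $P_k(b)=\delta_{x_{n-k+1}}\cdots\delta_{x_n}(\Eval{f}\widehat\circledast\Eval{g})$.
\begin{itemize}
\item If $\var(b)\le x_{n-k}$, none of the reduced variables occurs, so the reductions are the identity.
\item Otherwise let $v$ be the split variable and write $f_0=f_l$, $f_1=f_r$. Use $\Eval{f_c}=\pi_{[v:=c]}\Eval{f}$, the commutation of $\pi_{[v:=c]}$ with $\delta_y$ for $y\neq v$, and $\delta_v Q=v\,\pi_{[v:=1]}Q+(1-v)\,\pi_{[v:=0]}Q$.
\end{itemize}
The consecutive relation then follows immediately.

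Your view-comparison route also works once the picture is corrected:
\begin{itemize}
\item Locally, $x\,\pi_{[x:=1]}Q+(1-x)\,\pi_{[x:=0]}Q=\delta_x Q$.
\item Paths that skip level $x_{n-i}$ end at operation nodes whose polynomial does not contain $x_{n-i}$.
\item The identity $\delta_x(yp+(1-y)q)=y\,\delta_x p+(1-y)\,\delta_x q$ for $y\neq x$ carries the result up through the linked nodes.
\end{itemize}
The case $f_i<g_i$ that you flag is harmless: $f$ then does not depend on the split variable, so $\pi_{[v:=c]}\Eval{f}=\Eval{f}$.
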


Thus, Solver and Prover use {\ApplyEBDD} to compute eBDDs for each boolean
function and polynomial of the {\gbcshort} constructed by \lib. Prover answers a
\procname{Challenge} and \procname{ChallengeDistinct} by traversing the
arguments eBDDs in linear time. Pseudocode and a detailed description is
provided in Appendix~\ref{apx:challengeprover}.


\subsection{Garbage collection}
\label{subsec:gc}

Recall that {\checker}'s Solver runs bottom-up through a circuit: For every gate
$\psi := \psi_1  \circledast \psi_2$ of $\conv(\varphi)$, Solver applies
{\ApplyEBDD} to $\psi$ (that is, computes {\ApplyEBDD}($\psi_1, \psi_2,
\circledast$)), \emph{after} applying it to $\psi_1$ and $\psi_2$. In
particular, Solver can garbage-collect all BDD-nodes of $\psi_1$ or $\psi_2$
that are not shared with $\psi$, reducing memory consumption. On the contrary,
{\prot} runs top-down: the round for $\psi$ is executed \emph{before} the rounds
for $\psi_1$ and $\psi_2$. Therefore, since Prover uses the output Solver for
all gates of the circuit, garbage collection is not possible. 

We show that, under a reasonable assumption, {\prot} can be replaced by a
{\protGC}, another protocol that essentially runs {\prot}
bottom-up (\texttt{Rev} stands for ``reverse'').  We explain the intuition for {\protGC} with the help of the tiny
circuit $\varphi = x \wedge x$, for which $\conv(\varphi) =  \delta_x(x \wedge
x)$. Imagine a dishonest Prover claims $\Eval{\conv(\varphi)} (1/2) = 1$, which
corresponds to claiming that $x \wedge x$ has two satisfying assignments.  Then
$\prot$ runs as follows:
\begin{itemize}
\item[(1)] Verifier asks Prover to supply $\Eval{x \wedge x}$. \\
Let $p(x)$ be Prover's answer. Verifier checks that $p(x)$ is at most quadratic
and the consistency condition $(x \cdot p(1) + (1-x) \cdot p(0)) (1/2) =
1/2(p(1)+p(0)) \stackrel{?}{=}1$, and rejects if it they are not met. In
particular, if Prover answers the truth, namely $p(x):=x^2$, then Prover is
caught. So Prover answers with some $p(r) \neq r^2$.
\item[(2)] Verifier picks $r \in \F$ u.a.r. and asks Prover to supply $\Eval{x}(r)$. \\
Let $k$ be Prover's answer. Verifier checks the condition  $p(r) \stackrel{?}{=}
k^2$, and rejects if it is not met. Assume it holds.
Then Verifier computes $\Eval{x}(r)$ herself and checks $\Eval{x}(r)=r
\stackrel{?}{=}k$, which is equivalent to $p(r)\stackrel{?}{=}r^2$. But for
$p(r) \neq r^2$ this holds only if $r$ happens to be one of the at most two
roots of the quadratic polynomial $p(r) - r^2$, and so with probability
$2/|\F|$. So Verifier catches that Prover's initial claim is false with high
probability.
\end{itemize}
It is essential that the procedure runs top down, i.e., that (2) happens after
(1). Otherwise Prover knows $r$ when choosing $p(x)$, and Prover can choose
$p(x) := \frac{(r-2)x^2+rx}{r-1}$, passing all checks.

Imagine, however, that Prover behaves like an \emph{oracle}, i.e., that it
clears its memory after each query. Then in the bottom-up protocol in which (2)
happens first, Prover forgets $r$, and then (1) happens, Prover is caught with
the same probability as before. This condition is reasonable whenever Verifier
can assume that Solver+Prover may be faulty but are not malicious and do not
conspire to ``fool'' Verifier. For example, if the code of Prover is publicly
available, then Verifier can check, e.g.\ by program analysis, that Prover does
not store data across queries.

Appendix~\ref{apx:gccertify} describes {\protGC}, a bottom-up version of
{\prot} with garbage collection. Here we only sketch it. Verifier starts {\protGC} by creating a random assignment $\sigma$ for all variables.
Then {\protGC} proceeds in rounds, one for each gate, in \emph{reverse}
topological order, i.e., from input to output gates. At the round for a gate
$\psi$, Verifier replaces claims about the output gates of $\psi$ by a claim
about $\psi$ itself. CEHP is preserved by the oracle assumption on Prover.
At the end of {\protGC}, Verifier accepts if at the end $\Claimset$ contains claims for all  assertions of the execution trace.
In Appendix~\ref{apx:gccertify} we prove:

\begin{lemma} If $\mathcal{C}$ contains a
  false claim about an {\gbcshort} $\conv(\varphi)$ with $n$ variables, then in {\protGC} Verifier
  accepts with probability at most $\left({4n |\varphi| + n}\right)/{\F}$ for
  any Prover that acts as an oracle. If all claims in $\mathcal{C}$ are true,
  Verifier accepts with probability $1$ for the honest prover (completeness).
  \label{lemma:eccertifycorrectness}
\end{lemma}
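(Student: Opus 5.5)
The plan is to reduce the bottom-up protocol {\protGC} to the top-down protocol {\prot} and then reuse the CEHP bookkeeping behind Lemma~\ref{lemma:ebcertifycorrectness}. The key observation is that an oracle Prover answers every challenge as a fixed function of that challenge alone, because it clears its memory between queries. Hence we may regard Prover as a fixed family of answer functions $A_\psi(\tau)$, one for each gate $\psi$ and each (partial) assignment $\tau$. In {\protGC} all of Verifier's randomness (the initial random assignment and the points $r$ used at degree-reduction gates) is sampled independently of these functions. So the joint distribution of transcripts in {\protGC} equals the distribution in a run of {\prot} against a Prover using the same answer functions, in which Verifier samples the same random values but issues the challenges in reverse topological order. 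We will make this precise by coupling: fix Prover's answer functions, and draw Verifier's random coins once up front. We then check that the set of consistency checks performed, and the final claims on input gates, are literally the same in both orders. Each check is a deterministic predicate of the coins and the answer functions, and an oracle's answers do not depend on the order in which queries are asked.

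\textbf{Soundness.} With this coupling in place, we redo the union bound of Lemma~\ref{lemma:ebcertifycorrectness} gate by gate. For a \texttt{Propagate} step at a degree-reduction gate $\delta_{x_i}\psi'$, the bad event is that $r$ is a root of $p - \Pev{\sigma'}\Eval{\psi'}$, where $p$ is Prover's answer. Here $p = A_{\psi'}(\sigma')$ is determined before $r$ is drawn, in the sense of independence: it is a function of $\sigma'$ only, and $\sigma'$ does not depend on $r$. The probability of this event is therefore at most $2/|\F|$. The same reasoning applies to \texttt{Merge}, with bound $2n/|\F|$. Binary operation, projection and renaming steps cost nothing, as in {\prot}. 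Normalization contributes $n/|\F|$ via DeMillo--Lipton--Schwartz--Zippel, because the random $\sigma$ is independent of the oracle's functions. Summing over at most $|\varphi|$ original gates, each expanded into at most $n$ degree reductions plus merges, gives the bound $(4n|\varphi|+n)/|\F|$, exactly as in Lemma~\ref{lemma:ebcertifycorrectness}.

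\textbf{Completeness.} The honest Prover answers every challenge with the true polynomial values. All consistency checks then hold identically regardless of the order in which they are performed, and the final claims on atomic gates are true. Hence Verifier accepts with probability $1$.

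\textbf{Main obstacle.} The hard part is the coupling step, specifically the dependency structure of the assignments. In {\prot}, the assignment queried at a successor is built from the random values chosen higher up: for example, $\sigma[x_i:=r]$ at a degree-reduction gate. To run bottom-up, Verifier must know these assignments before visiting the gates above. I would handle this by having Verifier precompute, from the random coins alone, the full tree of assignments that {\prot} would generate. This is possible because the assignments depend only on Verifier's coins and never on Prover's answers. Merges, however, involve a random line through several claim points, and the claimed values feed into the checks. I would show that merge points can likewise be fixed in advance as functions of coins only, with values checked after the fact. If that fails, the fallback is to argue per gate that the answer used in each check is a function of coins that are independent of the coin making the check fail. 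This suffices for the union bound.
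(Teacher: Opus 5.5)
\textbf{Gap: the proposal analyses a different protocol.} Your argument is essentially correct for the protocol you describe, but that protocol is not \protGC. You draw all of \prot's coins up front, with a fresh point for every degree-reduction gate and every Merge variable, and ask the same queries in reverse order. Against an oracle this is just \prot: the answers, and hence the acceptance event, do not depend on query order, so Lemma~\ref{lemma:ebcertifycorrectness} applies verbatim. In \protGC, by contrast, Verifier draws one assignment $\sigma_R$ at the start, and \emph{every} sample $v_k \R \F$ of \prot\ (at every $\delta_{v_k}$ gate and in every Merge) is replaced by the same coordinate $\sigma_R(v_k)$.

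Consequences of this reuse:
\begin{itemize}
\item Your claimed equality of transcript distributions with \prot\ is false.
\item The key step ``$\sigma'$ does not depend on $r$'' is no longer automatic. The value $r=\sigma_R(x_i)$ is reused at many gates, and renaming gates copy one coordinate of the current point into another variable's slot. So independence of Prover's polynomial from the reused coordinate has to be argued from the oracle assumption. That is the one new ingredient in the paper's base case.
\item The reuse is what makes every claim about a node land at the single point $\sigma_R$, so the node can be checked as soon as Solver has computed it. Your fallback of precomputing the whole tree of query points needs the part of the circuit above a gate, which does not yet exist when the gate is processed. It would also force Prover to keep every eBDD until the end, which defeats the garbage collection \protGC\ is designed for.
\item Minor: query points are not all coin-determined, since for a $\not\equiv$ claim Prover picks the point via \texttt{ChallengeDistinct}.
\end{itemize}

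\textbf{Second missing piece: the sequential structure.} \protGC\ is a series of invocations $\protGC_j(\varphi_j,\Claimset_j)$ on disjoint pieces $\varphi_j$ of the circuit. After each invocation, every checked node $\psi$ is replaced by a childless node $\varepsilon^\psi_k$ storing the verified value $\Pev{\sigma_R}\Eval{\psi}=k$. A later claim $\Pev{\sigma'}\Eval{\varepsilon^\psi_k}=k'$ is rejected iff $\sigma'\neq\sigma_R$ or $k'\neq k$.

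The paper's proof is an induction on the number of invocations. The base case is the \prot\ analysis with the modified sampling. In the step, a false claim that passes at some $\varepsilon^\psi_k$ means the false claim $\Pev{\sigma_R}\Eval{\psi}=k$ was accepted in an earlier invocation, which the induction hypothesis bounds by $4n\sum_{c\le j}|\varphi_c|/|\F|$. The union bound together with $\sum_j|\varphi_j|=|\varphi|$ then gives $(4n|\varphi|+n)/|\F|$.

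Completeness likewise needs that honest claims reach each $\varepsilon^\psi_k$ exactly at $\sigma_R$; the paper argues this from Merge taking all its values from $\sigma_R$. Your remark that honest checks ``hold regardless of order'' does not cover this new rejection test. None of this inductive, leaf-caching argument appears in your proposal, and it is the actual content of the lemma.
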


\section{Evaluation}
\label{sec:eval}

We evaluate {\checker} on benchmarks from the liveness track of
HWMCC25's~\cite{Preiner2025}. We convert Aiger benchmarks
to the \textsf{smv} format using \textsf{aigtosmv}~\cite{Biere2025}, and retain the 53 benchmarks for which \checker\ terminates within $15$ minutes. All experiments are executed
on a platform running Linux 6.15.2 with an AMD `Ryzen 9 7950X' CPU, 32GB of DDR5
memory, and hosting the Clang compiler in version 19.1.7.

\smallskip\noindent\textit{Performance of {\checker}'s Solver.} Fig.~\ref{fig:solvervsolver} compares the runtime and memory consumption of {\checker}'s Solver and NuSMV 2.7.0 which, recall, does not offer certification. Our solver is in average \nusmvoverhead\ times slower, which for a  first prototype against a mature tool we consider a good result. The main reason is that {\checker}'s Solver does not yet support
\emph{relational products}~\cite{Burch1994}---a BDD operation that can
significantly improve the performance of symbolic model checking---because its direct interactive certification by Prover and
Verifier is still an open problem.  

\begin{figure}[ht]
  \centering
  \begin{minipage}{0.495\textwidth}
    \centering
    \includegraphics[width=\textwidth]{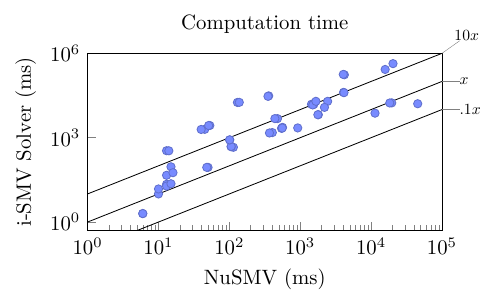}
  \end{minipage}
  \begin{minipage}{0.495\textwidth}
    \centering
    \includegraphics[width=\textwidth]{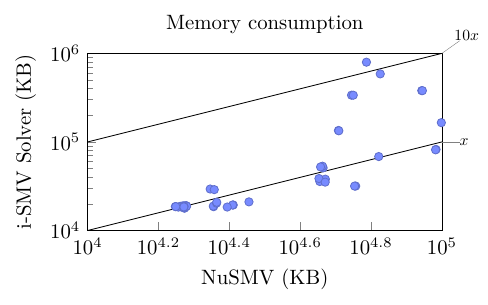}
  \end{minipage}
  \begin{minipage}{0.495\textwidth}
    \centering
    \includegraphics[width=\textwidth]{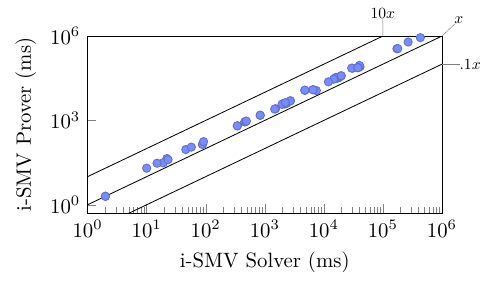}
  \end{minipage}
  \begin{minipage}{0.495\textwidth}
    \centering
    \includegraphics[width=\textwidth]{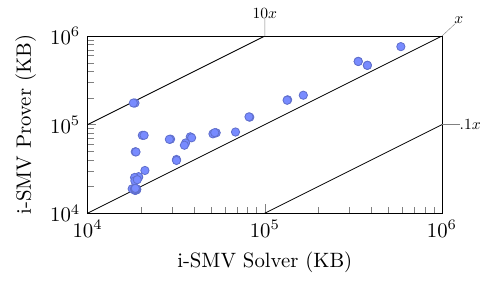}
  \end{minipage}
  \caption{Top: Solving time (left) and memory consumption (right) of solving instances using \checker\ vs.\ NuSMV.
  Bottom: {\checker}'s Solver vs. {\checker}'s Prover computation time.}
  \label{fig:solvervsolver}
\end{figure}

\smallskip\noindent\textit{Performance of {\checker}'s Prover.} Fig.~\ref{fig:solvervsolver}
(bottom row) compares the time and memory consumption of \checker's Solver and
\checker's Prover. Our theoretical analysis shows that if Solver takes time $T$,
then Prover takes time $\O(T)$, and suggests a small constant,
as answering all challenges amounts to traversing the BDDs computed by Solver
once. The experiments confirm this, the constant being \proveroverhead\ on average. Note that Prover time includes both computing all eBDDs (solving the instance) and answering all challenges
(certifying the result).

\smallskip\noindent\textit{Performance of {\checker}'s Verifier.} 
Fig.~\ref{fig:provervverifier} (top row) compares the runtime of {\checker}'s
Verifier and {\checker}'s Prover.  For a timeout
of 15 minutes, Verifier never takes more than \verifiermaxtime\ second and on
average Verifier is \verifierspeedup\ times faster than Prover. Further,
Verifier's runtime grows much slower than Prover's runtime
with the size of the instance in average. This reflects the fact that Verifier's runtime is
linear on the \emph{length} of the execution trace, while Prover's time is
linear on the \emph{time} it takes to execute it. Generally,  execution traces
for larger benchmarks manipulate larger BDDs, which improves
the speedup of Verifier w.r.t. Prover; however, some have long trace sand small BDDs, and Verifier still
takes almost as much time as Prover.


\begin{figure}[ht]
  \centering
  \begin{minipage}{\textwidth}
    \centering
    \includegraphics[width=.5\textwidth]{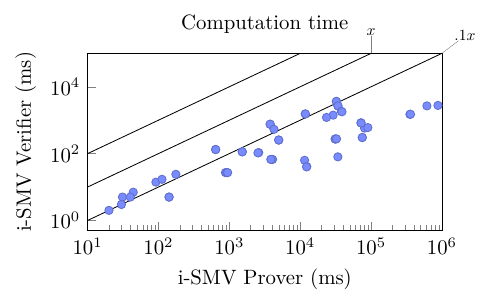}
  \end{minipage}
  \begin{minipage}{0.495\textwidth}
    \centering
    \includegraphics[width=\textwidth]{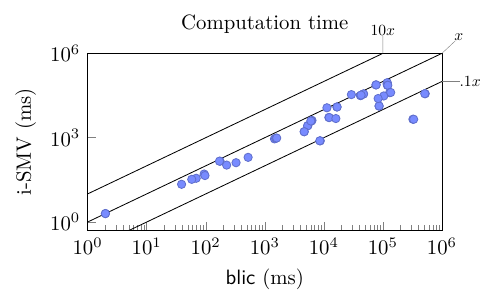}
  \end{minipage}
  \begin{minipage}{0.495\textwidth}
    \centering
    \includegraphics[width=\textwidth]{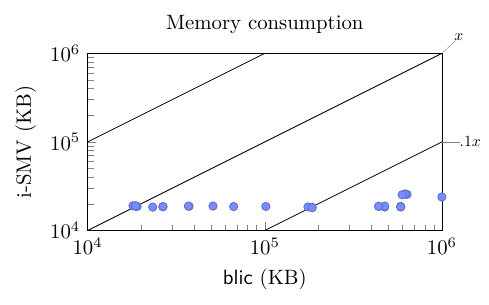}
  \end{minipage}
  \caption{Top: Computation time of \checker's Prover vs.\ Verifier.
    Bottom: \checker\ using our improvements vs.\ \blic.}
  \label{fig:provervverifier}
\end{figure}


\noindent\textit{Impact of {\prot} and {\blictwo}.}
Fig.~\ref{fig:provervverifier} (bottom row) compares the performance of
{\checker} when run on top of {\CPCertify}+{\blic} and and on top of
{\protGC}+{\blictwo} (that is, with dedicated handling of renaming gates and
equivalence claims, with a BDD library that uses a global unique table and with
fine-grained garbage collection). The speedup factor is \improvementsspeedup\ on
average, increases with runtime, and reaches a maximum of 73. The  memory-reduction factor is
\improvementmemory\ on average, with median of \improvementmemorymedian\ for the benchmarks that
did not run out of memory ({\CPCertify}+{\blic} solves
\naivetimeout\ fewer instances than {\protGC}+{\blictwo}).

\section{Conclusion}
We have presented  {\checker}, the first model checker with interactive certification. {\checker}'s Verifier module interactively checks that the execution sequence of the Solver module---that is, the sequence of BDD-operations executed to solve a given model-checking instance---is correct. 

Our certification technique works for any algorithm implemented on top of the {\lib} BDD-library. In particular, 
since there exists a BDD-based algorithm for the full modal $\mu$-calculus, it can be used to construct a self-certifying model checker not only for full CLT, but for the full modal $\mu$-calculus. It can also be used to certify a CTL model-checker that computes predecessors by iterative squaring of the transition relation. The execution sequence of this algorithm always has polynomial length in the size of the model-checking instance, and so for this algorithm Verifier always runs in polynomial time. However, this algorithm is known to be much less efficient for Solver.



\bibliographystyle{plain}
\bibliography{blic.bib}


\appendix

\counterwithin{figure}{section}
\counterwithin{definition}{section}
\counterwithin{lemma}{section}
\counterwithin{theorem}{section}
\counterwithin{corollary}{section}
\counterwithin{algorithm}{section}

\section{Preliminaries}

\subsection{Binary Decision Diagrams (BDDs)}
\label{apx:bdds}
We recall basic notions on reduced ordered binary decision diagrams~\cite{Bryant1986,Bryant2018}, called in this paper BDDs for short. BDDs are a symbolic representation of boolean functions as directed acyclic
graphs (DAGs). The graph forms a decision diagram: Each node of the graph,
starting at the root, is associated with a variable of the function. The node
has two children: one `low' child for the boolean function when setting the
node's variable to false, and one `high' child for the function of setting its
variable to true. We write $\langle x_i, l, r \rangle$ for a node with variable
$v_i$, the low child $l$, and the high child $r$. The leaf nodes are either $0$
or $1$. To evaluate the function represented by a BDD, one chooses the path from
its root to a leaf according to the value of each variable. If the last node is
$0$, then the function evaluates to false, otherwise it evaluates to true. BDDs
are \emph{ordered}, because any sequence of variables encountered on a path of a
BDD are sorted according to a strict global order. Further, BDDs are
\emph{reduced} since they cannot have redundant nodes: Whenever both children of
a node represent the same function, the node should be omitted entirely and
replaced by its child. Also, reduced BDDs never have a duplicate node (e.g.\ by
using a cache or table). This ensures that any boolean function has a
\emph{unique} representation as a BDD~\cite{Bryant1986}. 

\begin{figure}[ht]
  \centering
  \includegraphics[width=\textwidth]{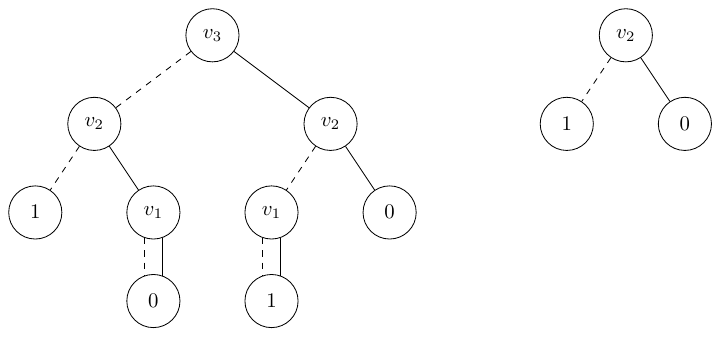}
  \caption{Unreduced (left) and reduced (right) ordered BDD representing the same function.}
\end{figure}

Boolean operations on binary decision diagrams are implemented via recursion on
the children of the operands. A BDD $\langle x_i, l, r \rangle$ represents a
boolean function $f$ according to the Shannon expansion $f = (\neg x_i \wedge
  f|_{x_i \leftarrow 0}) \vee (x_i \wedge f|_{x_i \leftarrow 1})$ where $l =
  f|_{x_i \leftarrow 0}$ and $r = f|_{x_i \leftarrow 1}$. Boolean operations can
be expressed using the subterms of the Shannon expansion, for example as: $f
  \wedge g = (\neg x_i \wedge f|_{x_i \leftarrow 0} \wedge g|_{x_i \leftarrow 0})
  \vee (x_i \wedge f|_{x_i \leftarrow 1} \wedge g|_{x_i \leftarrow 1})$. This
naturally gives rise to the \texttt{Apply($f, g, \circledast$)} operation to
compute the reduced BDD for the function $f \circledast g$, where $\circledast$
is one of the $16$ possible boolean binary operators.

Where $\texttt{Reduce}$ ensures the resulting BDD is unique and reduced by using
a \emph{unique cache}. As is, the $\texttt{Apply}$ algorithm produces correct
BDDs. However, the worst-case complexity is exponential in the size of its
parameters. This can be avoided by caching results using a \emph{computation
cache}, which prevents \texttt{Apply($f, g, \circledast$)} being computed twice
for any equal parameters. Because the unique cache is needed to correctly
generate reduced ordered binary decision diagrams, entries cannot be removed
from it while the entries' node exists. The computation cache, on the other
hand, can be limited to a maximum size in a trade-off between speed and memory
overhead. The use of a computation cache limits the number of recursive calls to
\procname{Apply($f, g, \circledast$)} to $|f| \cdot |g|$, bounding the total
execution time of an operation to $O(|f| \cdot |g|)$.

Table~\ref{table:bddops} outlines the typical BDD-based boolean function library
interface. The \texttt{Restrict} and \texttt{Rename} operations are also defined
by simple recursion on the graph structure of a BDD, using a computation cache
for efficiency. The difference between the two operations and \texttt{Apply} is
that the terminal case differs: While \texttt{Apply} stops recursion at the leaf
nodes, they stop as soon as their recursion reaches a node with a variable lower
than their argument. In general, both operations have a complexity linear in the
size of their BDD argument. The renaming operation \texttt{Rename($g, x_i,
    x_k$)}, which substitutes the variable $x_i$ for $x_k$ in the BDD $g$, requires
that none of the variables in between $x_k$ and $x_i$ is in the support of $g$.
The condition ensures that each corresponding variable in the graph of $g$ can
be renamed without violating the BDD's order.

Quantifying a set of variables $\{x_0, \ldots, x_n\}$ using BDDs is implemented by a
sequence of quantifications $\exists_{x_0} \dots \exists_{x_n} f$, which may
have a time complexity of $O(|f|^{2^n})$. This operation in particular can quickly become
limiting for model checking as the algorithm quantifies a set of variables for
every image and pre-image computation.

\subsection{Interactive Proof Protocols}
\label{apx:ip}

This appendix is taken from \cite{Arora2009}, with slight differences in notation. We first introduce deterministic interactive proof protocols, and then the general notion of an interactive proof protocol.

A deterministic interactive proof protocol is a pair of deterministic Turing machines, called the \emph{honest Prover} and the \emph{Verifier}, that compute two functions $f, g : \{0, 1\}^* \rightarrow \{0, 1\}^*$. Intuitively, the \emph{honest Prover} and the \emph{Verifier} are the two parties of a communication protocol, and $f, g$ describe their behavior: Given the sequence of messages  exchanged between the two parties so far, modeled as a word $w \in \{0, 1\}^*$, the strings $f(w), g(w)$ model the next message sent by the honest Prover to Verifier resp. by Verifier to the honest Prover.  Formally, a $k$-round interaction between the honest Prover and Verifier on a word $x \in \{0,1\}^*$ is defined as follows:

\begin{definition}[$k$-round interaction]
For any two Turing machines computing functions $f, g : \{0, 1\}^* \rightarrow \{0, 1\}^*$ and an input $x \in \{0,1\}^*$, a $k$-round interaction $\langle f, g\rangle_k(x)$ is defined as the sequence of binary strings $a_1,\dots,a_k$:
  \begin{align*}
    a_1                                 & = f(x)                                                        \\
    a_{2i}                              & = g(x, a_1, \dots, a_{2i - 1}) &  & \forall i.~ 1 < 2i \leq k \\
    a_{2i+1}                            & = f(x, a_1, \dots, a_{2i})     &  & \forall i.~ 1 < 2i < k    \\
    \text{out}\langle f,g \rangle_k(x) & = f(x, a_{1},\dots,a_{k})
  \end{align*}
\end{definition}

As usual, we model computational decision problems, like SAT or the model-checking problem for CTL, as the language containing the encodings of the ``yes''-instances: for SAT, the set of all satisfiable boolean formulas, and for the model-checking problem the set of all pairs $\langle\text{system, CTL-property} \rangle$ such that the system satisfies the property. We now formally define a $k$-round deterministic interactive proof  protocol for a language. 

\begin{definition}[$k$-round deterministic interactive proof  protocol]
Let $L \subseteq \{0,1\}^*$ be a language and let $k \colon \mathbb{N} \to \mathbb{N}$. A {\em $k$-round interactive proof protocol} for $L$ is a pair $HP$, $V$ of deterministic Turing machines, called the \emph{honest Prover} and the \emph{Verifier}, satisfying the following properties for every input $x$: 
\begin{itemize}
\item \textbf{Polynomiality}: $V$ runs in polynomial time in $|x|$.
\item \textbf{Completeness}:   if $x \in L$, then $\text{out}\langle V, HP \rangle_{k(|x|)}(x) = 1$.       
\item \textbf{Soundness}:  if $x \notin L$, then $\text{out}\langle V, P \rangle_{k(|x|)}(x) = 0$ for every deterministic Turing machine $P$.
\end{itemize}
\end{definition}
Intuitively, completeness means that for every $x \in L$ the honest Prover makes Verifier accept the true claim ``$x$ belongs to $L$'' (Verifier outputs $1$). Soundness means that for every $x \notin L$, \emph{no Prover whatsoever}, honest or dishonest, can make Verifier accept the false claim ``$x$ belongs to $L$''. 

As shown in \cite{Arora2009}, the power of interaction is only realized when Verifier is \emph{probabilistic}. So we need to introduce probabilistic Turing machines.

\begin{definition} [Probabilistic Turing Machine]
  A probabilistic Turing machine $M$ is a Turing machine that has a
  second \emph{random} input tape $\mathbb{R}$ and a function $t$, such that for
  any input $x \in \{0,1\}^*$, $\mathbb{R}$ is initialized with a bitstring $r$, sampled uniformly
  at random from $\left\{0, 1\right\}^{t(|x|)}$. We say that $M$ takes time $T$
  if for every $x$, $M$ terminates in at most $T(|x|)$ steps for every 
  $r \in \left\{0, 1\right\}^{t(|x|)}$.
\end{definition}

It is easy to extend the definition of $k$-round interaction to the case in which Verifier is probabilistic. 
We add the bitstring $r$ as input the function $f$, that is, we take $a_1 = f (x, r)$,
$a3 = f(x, r, a_1, a_2)$, etc. The interaction $\langle V, HP \rangle(x)$ is now a random variable over $r \in \{0,1\}^{t(|x|)}$. Similarly the output $\text{out}_f\langle V, P \rangle(x)$ is also a random variable. Now we can generalize deterministic interactive proof protocols to interactive proof protocols:

\begin{definition}[$k$-round interactive proof  protocol]
Let $L \subseteq \{0,1\}^*$ be a language and let $k \colon \mathbb{N} \to \mathbb{N}$. A {\em $k$-round interactive proof protocol} for $L$ is a pair $HP$, $V$ of deterministic and probabilistic Turing machines, respectively, called the \emph{honest Prover} and the \emph{Verifier}, satisfying the following properties for every input $x$ : 
\begin{itemize}
\item \textbf{Polynomiality}: $V$ runs in polynomial time in $|x|$.
\item \textbf{Completeness}:   if $x \in L$, then $\Pr[\text{out}_f\langle V, P \rangle(x) = 1] = 1$.       
\item \textbf{Soundness}:  if $x \notin L$, then $\Pr[\text{out}_f\langle V, P \rangle(x) = 1] \leq \frac{1}{2^{|x|}}$ for every  deterministic Turing machine $P$.
\end{itemize}
\end{definition}

In other words: the honest Prover makes Verifier accept the true claim that $x$ belongs to $L$ with probability 1. Soundness means that for every $x \notin L$, \emph{no Prover whatsoever}, honest or dishonest, can make Verifier accept the false claim that $x$ belongs to $L$ with probability higher than $\frac{1}{2^{|x|}}$. 

Finally, we define the class  \class{IP} of decision problems as the problems for which there exists a
$k$-round interactive proof protocol for some number of rounds $k$ polynomial in the size of the input. 

\begin{definition}[\class{IP}{[$k$]}] For any polynomial $k(n) > 0$, a language $L$ is in $\class{IP}[k]$ if there is a $k$-round interactive proof protocol for $L$.  We define the class $\class{IP}$ of problems with interactive proof systems as $\class{IP} = \bigcup_{c > 0} \class{IP}[n^c]$. \label{def:IP}
\end{definition}

Shamir's theorem states $\class{IP} = \class{PSPACE}$. In other words, for every problem in $\class{PSPACE}$---like the model checking problem for CTL where the set of initial configurations, the transition function, and the atomic propositions are given as BDDs---has an interactive proof protocol.

\section{Preprocessing: From $\varphi$ to $\conv(\varphi)$}
\label{apx:preprocess}

As mentioned in the main text, \prot\ transforms a model checking trace to the
\emph{\gbclong} (\gbcshort) $\conv(\varphi)$. A \gbcshort\ is a directed,
acyclic graph in which each node represents a boolean function. For example,
fig.~\ref{fig:ebc} displays a \gbcshort\ together with the boolean functions encoded
by its nodes. Importantly, a \gbcshort\ can \emph{share} common sub-expressions, which
allows $\conv(\varphi)$ to be linear in the size of the trace.

\begin{figure}[t]
  \centering
  \includegraphics[width=.6\textwidth]{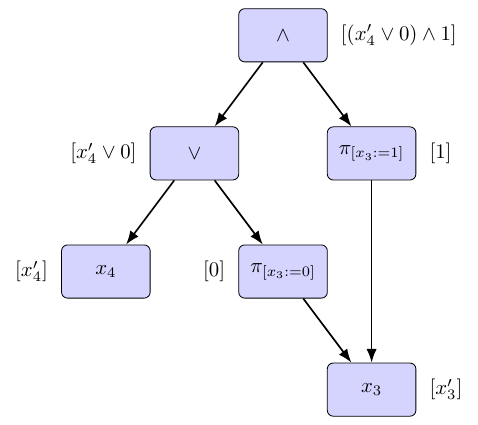}
  \caption{Extended boolean circuit and each node's corresponding boolean function.}
  \label{fig:ebc}
\end{figure}

In addition to a boolean function $\fn(\varphi)$, each node $\varphi$ is
associated with its \emph{arithmetization} $[\![\varphi]\!]$. As said in
Section~\ref{subsec:preproc}, $[\![\varphi]\!]$ is a polynomial over finite
field $\F$, which is equal to $\fn(\varphi)$ when each variable is
assigned a boolean value. The paper explained how boolean operations get mapped
to their arithmetization. To convert each node of a \gbcshort, we also define an
additional \emph{partial evaluation} operator for polynomials:

\begin{definition}[Polynomials]
  \begin{itemize}
    \item \emph{Partial evaluation ($\pi_{[x:=a]} p$)}: returns the polynomial of
          setting variable $x$ to $a \in \F$ in $p$.
    \item \emph{Degree reduction ($\delta_x p$)}: returns the polynomial
          obtained by setting the current degree $d$ of $x$ in every monomial of
          $p$ to $\max(d, 1)$.
  \end{itemize}
  Adjacent partial evaluations or degree reductions are commutative, e.g.\
  $\delta_x \delta_y p = \delta_y \delta_x p$. Also, if a variable $v$ does not
  occur in $p$, then $\delta_v p = p$ and $\pi_{[v:=a]} p = p$. We use these two
  facts throughout proofs without explicit mention.

  A partial evaluation of $p$ under a partial assignment $\sigma : X'
    \rightarrow \F, X' \subseteq V'$, written $\Pev\sigma p$, is defined as
  $\pi_{[x_1:=\sigma(x_1)]}\dots\pi_{[x_k:=\sigma(x_k)]}$ for all variables in
  $X'$. We call a polynomial $p$ with variables $X$ \emph{binary} iff $\forall
    \left(\sigma : X \rightarrow \{0, 1\}\right).~ \Pev\sigma p \in \{0, 1\}$.
  \label{def:poly}
\end{definition}

\begin{definition}[Arithmetization of {\gbcshort}s]
  Let $p[x_t/x_k]$ stand for the polynomial equal to $p$ with each occurrence of
  $x_k$ replaced by $x_t$, and $p \circledast q$ for the arithmetization of the
  binary boolean operator $\circledast$. The arithmetization
  ([\![~]\!]) of an {\gbcshort} node is defined inductively:
  \begin{align*}
     & [\![\top]\!] \equiv 1,  [\![\bot]\!] \equiv 0, [\![x_k]\!] \equiv x_k                                                             \\
     & [\![\varphi \circledast \psi]\!] \equiv [\![\varphi]\!] \circledast [\![\psi]\!], [\![\neg \varphi]\!] \equiv 1 - [\![\varphi]\!] \\
     & [\![\pi_{[x_k:=b]}\varphi]\!] \equiv \pi_{[x_k := [\![b]\!]]} [\![\varphi]\!] \\
     & [\![\varphi[x_t/x_k]]\!] \equiv [\![\varphi]\!][x_t/x_k]
  \end{align*}
  \label{def:arith}
\end{definition}

As an example, if \checker\ produces the following program trace: $g = x_1 \vee
x_2;\ f = g \oplus x_3$, it first constructs the {\gbcshort} $\varphi = (x_1
\vee x_2) \oplus x_3$. Then, it computes the {\gbcshort} $\conv(\varphi)$ by
adding degree-reduction gates, shown in Figure~\ref{fig:ebcfromlib}.

\begin{figure}[t]
  \centering
  \includegraphics[width=.8\textwidth]{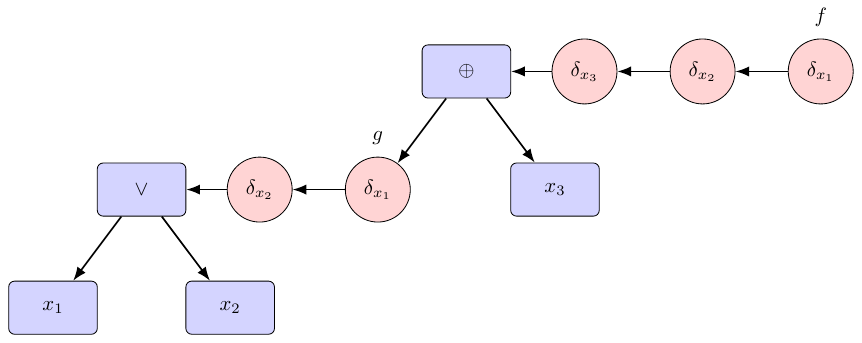}
  \caption{{\gbcshort} $\conv\left((v_1 \vee v_2) \oplus v_3\right)$generated by \checker.}
  \label{fig:ebcfromlib}
\end{figure}

\subsection{Generating Claims from Assertions}
\label{subsc:claimsgenerationapx}

The main paper introduces four types of claims that Prover makes about
\gbcshort\ nodes. The assertions present in the program trace given to \lib\ are
mapped to claims as follows, where $f, g$ are a program variable and $\psi,
\phi$ are the associated nodes in the circuit:
\begin{itemize}
  \item \texttt{assert $|\left\{\overline v \mid f(\overline v) = 1\right\}| =
  k$} (asserting number of sat.\ assignments) gets mapped to \emph{count} claim
  $\Sigma \fn(\psi) = k$

  \item \texttt{assert $f(b_1, \dotsc, b_n) = r$} (asserting evaluation result)
  gets mapped to \emph{$\mathbb{B}$-evaluation} claim $\Pev\sigma \fn(\psi) =
  r$, where $\forall i.\ \sigma(x_i) = b_i \in \{0, 1\}$.

  \item \texttt{assert $f = g$} and \texttt{assert $f \neq g$} (asserting
  functional-equality) get mapped to \emph{$\mathbb{B}$-equivalence} claims
  $\fn(\psi) = \fn(\phi)$ and $\fn(\psi) \neq \fn(\phi)$ respectively.
\end{itemize}

\section{The interactive proof protocol \prot}
\label{apx:prot}

We present {\prot} in a top-down manner. Section \ref{subsec:ebcertifytoplevelapx} describes the top-level procedure, which call procedures described in subsequent sections.

\subsection{\prot: Top-level Procedure}
\label{subsec:ebcertifytoplevelapx}

The main text explains that the top level procedure \prot\ first
uses \procname{Normalize} to turn all claims into \emph{$\F$-evaluation} claims.
It then iterates over the \gbcshort\ $\conv(\varphi)$ in topological order, engaging in a
\emph{round} at each gate. It does this via the \procname{PropagateAssignments}
procedure (Alg.~\ref{alg:certifyassignments}). In each iteration, it first
invokes \procname{Normalize} before one of the \procname{Propagate} variants
(explained below).

\begin{algorithm}[H]
  \caption{\texttt{\prot($\conv(\varphi), \mathcal{C}$)}}
  \label{alg:certify}
  \begin{algorithmic}
    \REQUIRE $\conv(\varphi)$ \COMMENT{\gbcshort\ generated from the trace}
    \REQUIRE $\mathcal{C}$ \COMMENT{Set of claims about output gates in $\conv(\varphi)$}
    \ENSURE $1$, if all claims could be certified, otherwise $0$
    \algrule
    \STATE $\mathcal{C} \leftarrow \texttt{Normalize($\mathcal{C}$)}$
    \COMMENT{$\mathcal{C}$ now only contains $\F$-evaluation claims} \RETURN
    $\texttt{CertifyAssignments($\conv(\varphi), \mathcal{C}$)}$
  \end{algorithmic}
\end{algorithm}

\subsubsection*{\procname{CertifyAssignments}} proceeds by starting a round for
each gate in topological order. It considers the set of assignment claims about
the current node and merges them. If the current gate is an \emph{input} gate,
Verifier computes the arithmetization of and checks whether the claim is correct
(`Decision' in main text). If it instead is an intermediate gate, it generates
new claims about the node's children, using one of the \texttt{Propagate}
subroutines, depending on the type of node, and updates $\mathcal{C}$ while
preserving CEHP.

The two subroutines used by \procname{CertifyAssignments}: \procname{Merge}
and \procname{Propagate} preserve CEHP with probabilities at least $1 - 2n/|\F|$
and $1 - 2/|\F|$ (shown below). We use these facts to prove soundness and
completeness.

\begin{lemma}
  For circuit $\conv(\varphi)$ and $\F$-equivalence claims $\Claimset$, if all
  claims in $\Claimset$ are true and Prover is honest,
  \procname{CertifyAssignments($\conv(\varphi), \Claimset$)} makes Verifier accepts
  with probability $1$ (\emph{completeness}). If at least one claim in
  $\Claimset$ is false, then Verifier rejects with probability at least $1 -
  4n|\varphi|/|\F|$.
  \label{lemma:certifyassignments}
\end{lemma}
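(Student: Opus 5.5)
We would prove completeness and soundness separately, treating \procname{CertifyAssignments} as a sequence of CEHP-preserving updates to $\Claimset$ and charging a failure probability to each gate of $\conv(\varphi)$.

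\emph{Completeness.} Assume all claims in $\Claimset$ are true and Prover is honest. We argue by induction over the rounds, in topological order, that $\Claimset$ contains only true claims before every round. \procname{Merge} sends only true claims to true claims when Prover answers honestly. Each \procname{Propagate} variant (binary operation, degree reduction, projection, renaming) also produces only true claims, because the honest answers are the actual polynomials or values. These answers satisfy the consistency checks identically: for binary gates $\Eval{\psi_1}\widehat\circledast\Eval{\psi_2}=\Eval{\psi}$; for degree reduction $\Eval{\delta_{x_i}\psi'} = x_i\Eval{\psi'}|_{x_i:=1}+(1-x_i)\Eval{\psi'}|_{x_i:=0}$; for renaming, $x_l$ does not occur in $\psi'$. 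Hence no check fails. At the input gates the remaining claims are true claims about $\top$, $\bot$, $x_k$, which Verifier confirms. So Verifier accepts with probability $1$.

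\emph{Soundness.} Suppose some claim in $\Claimset$ is false. We maintain the invariant that $\Claimset$ contains a false claim until some check rejects. Every round preserves CEHP:
\begin{itemize}
\item \procname{Merge} fails with probability at most $2n/|\F|$ (from \cite{CAV23});
\item binary operation, projection and renaming propagation fail with probability $0$;
\item each degree-reduction propagation fails with probability at most $2/|\F|$ (Schwartz--Zippel on the degree-$\le 2$ univariate difference $\Eval{\psi'}|_{\sigma'}-p$).
\end{itemize}
The key accounting step is to bound the number of rounds and their cost. Each gate $\psi$ of $\varphi$ yields at most $n$ degree-reduction gates plus itself in $\conv(\varphi)$, so $\conv(\varphi)$ has at most $(n+1)|\varphi|$ gates. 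Per original gate, the total failure budget is then at most one merge at each of the $\le n+1$ gates and $n$ degree reductions. That is $\frac{2n(n+1)+2n}{|\F|}$, which is too large. So we must charge more carefully. The point we expect to be the main obstacle is matching the stated constant $4n|\varphi|$.

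We would first check whether \procname{Merge} is nontrivial only at gates with several predecessors in $\conv(\varphi)$. Degree-reduction chains are unary, so their intermediate gates receive exactly one claim and \procname{Merge} is the identity there with probability $0$. Thus only the top of each chain (the gates corresponding to gates of $\varphi$) incurs a $2n/|\F|$ merge cost. The $n$ reductions below it cost $2n/|\F|$. This gives $4n/|\F|$ per gate of $\varphi$.

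A union bound over the at most $|\varphi|$ original gates then gives: the probability that some update breaks the invariant is at most $4n|\varphi|/|\F|$. If the invariant survives, a false claim reaches an input gate, where Verifier's direct check rejects with certainty, or an earlier consistency check rejects. Hence Verifier rejects with probability at least $1-4n|\varphi|/|\F|$.
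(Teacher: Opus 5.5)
Your proposal is correct and follows essentially the paper's argument. At most $|\varphi|$ gates, namely the gates of $\varphi$ sitting at the top of each degree-reduction chain, need a nontrivial \procname{Merge} costing $2n/|\F|$; at most $n|\varphi|$ degree-reduction propagations cost $2/|\F|$ each; and a union bound gives $4n|\varphi|/|\F|$, which you simply regroup per gate of $\varphi$. One point the paper states explicitly and you leave implicit: the initial claims produced by \procname{Normalize} never refer to intermediate gates introduced by $\conv$, and this is needed alongside the unariness of the chains to conclude that those gates receive only a single claim.
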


\begin{proof}
  The proof largely follows from the correctness theorem of
  \CPCertify~\cite[Theorem~1]{CAV23}.

  The size of $\conv(\varphi)$ is at most $n|\varphi|$, with at most $|\varphi|$
  nodes that have more than one predecessor, (intermediate degree-reduction
  gates only have one predecessor). When iterating over gates in
  $\conv(\varphi)$, each node $\psi$ is only visited once, and at most
  $|\varphi|$ gates have more than one claim when visited, which is due to the
  limit of predecessors explained above, and the fact that initial claims cannot
  be about intermediate nodes added by $\conv$. This means that (i) claims in
  $\Claimset$ get replaced using \procname{Merge} at most $|\varphi|$ times. The
  \procname{Propagate} procedure is called for each node, at most
  $|\conv(\varphi)| \leq n|\varphi|$ times.

  Using the union bound, we have that if at least one $\F$-equivalence claim in
  $\Claimset$ is false, then a false claim is replaced by a true claim with probability at most
  $|\varphi| \cdot 2n/|\F| + n|\varphi| \cdot 2/|\F|$.
  Conversely, if all claims are true, then all claims added to $\Claimset$ are also true
  for an honest Prover.

  Thus, \procname{CertifyAssignments} causes Verifier to accept with probability
  $1$ given all claims are true and an honest Prover, and otherwise rejects with
  probability at least $1 - 4n|\varphi|/|\F|$.
\end{proof}

\begin{algorithm}[h]
  \caption{\texttt{CertifyAssignments($\conv(\varphi), \mathcal{C}$)}}
  \label{alg:certifyassignments}
  \begin{algorithmic}
    \REQUIRE $\conv(\varphi)$ \COMMENT{\gbcshort\ generated from the trace}
    \REQUIRE $\mathcal{C}$ \COMMENT{Set of $\F$-evaluation claims about nodes in $\conv(\varphi)$}
    \ENSURE $1$, if \texttt{Accept()} is called, $0$, if \texttt{Reject()} is called
    \algrule
    \FOR{Gate $\psi$ in topological order of $\conv(\varphi)$}
    \STATE $C \leftarrow \mathcal{C}(\psi)$ \COMMENT{Get the set of assignment claims about node $\psi$}
    \STATE $(\Pev\sigma [\![\psi]\!] = k) \leftarrow \texttt{Merge($C, \psi$)}$
    \IF{\texttt{IsLeaf($\psi$)}}
    \STATE $t \leftarrow \Pev\sigma [\![\psi]\!]$ \COMMENT{Decision: $\psi$ is an input gate that Verifier checks in linear time}
    \IF{$t \neq k$}
    \STATE \texttt{Reject()}
    \ENDIF
    \ELSE
    \STATE $C' \leftarrow \texttt{Propagate($\psi, \Pev\sigma [\![\psi]\!] = k$)}$
    \STATE $\mathcal{C} \leftarrow \mathcal{C} \cup C'$
    \ENDIF
    \ENDFOR
    \STATE $\texttt{Accept()}$
  \end{algorithmic}
\end{algorithm}

\spnewtheorem*{L1}{Lemma~\ref{lemma:ebcertifycorrectness}}{\bfseries \upshape}{\itshape}

Using our knowledge about the CEHP properties of \procname{Normalize}
(Lemma~\ref{lemma:normalizecehp}) and previous Lemma, proving the original
correctness statement of the main text is straightforward:

\begin{L1}
 If $\mathcal{C}$
  contains a wrong claim about the \gbcshort\ $\conv(\varphi)$ with $n$ variables,
  {\prot} will accept with probability at most $\left({4n |\varphi| +
      n}\right)/{\F}$ for any prover (soundness). If all claims in $\mathcal{C}$
  are correct, it will accept with probability $1$ given an honest prover
  (completeness).
\end{L1}

\begin{proof}
  Using the union bound, we have that \procname{Normalize} preserves CEHP with
  probability at least $1 - n/|\F|$, and that \procname{CertifyAssignments} has
  a soundness error of at most $4n|\varphi|/|\F|$, and perfect completeness.
\end{proof}

\subsection{Initialization: Normalize}
\label{subsc:normalizeapx}

The pseudocode for the initialization procedure \procname{Normalize} explained
in Section~\ref{ssubsec:clch} is shown in Alg.~\ref{alg:normalize}. The main
text explains that \procname{Normalize} replaces all four claim types by
$\F$-evaluation claims in a way that preserves CEHP. Conceptually,
this is because arithmetizations encode more information about a function than
just its value under a boolean assignment. Namely, assigning each variable to
$2^{-1} \in \F$, instead of a boolean value, evaluates
$\Eval{\varphi}$ to the number of satsifying assignments
(\cite[Lemma~2]{CAV23}). Further, because each function has a
\emph{unique} arithmetization (\cite[Prop.~3]{CAV23}), iff $\fn(\varphi)
= \fn(\psi)$, then $\Eval{\varphi} = \Eval{\psi}$.

\begin{lemma}
 If two degree-reduced \gbcshort\ nodes $\varphi$, $\psi$ represent the same boolean function, then
 \[
   \Pr_\sigma\left[\Pev\sigma([\![\varphi]\!] - \Eval{\psi}) = 0\right] = 1
 \]
 Conversely, if they do not represent the same boolean function, then
 \[
   \Pr_\sigma\left[\Pev\sigma([\![\varphi]\!] - \Eval{\psi}) \neq 0\right] \geq 1 - \frac{n}{\F}
 \]\label{lemma:equivcorrectness}
\end{lemma}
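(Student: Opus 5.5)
The plan is to reduce both directions to two facts already recorded in the paper. The first is uniqueness of multilinear arithmetizations (\cite[Prop.~3]{CAV23}). The second is the DeMillo--Lipton--Schwartz--Zippel lemma. Throughout, ``degree-reduced'' means that $\Eval{\varphi}$ and $\Eval{\psi}$ are multilinear polynomials over the $n$ variables of $X$. This is the case for nodes of $\conv(\cdot)$ whose degree-reduction chain is complete, as stated after the definition of $\conv$.

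For the first claim, suppose $\fn(\varphi)=\fn(\psi)$. Then $\Eval{\varphi}$ and $\Eval{\psi}$ are two multilinear polynomials that agree with the same boolean function on every point of $\{0,1\}^n$. A multilinear polynomial is determined by its values on $\{0,1\}^n$. To see this, expand $q=\sum_{S\subseteq X} c_S\prod_{x\in S}x$ and recover the coefficients $c_S$ by Möbius inversion from the boolean values. Equivalently, a nonzero multilinear $q$ has a minimal monomial $S$, and the indicator assignment of $S$ gives $q\neq 0$. Hence $\Eval{\varphi}-\Eval{\psi}$ is the zero polynomial, so it vanishes for every $\sigma\colon X\to\F$, and the probability is $1$.

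For the converse, suppose $\fn(\varphi)\neq\fn(\psi)$. Then there is a boolean $\sigma_0$ with $\Pev{\sigma_0}\Eval{\varphi}=\fn(\varphi)(\sigma_0)\neq\fn(\psi)(\sigma_0)=\Pev{\sigma_0}\Eval{\psi}$. This uses that arithmetization agrees with the boolean function on boolean inputs (\cite[Prop.~1]{CAV23}); degree reduction does not change boolean values. So $q:=\Eval{\varphi}-\Eval{\psi}$ is a nonzero polynomial. Since it is multilinear in $n$ variables, its total degree is at most $n$. Schwartz--Zippel, applied with $\sigma$ drawn uniformly from $\F^n$, gives $\Pr_\sigma[\Pev\sigma q=0]\le n/|\F|$. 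This yields the bound $1-n/|\F|$.

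I expect the main obstacle to be bookkeeping rather than mathematics. First, one has to make sure both nodes are genuinely multilinear. This includes the case where a variable lies in $\Free(\varphi)\setminus\Free(\psi)$: treat both as polynomials over all of $X$, since missing variables simply have degree $0$. Second, renaming gates must not break multilinearity; this holds because the renamed-to variable does not occur in $\varphi$. Third, the ``$\F$'' in the denominator of the statement should be read as $|\F|$.
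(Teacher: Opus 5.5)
Your proof is correct and takes essentially the paper's route: the first part comes from uniqueness of the multilinear arithmetization, and the second from the total-degree bound $n$ plus Schwartz--Zippel. The only difference is that the paper simply cites \cite[Prop.~3]{CAV23} for uniqueness, whereas you reprove it with the minimal-monomial argument and get nonvanishing of $\Eval{\varphi}-\Eval{\psi}$ from a boolean witness.
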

\begin{proof}
  We need to show that the probability of error is $\leq
    \frac{n}{|\F|}$:

  We have $\varphi \equiv \psi$ iff $[\![\phi]\!] - \Eval{\psi} = p - q = 0$ by
  \cite[Prop.~3]{CAV23} (arithmetization produces a
  \emph{unique} polynomial). We further have that the total degree $d$ of both
  $p$ and $q$ is $\leq n$, because there are at most $n$ variables of degree $1$
  in any monomial. By the Schwartz-Zippel lemma:
  \begin{align*}
    p - q = 0    & \Longrightarrow \Pr_{\sigma}\left[\Pev{\sigma}(p - q) = 0\right] = \Pr_\sigma\left[\Pev\sigma p - \Pev\sigma q = 0\right] = 1                      \\
    p - q \neq 0 & \Longrightarrow \Pr_\sigma\left[\Pev\sigma(p - q) = 0\right] = \Pr_\sigma\left[\Pev\sigma p - \Pev\sigma q = 0\right] \leq  \frac{d}{|\F|}
  \end{align*}
  The first goal, when $\varphi \equiv \psi$, immediately follows.
  We conclude our second goal, when $\varphi \not \equiv \psi$, by
  \[
    \Pr_\sigma\left[\Pev\sigma([\![\varphi]\!] - \Eval{\psi} \neq 0)\right] \\
    = 1 - \Pr_\sigma\left[\Pev\sigma([\![\varphi]\!] - \Eval{\psi} = 0)\right] \\
    \geq 1 - \frac{d}{\F} \geq 1 - \frac{n}{\F}
  \]
\end{proof}

\begin{algorithm}[h]
  \caption{\texttt{Normalize($\mathcal{C}$)}}
  \label{alg:normalize}
  \begin{algorithmic}
    \REQUIRE $\mathcal{C}$ \COMMENT{Set of all four possible types of claims}
    \ENSURE A set of assignment claims reduced from claims in $\mathcal{C}$.
    \algrule
    \STATE $\mathcal{C}' = \{\}$
    \FORALL{$c \in \mathcal{C}$}
    \IF{$c = (\Pev\sigma [\![\varphi]\!] = k)$}
    \STATE $\mathcal{C}' \leftarrow \mathcal{C}' \cup \{c\}$
    \ELSIF {$c = (\Pev\sigma \fn(\varphi) = b)$}
    \STATE $\mathcal{C}' \leftarrow \mathcal{C}' \cup \{\Pev\sigma [\![\varphi]\!] = b\}$
    \ELSIF {$c = (\Sigma \fn(\varphi) = k)$}
    \STATE $\forall {x_i \in \Free(\varphi)}.~ \sigma(x_i) = 2^{-1}$
    \STATE $\mathcal{C}' \leftarrow \mathcal{C}' \cup \{\Pev\sigma [\![\varphi]\!] = k \cdot 2^{-(|\Free(\varphi)|)}\}$
    \ELSIF {$c = (\varphi \equiv \psi = 1)$}
    \STATE $\sigma \R ([x_1, \dots, x_n) \rightarrow \F)$
    \STATE $p \leftarrow \texttt{Challenge($\Pev\sigma \varphi$)},\ q \leftarrow \texttt{Challenge($\Pev\sigma \psi$)}$
    \IF {$p \neq  q$}
    \STATE $\texttt{Reject()}$
    \ENDIF
    \STATE $\mathcal{C}' \leftarrow \mathcal{C}' \cup \{\Pev\sigma [\![\varphi]\!] = p, \Pev\sigma \Eval{\psi} = q\}$
    \ELSIF {$c = (\varphi \equiv \psi = 0)$}
    \STATE $\sigma \leftarrow \texttt{ChallengeDistinct($\varphi, \psi$)}$
    \STATE $p \leftarrow \texttt{Challenge($\Pev\sigma \varphi$)},\ q \leftarrow \texttt{Challenge($\Pev\sigma \psi$)}$
    \IF {$p =  q$}
    \STATE $\texttt{Reject()}$
    \ENDIF
    \STATE $\mathcal{C}' \leftarrow \mathcal{C}' \cup \{\Pev\sigma [\![\varphi]\!] = p, \Pev\sigma \Eval{\psi} = q\}$
    \ENDIF
    \ENDFOR
    \RETURN $\mathcal{C}'$
  \end{algorithmic}
\end{algorithm}

Now we can prove that \procname{Normalize($\Claimset$)} is CEHP with probability
$1 - n/|\F|$ for an {\gbcshort} with $n$ variables.

\begin{lemma}
  For claims relating to an \gbcshort with $n$ variables, \procname{Normalize($\Claimset$)} is CEHP with probability
  at least $1 - n/|\F|$.
  \label{lemma:normalizecehp}
\end{lemma}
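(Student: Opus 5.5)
The plan is a case analysis over the claim types handled by \procname{Normalize}. For each type we show two things. First, true claims are mapped to true $\F$-evaluation claims when Prover is honest. Second, if the claim is false, then with the stated probability either Verifier rejects or at least one produced claim is false. We then combine the cases, using the fact that only one kind of replacement is probabilistic.

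The case of $\F$-evaluation claims is immediate, since they are kept unchanged. For a $\B$-evaluation claim $\Pev\sigma \fn(\varphi)=b$ we use $\Pev\sigma\Eval{\varphi} = \fn(\varphi)(\sigma)$ for boolean $\sigma$ \cite[Prop.~1]{CAV23}. Hence the replacement claim is true if{}f the original one is, with probability $1$. For a count claim we invoke \cite[Lemma~2]{CAV23}: evaluating the multilinear arithmetization at $2^{-1}$ on every variable of $\Free(\varphi)$ yields $2^{-|\Free(\varphi)|}$ times the number of satisfying assignments over $\Free(\varphi)$. Since $p \geq 2^{|X|}$ and $2$ is invertible, multiplication by $2^{-|\Free(\varphi)|}$ is injective, and the count is below $p$. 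So the count claim and the new claim are equivalent, again deterministically.

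For a disequivalence claim $\varphi\not\equiv\psi$, Prover supplies $\sigma$, and Verifier checks $p\neq q$ and keeps $\Pev\sigma\Eval{\varphi}=p$ and $\Pev\sigma\Eval{\psi}=q$. If the original claim is true, an honest Prover finds such a $\sigma$ by Lemma~\ref{lemma:equivcorrectness}, because $\Eval{\varphi}\neq\Eval{\psi}$ as polynomials. Then both returned claims are true and $p\neq q$. If it is false, uniqueness of arithmetization \cite[Prop.~3]{CAV23} gives $\Eval{\varphi}=\Eval{\psi}$. Then $p\neq q$ forces at least one of the two claims to be false, whatever $\sigma$ Prover chose, so this case also holds with probability $1$.

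The equivalence claim $\varphi\equiv\psi$ is the only probabilistic case, and it is where care is needed. Here $\sigma$ is drawn uniformly by Verifier, and Verifier checks $p=q$. If the claim is true, the honest answers coincide and both are true. If it is false, Lemma~\ref{lemma:equivcorrectness} gives $\Pev\sigma\Eval{\varphi}\neq\Pev\sigma\Eval{\psi}$ with probability at least $1-n/|\F|$. In that event $p=q$ cannot equal both true values, so one new claim is false or Verifier rejects. The remaining issue is aggregation over several claims. We must argue that CEHP only needs one false claim to survive. So we fix one false original claim and apply the bound to that claim alone; this gives the $1-n/|\F|$ bound rather than a union bound over all claims. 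This step, together with checking that degree-reduced nodes have total degree at most $n$ so that Lemma~\ref{lemma:equivcorrectness} applies, is the main point to verify.
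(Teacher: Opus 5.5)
Your proposal is correct and follows essentially the same route as the paper. $\B$-evaluation and count claims, and also disequivalence claims (which the paper files under claims not of the form $\varphi \equiv \psi$), are replaced deterministically via \cite[Prop.~1, Lemma~2]{CAV23} and uniqueness of arithmetization. The only probabilistic case is the equivalence claim, bounded by Lemma~\ref{lemma:equivcorrectness}, and you aggregate exactly as the paper's disjoint case split does, by fixing one false claim rather than taking a union bound. You only make explicit details the paper leaves implicit, such as the injectivity of scaling by $2^{-|\Free(\varphi)|}$ and why a false disequivalence claim cannot survive.
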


\begin{proof}
  \procname{Normalize} replaces $\mathbb{B}$-evaluation and count-claims without
  any error probability -- i.e.\ for those two cases, CEHP is preserved with
  probability $1$ (\cite[Prop.~1, Lemma~2]{CAV23}). The only possibility
  to replace a wrong claim by a true claim is for $\mathbb{B}$-equivalence claims
  of the form $\fn(\psi_1) - \fn(\psi_2)$ (Lemma~\ref{lemma:equivcorrectness}).

  With these facts we show that \procname{Normalize} is CEHP according to the following disjoint cases:
  \begin{align*}
     & \text{If $\mathcal{C}$ contains a \emph{false} claim of the form $\varphi = \psi$ then}  \\
     & \hspace{4em} \forall P.\ \Pr[\texttt{Normalize($\mathcal{C}$)} \text{ contains $0$ false  claims}] \leq \frac{n}{\F} \\
     & \text{If $\mathcal{C}$ contains a \emph{false} claim \emph{not} of the form $\varphi = \psi$ then               } \\
     & \hspace{4em}\forall P.\ \Pr[\texttt{Normalize($\mathcal{C}$)} \text{ contains $0$ false claims}] = 0                        \\
     & \text{If $\mathcal{C}$ contains $0$ false claims then}  \\
     & \hspace{4em}\exists P.\ \Pr[\texttt{Normalize($\mathcal{C}$)} \text{ contains $0$ false claims}] = 1                        \\
  \end{align*}
\end{proof}

\subsection{Round for Gate}
\label{subsc:roundapx}

Recall that \prot\ checks all remaining $\F$-evaluation claims by iterating over
each gate in a circuit in topological order, propagating claims to leaves via
\procname{CertifyAssignments}. This section explains of processing an individual
gate.

\subsubsection*{\procname{Merge}:} \prot\ invokes
\procname{Merge} to generate a single claim from a set of claims for
each node. As mentioned in the paper, \procname{Merge} is already used by
\CPCertify, and the proof that it preserves CEHP largely follows from
\cite[Prop.~2]{CAV23}. For completeness, we explain the procedure.
\procname{Merge($C, \psi$)} is invoked at the beginning of Round for gate
$\psi$, $C \subseteq \Claimset$ being all $\F$-evaluation claims about gate
$\psi$. As mentioned, if $\psi$ has $k$ predecessors, then $C$ contains $k$
claims in addition to any initial claims.

The function iterates over every variable $x_i$ of the assignments and, for each
claim, sends Prover $\procname{Challenge($\Pev{\sigma \setminus v_i}
\Eval{\psi}$)}$, to which it responds with a linear polynomial undefined in
$x_i$. Verifier runs sanity checks on the answer. If the polynomial is not
congruent with the original claim, Verifier rejects. Otherwise, it replaces all
claims by sampling a value for variable $x_i$ from $\F$ uniformly at random.
After going over all variables, the claims should be equal, otherwise Verifier
rejects. Verifier returns the single remaining claim.

\begin{algorithm}[h]
  \caption{\texttt{Merge($C, \psi$)}}
  \label{alg:merge}
  \begin{algorithmic}
    \REQUIRE $\psi$ \COMMENT{A node in the \gbcshort\ $\conv(\varphi)$}
    \REQUIRE $C$ \COMMENT{A set of assignment claims about $\psi$}
    \ENSURE A single assignment claim about $\psi$, reduced from $C$
    \algrule
    \FORALL{$x_i \in \Free(\psi)$}
    \STATE $C' \leftarrow \{\}$
    \FORALL{$(\Pev\sigma [\![\psi]\!] = k) \in C$}
    \STATE $p(x_i) \leftarrow \texttt{Challenge($\Pev{\sigma \setminus x_i}[\![\psi]\!]$)}$
    \IF{$p(\sigma(x_i)) \neq k$}
    \STATE \texttt{Reject()}
    \ENDIF
    \STATE $C' \leftarrow C' \cup (\Pev{\sigma \setminus x_i} [\![\psi]\!] = p(x_i))$ \COMMENT{Save polynomial in $C'$}
    \ENDFOR
    \STATE $r \R \F$
    \STATE $C \leftarrow \left\{\Pev{\sigma[x_i \rightarrow r]} [\![\psi]\!] = p(r) \mid \Pev{\sigma} [\![\psi]\!] = p(x_i) \in C' \right\}$
    \ENDFOR
    \IF{$|C| \neq 1$}
    \STATE \texttt{Reject()}
    \ENDIF
    \RETURN $\mathit{claim} \in C$
  \end{algorithmic}
\end{algorithm}

\procname{Merge} maintains CEHP with probability $1 - 2k/|\F|$ if $\Eval{\psi}$ has $k$
free variables, where $k \leq n$ ($n$ being the number of variables present
$\psi$'s \gbcshort).
\begin{lemma}
  \procname{Merge($C, \psi$)} preserves CEHP with probability at least $1 - 2n/|\F|$.
\end{lemma}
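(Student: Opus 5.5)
The plan is to analyse \procname{Merge} one variable at a time and then take a union bound over the at most $n$ variables in $\Free(\psi)$. Fix an iteration for a variable $x_i$, and let $C$ be the current set of claims about $\psi$. I will show that this iteration preserves CEHP with failure probability at most $2/|\F|$. Summing over the at most $n$ iterations then gives the bound $2n/|\F|$. The final check $|C| \neq 1$ is deterministic: identical claims collapse into one set element, and if two differing claims remain, Verifier rejects.

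Completeness is direct. If every claim $\Pev\sigma\Eval{\psi}=k$ in $C$ is true and Prover is honest, Prover answers the challenge with $p(x_i) = \Pev{\sigma\setminus x_i}\Eval{\psi}$. Then $p(\sigma(x_i)) = k$, so the check passes. Each new claim $\Pev{\sigma[x_i\rightarrow r]}\Eval{\psi} = p(r)$ is true for every $r$. For the honest Prover, all claims also end up sharing the same assignment. Each round overwrites coordinate $x_i$ with the same $r$ for every claim, so after all free variables are processed the claims agree on $\Free(\psi)$. Coordinates outside $\Free(\psi)$ do not affect $\Eval{\psi}$, so I identify claims up to them. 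The claims therefore coincide and $|C|=1$.

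For soundness, suppose some claim $\Pev\sigma\Eval{\psi}=k$ in $C$ is false, and let $p$ be Prover's answer for it. If $p(\sigma(x_i)) \neq k$, Verifier rejects. Otherwise $p \neq q$, where $q(x_i) := \Pev{\sigma\setminus x_i}\Eval{\psi}$, since $q(\sigma(x_i)) \neq k = p(\sigma(x_i))$. Here I use that $\Eval{\psi}$ is multilinear for gates of $\conv(\varphi)$ reached by \procname{Merge} (\cite[Prop.~3]{CAV23}), or at worst of degree at most two in $x_i$, which is what challenges allow. Verifier also checks that $p$ has degree at most two. Hence $p - q$ is a nonzero univariate polynomial of degree at most $2$ with at most two roots. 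The random $r$ is drawn after Prover commits to $p$, so the new claim $\Pev{\sigma[x_i\rightarrow r]}\Eval{\psi} = p(r)$ is true with probability at most $2/|\F|$. This is exactly the sumcheck-style argument used for degree-reduction gates.

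The main obstacle I expect is that a single $r$ is shared across all claims in $C$. Even so, it suffices to track one false claim: for CEHP we only need \emph{some} claim to remain false. Picking the false claim first, before $r$ is sampled, makes the per-iteration bound $2/|\F|$ independent of $|C|$. I also need to check that a false claim stays false along a chain of iterations, which is what the induction on iterations carries. A union bound over the $\le n$ variables then yields that \procname{Merge} preserves CEHP with probability at least $1-2n/|\F|$.
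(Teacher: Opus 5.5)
Your proposal is correct and follows essentially the same route as the paper. The paper's proof just defers to step (b.1) of \cite[Prop.~2]{CAV23}, which uses the same idea: a per-variable Schwartz--Zippel bound on a nonzero univariate difference of degree at most two, with $r$ sampled after Prover commits to its polynomials, a union bound over the at most $n$ variables of $\Free(\psi)$, and the final collapse check; your explicit handling of the shared $r$ (by tracking one fixed false claim) and of completeness spells out details the paper leaves to that citation.
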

The proof is part of \cite[Prop.~2]{CAV23}. See in particular the
reasoning for step (b.1).

\subsubsection*{\texttt{Propagate} [Binary Operation]:} As explained in
Section~\ref{subsubsec:round}, propagating claims about \emph{binary operation}
nodes was already present in \CPCertify.
Alg.~\ref{alg:certify_propagatebin} shows its pseudocode.
The procedure preserves CEHP with no possible error:
\begin{lemma}
  \procname{Propagate($\varphi = \psi_1 \circledast \psi_2$, $\Pev{\sigma} \Eval{\varphi} = k$)}
  preserves CEHP with probability $1$.
\end{lemma}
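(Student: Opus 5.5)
The update for a binary operation gate is deterministic, so I will show that the probability-$1$ version of CEHP holds in each direction separately. Verifier replaces the principal claim $\Pev\sigma\Eval{\varphi}=k$ by the two claims $\Pev\sigma\Eval{\psi_1}=k_1$ and $\Pev\sigma\Eval{\psi_2}=k_2$ returned by Prover. This happens only after the consistency check $k_1 \mathbin{\widehat{\circledast}} k_2 = k$ has passed; otherwise Verifier rejects. The central identity is $\Eval{\psi_1\circledast\psi_2} = \Eval{\psi_1}\mathbin{\widehat{\circledast}}\Eval{\psi_2}$ from Definition~\ref{def:arith}. Together with the fact that partial evaluation $\Pev\sigma$ is a ring homomorphism on polynomials (a total evaluation, since $\sigma$ assigns every variable), this gives
\[
\Pev\sigma\Eval{\varphi} = \Pev\sigma\Eval{\psi_1}\mathbin{\widehat{\circledast}}\Pev\sigma\Eval{\psi_2}.
\]
The homomorphism property holds because $\widehat{\circledast}$ is a fixed polynomial expression in its two arguments, built from $+$, $-$ and $\cdot$, e.g.\ $p_1+p_2-p_1p_2$.

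\emph{Completeness direction.} Suppose the principal claim is true and Prover is honest. Then Prover returns $k_i=\Pev\sigma\Eval{\psi_i}$, the displayed identity gives $k_1\mathbin{\widehat{\circledast}}k_2=k$, so the check passes and both new claims are true. Hence if every claim in $\Claimset$ was true before the update, every claim is true after it.

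\emph{Soundness direction.} Suppose the principal claim is false, i.e.\ $\Pev\sigma\Eval{\varphi}\neq k$, and let Prover answer arbitrarily with $k_1,k_2$. If the check fails, Verifier rejects, which is the desired outcome. If the check passes and both new claims were true, substituting $k_i=\Pev\sigma\Eval{\psi_i}$ into the displayed identity would give $\Pev\sigma\Eval{\varphi}=k_1\mathbin{\widehat{\circledast}}k_2=k$, a contradiction. So at least one new claim is false, and this holds for every Prover strategy and every choice of the Verifier's coins. The failure probability is therefore $0$.

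Two points need care. First, $\Pev\sigma$ must genuinely commute with $\widehat{\circledast}$ for every one of the $16$ binary operators. Each $\widehat{\circledast}$ is the unique multilinear polynomial in two variables interpolating the truth table, and evaluation commutes with polynomial expressions, so this holds. Second, the arithmetization $\Eval{\varphi}$ of a gate in $\conv(\varphi)$ must be the undegree-reduced product form $\Eval{\psi_1}\mathbin{\widehat{\circledast}}\Eval{\psi_2}$ and not its multilinearization. This is how the gate is defined, because degree reduction is performed by the separate $\delta$ gates placed above it. I expect no real obstacle beyond stating these two facts explicitly.
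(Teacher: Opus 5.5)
Your proof is correct and follows the same route as the paper. In the true-claim case the honest Prover's answers pass the check; in the false-claim case, if both new claims were true, the passed check $k_1 \mathbin{\widehat{\circledast}} k_2 = k$ would force the principal claim to be true. You make explicit the identity $\Pev\sigma\Eval{\psi_1\circledast\psi_2} = \Pev\sigma\Eval{\psi_1}\mathbin{\widehat{\circledast}}\Pev\sigma\Eval{\psi_2}$, which the paper's two-line argument uses only implicitly.
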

\begin{proof}
  Consider the claim $\Pev{\sigma} \Eval{\varphi} = k$ to be true. In that case, honest Prover can
  provide correct answers to \procname{Challenge}, s.t.\ the two new claims remain true.

  Instead consider the claim to be false. In that case, Prover must lie by
  answering with at least one wrong value $p_i \in \F =
  \procname{Challenge($\Pev{\sigma} \Eval{\psi_i}$)}$. This necessitates that
  the added claim $\Pev{\sigma} \Eval{\psi_i} = p_i$ is also wrong, preserving
  CEHP.
\end{proof}

\begin{algorithm}[h]
  \caption{\texttt{Propagate($\psi_1 \circledast \psi_2$, $\Pev\sigma [\![\psi_1 \circledast \psi_2]\!] = k$)}}
  \label{alg:certify_propagatebin}
  \begin{algorithmic}
    \STATE $p \leftarrow \texttt{Challenge($\Pev\sigma [\![\psi_1]\!]$)}$
    \STATE $q \leftarrow \texttt{Challenge($\Pev\sigma \Eval{\psi_2}$)}$
    \IF{\texttt{$p \circledast q$} $\neq k$}
    \STATE \texttt{Reject()}
    \ENDIF
    \RETURN $\{\Pev\sigma [\![\psi_1]\!] = p, \Pev\sigma \Eval{\psi_2} = q\}$
  \end{algorithmic}
\end{algorithm}

\subsubsection*{\texttt{Propagate} [Degree Reduction]:} Also taken from
\CPCertify, this case can cause Verifier to violate CEHP with small probability.
The pseudocode of the procedure is shown in
Alg.~\ref{alg:certify_propagatedegree}. The reason we can bound the probability
to at most $2/|\F|$ is explained in \cite[Prop.~2]{CAV23}. Essentially,
because of the Schwartz-Zippel Lemma, any polynomial of degree $i$ has at most
$i$ roots, which means that if we sample u.a.r from $\F$, the chance of sampling
a root is at most $i/|\F|$.

\begin{lemma}
  \procname{Propagate($\delta_{x_i} \psi, \Pev{\sigma} \Eval{\delta_{x_i} \psi} = k$)}
  preserves CEHP with probability at least $1 - 2/|\F|$.
\end{lemma}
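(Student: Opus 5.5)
We need to prove that Propagate for degree reduction preserves CEHP with probability at least $1-2/|\F|$. The argument splits into a completeness case and a soundness case.

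\emph{Completeness.} Suppose the principal claim $\Pev{\sigma}\Eval{\delta_{x_i}\psi}=k$ is true.
\begin{itemize}
\item The honest Prover answers the challenge $\texttt{Challenge}(\Pev{\sigma'}\Eval{\psi})$ with the true univariate polynomial $q(x_i) := \Pev{\sigma'}\Eval{\psi}$.
\item By the identity $\Eval{\delta_{x_i}\psi} = x_i\cdot\Eval{\psi}|_{x_i:=1} + (1-x_i)\cdot\Eval{\psi}|_{x_i:=0}$, evaluated at $\sigma$, the consistency check $\sigma(x_i)q(1)+(1-\sigma(x_i))q(0)=k$ succeeds.
\item Hence the new claim $\Pev{\sigma[x_i:=r]}\Eval{\psi}=q(r)$ is true for every $r$.
\end{itemize}
Before this, we must justify that $q$ has degree at most two in $x_i$. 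The child of a degree-reduction gate in $\conv(\varphi)$ is either another degree-reduction gate or a binary gate whose operands are multilinear in $x_i$. In both cases the degree in $x_i$ is at most $2$. A projection or renaming child preserves multilinearity, so the same bound holds there.

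\emph{Soundness.} Suppose the principal claim is false, i.e.\ $\Pev{\sigma}\Eval{\delta_{x_i}\psi}\neq k$. Let $p$ be the answer a (possibly dishonest) Prover sends, and assume Verifier has checked that $p$ has degree at most $2$; if not, Verifier rejects. If $p$ fails the consistency check, Verifier rejects and there is nothing to show. If $p$ passes, then
\[
\sigma(x_i)p(1)+(1-\sigma(x_i))p(0)=k\neq \sigma(x_i)q(1)+(1-\sigma(x_i))q(0),
\]
so $p\neq q$ as polynomials.

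The difference $p-q$ is therefore a nonzero univariate polynomial of degree at most $2$ over $\F$. It has at most two roots, by the univariate case of the Schwartz--Zippel lemma already used in Lemma~\ref{lemma:equivcorrectness}. Since $r$ is sampled uniformly from $\F$ after Prover has committed to $p$, we get $\Pr_r[p(r)=q(r)]\le 2/|\F|$. Consequently, with probability at least $1-2/|\F|$ the new claim $\Pev{\sigma[x_i:=r]}\Eval{\psi}=p(r)$ is false.

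The only subtle step is the degree bound on $q$, which is what makes the at-most-two-roots argument valid. If that bound were unavailable, I would instead bound the $x_i$-degree of $\Eval{\psi}$ by the structure of $\conv(\varphi)$, where each inner degree reduction restores multilinearity. The rest mirrors \cite[Prop.~2]{CAV23}.
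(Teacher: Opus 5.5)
Your proof is correct and follows the same argument as the paper. If the claim is true, the honest Prover's polynomial $\Pev{\sigma'}\Eval{\psi}$ passes the consistency check. If the claim is false, any answer $p$ that passes the check must differ from $\Pev{\sigma'}\Eval{\psi}$, so a uniformly random $r$ lands on one of the at most two roots of the difference with probability at most $2/|\F|$. You also make explicit two points the paper's proof leaves implicit, and the root count needs both: Verifier must reject answers of degree above two, and the true polynomial has $x_i$-degree at most two because the operands of every binary gate in $\conv(\varphi)$ are multilinear.
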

\begin{proof}
  If the claim $\Pev{\sigma} \Eval{\delta_{x_i} \psi} = k$ is correct, then honest Prover
  answers with the correct polynomial $p(x_i) = \Pev{\sigma \setminus x_i} \Eval{\psi}$,
  so the added claim remains true.

  If the claim is false, then $x_i \cdot \Pev{\sigma \setminus x_i} \Eval{\psi}
  + (1 - x_i) \cdot \Pev{\sigma \setminus x_i} \Eval{\psi}$ is \emph{not} equal
  to $k$. So Prover must supply a wrong polynomial $p(x_i)$ unless Verifier
  immediately rejects the claim. Due to the Schwartz-Zippel Lemma, the added
  claim $\Pev{x_i \rightarrow r} \Eval{\psi} = p(r)$ remains false with
  probability at least $1 - 2/|\F|$.
\end{proof}

\begin{algorithm}[h]
  \caption{\texttt{Propagate($\delta_{x_i} \psi$, $\Pev\sigma [\![\delta_k \psi]\!] = k$)}}
  \label{alg:certify_propagatedegree}
  \begin{algorithmic}
    \STATE $p(x_i) \leftarrow \texttt{Challenge($\Pev{\sigma \setminus x_i} \Eval{\psi}$)}$
    \STATE $q(x_i) = x_i \cdot p(1) + (1 - x_i) \cdot p(0)$
    \IF{$q(\sigma(x_i)) \neq k$}
    \STATE \texttt{Reject()}
    \ENDIF
    \STATE $r \R \F$
    \RETURN $\{ \Pev{\sigma[x_i \rightarrow r]}\Eval{\psi} = p(r) \}$
  \end{algorithmic}
\end{algorithm}

\subsubsection*{\texttt{Propagate} [Projection]:} Also part of \CPCertify,
propagating claims about projection gates trivially preserves CEHP with
probability $1$. For completeness, pseudocode is presented in
Alg.~\ref{alg:certify_propagateproject}.

\begin{algorithm}[h]
  \caption{\texttt{Propagate($\pi_{[x_i := b]} \psi$, $\Pev\sigma [\![\pi_{[x_i := b]} \psi]\!] = k$)}}
  \label{alg:certify_propagateproject}
  \begin{algorithmic}
    \RETURN $\{\Pev{\sigma[x_i \rightarrow [\![b]\!]]}\psi = k\}$
  \end{algorithmic}
\end{algorithm}

\subsubsection*{\texttt{Propagate} [Renaming]:} This case for a round of \prot\
is new, as \CPCertify\ did not previously support renaming gates. The procedure
also preserves CEHP with no possibility for error. This is ensured by Verifier
and \lib\ statically on the \gbcshort\ it generates from a trace of \checker. The idea
is that renaming $x_i$ to $x_t$ via a renaming gate $\psi[x_t/x_i]$ is only allowed
if the variable $x_t$ is not in $\Free(\psi)$, which means that $\Eval{\psi}$
has degree $0$ for variable $x_t$.

\begin{lemma}
  \procname{Propagate($\psi[x_t/x_i], \Pev{\sigma} \Eval{\psi[x_t/x_i]} = k$)}
  preserves CEHP with probability $1$.
\end{lemma}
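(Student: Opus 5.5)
The plan is to show that the old claim and the new claim have exactly the same truth value. Then CEHP holds with probability $1$: a true claim becomes a true claim, and a false claim becomes a false claim. The whole argument rests on one polynomial identity.

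The Verifier replaces $\Pev\sigma\Eval{\psi[x_t/x_i]} = k$ by $\Pev{\sigma'}\Eval{\psi} = k$, where $\sigma' := \sigma[x_i := \sigma(x_t)]$. This is the renaming step of Section~\ref{subsubsec:round}, written with the variable names of the lemma. So it suffices to prove
\[
\Pev\sigma \Eval{\psi[x_t/x_i]} = \Pev{\sigma'} \Eval{\psi}
\]
for every $\sigma \colon X \to \F$.

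\textbf{Step 1.} Unfold the definition. By Definition~\ref{def:arith}, $\Eval{\psi[x_t/x_i]} = \Eval{\psi}[x_t/x_i]$, that is, $\Eval{\psi}$ with every occurrence of $x_i$ replaced by $x_t$.

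\textbf{Step 2.} Use the side condition. Write $p := \Eval{\psi}$. Since $x_t \notin \Free(\psi)$, the arithmetization is built only from variable gates reachable from $\psi$, so $x_t$ does not occur in $p$. Hence $p = p(x_i, \bar y)$ with $\bar y$ ranging over $X \setminus \{x_i, x_t\}$, and $p[x_t/x_i] = p(x_t, \bar y)$.

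\textbf{Step 3.} Evaluate both sides. Evaluating $p(x_t,\bar y)$ under $\sigma$ gives $p(\sigma(x_t), \sigma(\bar y))$. Evaluating $p(x_i,\bar y)$ under $\sigma'$ gives $p(\sigma'(x_i), \sigma'(\bar y)) = p(\sigma(x_t), \sigma(\bar y))$, because $\sigma'$ agrees with $\sigma$ off $x_i$. The two values are equal, which proves the identity.

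\textbf{Step 4.} Conclude CEHP. If the old claim is true, the new claim is true with the same $k$; the honest Prover is not even queried, so completeness holds. If the old claim is false, the new claim is false deterministically, so no Prover can make a false claim become true.

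The only delicate point is Step 2. Without the freshness condition, $x_t$ would already occur in $p$. The substitution would then merge the two variables, and the value would depend on both $\sigma(x_i)$ and $\sigma(x_t)$ in a way that $\sigma'$ does not capture. I will state explicitly that freshness is checked statically when the GBC is built, as the paper notes. I will also remark that degree reduction plays no role here: the identity is purely syntactic and holds for arbitrary polynomials.
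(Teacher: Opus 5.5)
Your proposal is correct and follows the paper's proof: both unfold $\Eval{\psi[x_t/x_i]} = \Eval{\psi}[x_t/x_i]$ and use that $x_t$ does not occur in $\Eval{\psi}$. From this, evaluating under $\sigma$ and under $\sigma[x_i \rightarrow \sigma(x_t)]$ gives the same value, so the new claim is true exactly when the old one is. One small fix to Step 2: derive the absence of $x_t$ from $\Eval{\psi}$ directly from the freshness side condition on renaming gates, as the paper does, rather than from reachability of variable gates, because a nested renaming gate inside $\psi$ can introduce $x_t$ into the polynomial even when no $x_t$ variable gate is reachable.
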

\begin{proof}
  If the claim $\Pev{\sigma} \Eval{\psi[x_t/x_i]} = k$ is true, then
  $\Eval{\psi[x_t/x_i]} = \Eval{\psi}[x_t/x_i]$. Because $\Eval{\psi}$ does not
  contain variable $x_t$, $\Eval{\psi} = \Eval{\psi[x_t/x_i]}[x_i/x_t]$ and the
  claim $\Pev{\sigma[x_i \rightarrow \sigma(x_t)]} \Eval{\psi}$ remains true.

  Conversely, if the claim is false, then due to the same reasoning the added
  claim remains false.
\end{proof}

\begin{algorithm}[h]
  \caption{\texttt{Propagate($\phi[x_t/x_i]$, $\Pev\sigma [\![\phi[x_t/x_i]]\!] = k$)}}
  \label{alg:certify_propagaterename}
  \begin{algorithmic}
    \STATE{$\sigma' = \sigma[x_i \rightarrow \sigma(x_t)]$}
    \RETURN $\{\Pev{\sigma'} \phi = k\}$
  \end{algorithmic}
\end{algorithm}

\section{Generating eBDDs: \procname{ApplyEBDD} vs.\ \procname{ComputeEBDDs}}
\label{apx:newname}

In the paper we have given a high-level introduction to our improvements to \blic's \procname{ComputeEBDDs}
procedure, effectively allowing for a global unique BDD table. In this
section we present both \procname{ComputeEBDDs} and our improved version \procname{ApplyEBDD}.

\subsection{The algorithm \procname{ComputeEBDDs}}

\begin{figure}[ht]
  \includegraphics[width=\textwidth]{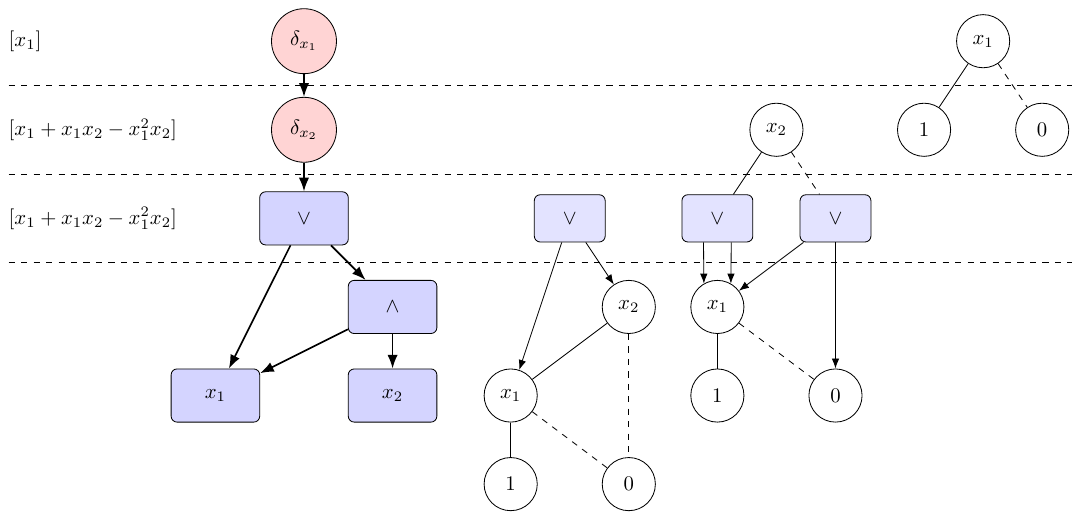}
  \caption{GBC and arithmetization of nodes (left) and eBDDs with equal arithmetization (right).}
  \label{fig:ebdd}
\end{figure}

\cite{CAV23} introduced the data structure of \emph{extended BDDs}.
Figure~\ref{fig:ebdd} illustrates an example of how eBDDs are used to encode
polynomials of an underlying circuit. On the left is the \gbcshort\ $\varphi =
\delta_1 \delta_2 \psi$ with $\psi =  v_1 \vee \left(v_1 \wedge v_2\right)$
($\conv$ would also add degree reductions to $v_1 \wedge v_2$). To the right of
the \gbcshort\ are three eBDDs for $\Eval{\psi}$, $\Eval{\delta_2 \psi}$, and
$\Eval{\delta_1 \delta_2 \psi}$. The polynomial equal to both each \gbcshort\ node and
its eBDD is displayed in brackets on the left.

\begin{figure}[ht]
  \centering
  \includegraphics[width=\textwidth]{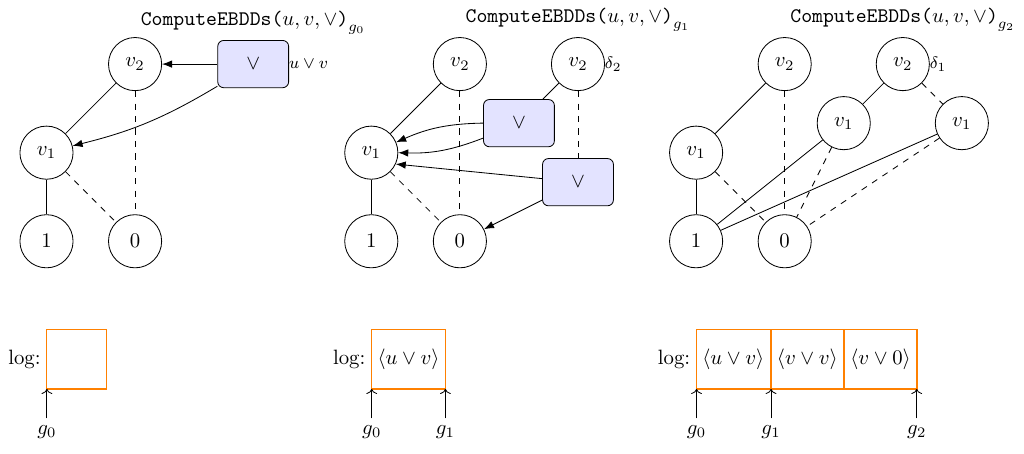}
  \caption{Previous computation of eBDDs using mutable nodes and a log.}
  \label{fig:ebddcompute}
\end{figure}

For each gate $\psi = \psi_1 \circledast \psi_2$ in a circuit $\conv(\varphi)$
with BDDs $u_1,\ u_2$ for $\psi_1,\ \psi_2$, algorithm $\procname{ComputeEBDDs}$
computed all eBDDs for polynomials in the sequence $g_0 = \Eval{\psi},\
g_1 = \delta_n \Eval{\psi},\ \cdots,\ g_n = \delta_1 \delta_2
\dotsb \Eval{\psi}$ in the same time as \procname{Apply($u_1, u_2,
\circledast$)}. As state in section~\ref{subsec:shortcomings}, this was achieved
using in-place mutation and an \emph{undo-log}. Prover would access each of the
$g_i$ eBDDs by undoing modifications from the log. Figure~\ref{fig:ebddcompute}
sketches the eBDDs produced by \procname{ComputeEBDDs}. The figure shows the
sequence of eBDDs $g_0 = \Eval{\psi_1 \vee \psi_2},\ g_1 = \delta_2 g_0,\ g_2 =
\delta_1 g_1$. As Prover modifies nodes, it writes the modifications to its
undo-log. To access any eBDD $g_i$ in the sequence, Prover simply undoes the
modifications from the log up to its index.

Using standard BDD algorithms, \cite{CAV23}'s Prover could evaluate each
BDD's polynomial $\Eval{g}$ in linear time $|g|$. In fact, evaluating all
polynomials $\Eval{g_0}, \dotsc, \Eval{g_n}$ of a sequence corresponding to
degree reductions could \emph{also} be done in time $|g|$ (as opposed to $n|g|$)
using a cache. The details of said cache are technical, but essentially exploits
the fact that the total number of modifications between the BDDs are of size
$|g|$.

\subsection{The algorithm \procname{ApplyEBDD}}

To prove the complexity and correctness of our novel method for computing eBDDs
as stated in Proposition~\ref{prop:computebdd}, we first introduce eBDD
\emph{invariants} that \procname{ApplyEBDD} maintains. Recall from the main text
that the three types of eBDD nodes by \lib\ are \emph{final}, \emph{standard},
and \emph{binary operation} nodes.

\begin{definition}[\procname{ApplyEBDD($f, g, \circledast$)} {[Invariant]}] The
 returned eBDD $r$ and any child $w$ adhere to the following invariants in
 addition to BDD invariants (see.\ Appendix~\ref{apx:bdds}):
 \begin{enumerate}
   \item $[\![r]\!] = [\![f]\!] \circledast [\![g]\!]$.
   \item If $w$ is a binary operation node, all its children are final BDD
         nodes.
    \item $\link(w) = \langle v_k, l', r'\rangle$ is either a standard eBDD or
    final BDD node that fulfills $v_k = \var(w)$ and $[\![\link(w)]\!] = v_k
    \cdot \pi_{[v_k := 1]} [\![w]\!] + (1 - v_k) \cdot \pi_{[v_k := 0]}
    [\![w]\!]$.
   \item $\link(\link(w)) = u$ is a \emph{final} BDD. We call BDDs final if
         they do not contain any binary operation nodes and thus represent
         standard BDDs. $[\![u]\!] = \delta_1 \dotsb \delta_n [\![r]\!]$ for $n$
         being the number of variables in $[\![w]\!]$.
   \item $w$ is a final BDD iff $\link(w) = w$.
 \end{enumerate}
 \label{def:invariants}
\end{definition}

Next, we prove using induction on \procname{ApplyEBDD}'s recursive call graph
that it maintains eBDD invariants.

\begin{lemma}[\procname{ApplyEBDD($f, g, \circledast$)} {[correctness]}] Given
 final eBDD arguments $f, g$ that adhere to the invariants
 (Def.~\ref{def:invariants}), the result $r = \procname{ApplyEBDD($u, w,
 \circledast$)}$ also adheres to said invariants.
 \label{lemma:computeebdds}
\end{lemma}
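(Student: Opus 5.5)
We argue by strong induction on the recursive call graph of \ApplyEBDD$(f,g,\circledast)$, using the depth of recursion (equivalently, the pair of variable indices of the arguments, which strictly increases along recursive calls because BDDs are ordered). The induction hypothesis is that every value returned by a recursive call, together with all nodes reachable from it, satisfies invariants 1--5 of Definition~\ref{def:invariants}. Memoization via the global computation cache does not affect the argument: a cached result was produced by an earlier call with the same parameters, so it already satisfies the invariants. Since \ApplyEBDD\ never mutates existing nodes, invariants established for already existing nodes (in particular the final argument BDDs $f,g$, for which $\link(w)=w$ by invariant 5) remain valid.

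\textbf{Base case.} If $f\in\{0,1\}$ or $g\in\{0,1\}$, the result $f\circledast g$ is computed by the terminal rule. Here I would argue that it is a final BDD, namely a constant or a Boolean combination with a constant (negation, identity, or constant), which by the hypothesis on final arguments is itself final. Then $\link(r)=r$, and invariants 1, 3, 4 and 5 hold trivially, since final BDD polynomials are multilinear and hence fixed by every $\delta$. Invariant 2 is vacuous.

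\textbf{Inductive step.} Let $x_i$ be the top variable and let $l,r$ be the recursive results on the cofactors. By hypothesis these are binary-operation nodes (or final) satisfying the invariants. The step then checks the three freshly created nodes in turn.
\begin{enumerate}
\item The node $b=\langle f\circledast g\rangle$. Invariant 2 holds because $f,g$ are final. Invariant 1 holds by definition of $\Eval{\cdot}$.
\item The node $\mathit{node}=\langle x_i,l,r\rangle$. Its polynomial is $x_i\Eval{r}+(1-x_i)\Eval{l}$. Invariant 1 on $l,r$ gives $\Eval{l}=\pi_{[x_i:=0]}(\Eval f\,\widehat\circledast\,\Eval g)$, and similarly for $r$ with $x_i:=1$, using that $f,g$ are multilinear and their cofactors w.r.t.\ the top variable are exactly the children. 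Hence $\Eval{\mathit{node}}=x_i\pi_{[x_i:=1]}\Eval b+(1-x_i)\pi_{[x_i:=0]}\Eval b=\delta_{x_i}\Eval b$, which is invariant 3 for $b$.
\item The node $\mathit{final}=\Reduce(\langle x_i,\link(\link(l)),\link(\link(r))\rangle)$. By invariant 4 on $l,r$, its children are final BDDs representing the full degree reductions of the cofactor polynomials. Since $\delta_{x_j}$ commutes with $\pi_{[x_i:=c]}$ for $j\neq i$, we get $\Eval{\mathit{final}}=\delta_{x_i}\delta_{\neq i}\Eval b$, i.e.\ the full degree reduction, which establishes invariant 4. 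Correctness of \Reduce\ (as in Appendix~\ref{apx:bdds}) makes $\mathit{final}$ a reduced BDD, and we set its link to itself, giving invariant 5.
\end{enumerate}

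The main obstacle is invariant 3 for the standard node $\mathit{node}$, and more generally the mixing of versions. The children $l,r$ are binary-operation nodes, not their degree-reduced versions. We must therefore check that $\link(\mathit{node})=\mathit{final}$ has the form required by invariant 3 (variable $x_i$, polynomial $\delta_{x_i}$ applied to $\Eval{\mathit{node}}$). This requires showing that applying $\delta$ to the remaining variables below $x_i$ is exactly what following links in the children achieves, together with the fact that $\Reduce$ may collapse the top node when $\link(\link(l))=\link(\link(r))$. In that collapsed case the variable condition $v_k=\var(w)$ has to be read modulo reduction, and I would handle it as a separate subcase, noting that the polynomial identity still holds because the reduced node represents the same multilinear polynomial.
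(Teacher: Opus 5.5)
Your steps 1--3 follow the paper's proof. It is the same induction over the recursion of \ApplyEBDD, with invariant 1 for $b$ by definition. Invariants 2 and 3 for $b$ come via $\link(b)=\mathit{node}$ and $x_i\,\pi_{[x_i:=1]}\Eval{b}+(1-x_i)\,\pi_{[x_i:=0]}\Eval{b}=\delta_{x_i}\Eval{b}$. Invariants 4 and 5 come via $\link(\mathit{node})=\mathit{final}$, using invariant 4 for $l,r$ and the ordering $\var(l),\var(r)<x_i$. Your base case is actually more accurate than the paper's, which wrongly asserts that the result is always $0$ or $1$.

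Your closing paragraph, however, is wrong, and you should drop it rather than try to carry it out. Invariant 3 is not an obligation for the standard node $\mathit{node}$, and read that way it is false. Take $f=g=\langle x_2,0,u\rangle$ with $u=\langle x_1,0,1\rangle$ and $\circledast=\wedge$:
\begin{itemize}
\item the recursive call on the $1$-cofactors returns $r=\langle u\wedge u\rangle$ with $\Eval{r}=x_1^2$;
\item hence $\Eval{\mathit{node}}=x_2x_1^2$, and so $\delta_{x_2}\Eval{\mathit{node}}=x_2x_1^2$;
\item but $\mathit{final}=\langle x_2,0,u\rangle$, so $\Eval{\mathit{final}}=x_2x_1$.
\end{itemize}

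Invariant 3 is only meant for binary-operation nodes, which your step 2 already handles for $b$. It holds trivially for final nodes. For $\mathit{node}$, the only thing needed is that $\link(\link(\mathit{node}))=\mathit{final}$ is the full degree reduction. That follows from your step 3, since $\delta_{x_1}\cdots\delta_{x_n}\Eval{\mathit{node}}=\delta_{x_1}\cdots\delta_{x_n}\delta_{x_i}\Eval{b}=\delta_{x_1}\cdots\delta_{x_n}\Eval{b}$. Your sketched fix, following links in the children, would in fact prove this full-reduction identity, not the $\delta_{x_i}$ identity you stated as the goal.

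The \Reduce-collapse subcase is likewise moot. The variable condition $v_k=\var(w)$ only concerns $\link(b)=\mathit{node}$, which is built unreduced as $\langle x_i,l,r\rangle$. Invariant 4 imposes no condition on the top variable of $\mathit{final}$.
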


\begin{proof}
  We proceed by f-induction on \texttt{ComputeEBDDs} (termination is proven by
  lemma~\ref{lemma:complexity}). Next we analyze the possible forms that the
  argument BDDs $f$ and $g$ may have according to the following cases:

  \paragraph{Base $f \in \{0, 1\}$ or $g \in \{0, 1\}$}: In the base case Prover
    computes the eBDD $r = f \circledast g$ directly according to the truth
    table of boolean operation $\circledast$. The result is either the final BDD
    node $1$ or $0$, which fulfills $\Eval{r} = \Eval{f} \circledast \Eval{g}$
    (first invariant). While not explicitly stated in the pseudocode, Prover
    also sets $\link(\link(r)) = \link(r) = r$, such that invariants three to
    five hold. The second invariant is vacuously true.

  \paragraph{Step $f = \langle f_i, f_l, f_r \rangle$ and $g = \langle g_i, g_l,
    g_r \rangle$}: \procname{ApplyEBDD} first projects both $f$ and $g$ according
    to variable $v_i = \max(f_i, g_i)$: $f_l = f|_{v_i \leftarrow 0}, f_r =
    f|_{v_i \leftarrow 1}, g_l = g|_{v_i \leftarrow 0}, g_r = g|_{v_i \leftarrow
    1}$. It then computes new child BDD nodes $l$ and $r$ via recursion. We use
    the induction hypothesis to derive that $l$ and $r$ adhere to the eBDD
    invariants, giving us:
  \begin{align*}
    [\![l]\!]               & = \pi_{[v_i := 0]} [\![f]\!] \circledast \pi_{[v_i := 0]} [\![g]\!]                                 \\
    [\![r]\!]               & = \pi_{[v_i := 1]} [\![f]\!] \circledast \pi_{[v_i := 1]} [\![g]\!]                                 \\
    [\![\link(l)]\!]        & = \var(l) \cdot \pi_{[\var(l) := 1]} [\![l]\!] + (1 - \var(l)) \cdot \pi_{[\var(l) := 0]} [\![l]\!] \\
    [\![\link(r)]\!]        & = \var(r) \cdot \pi_{[\var(r) := 1]} [\![r]\!] + (1 - \var(r)) \cdot \pi_{[\var(r) := 0]} [\![r]\!] \\
    [\![\link(\link(l))]\!] & = \delta_1 \dots \delta_{\var(l)} [\![l]\!]                                                         \\
    [\![\link(\link(r))]\!] & = \delta_1 \dots \delta_{\var(r)} [\![r]\!]                                                         \\
  \end{align*}

  The returned eBDD $b$ inherently adheres to the first invariant since
  $\Eval{b} = \Eval{f} \circledast \Eval{g}$. $\link(b) = \mathit{node} =
  \langle v_i, l, r \rangle$, which is a standard eBDD that fulfills the following
  equation, thus satisfying invariants two and three.
  \begin{multline*}
    [\![\mathit{node}]\!] = v_i \cdot [\![r]\!] + (1 - v_i) \cdot [\![l]\!] \\
    = v_i \cdot (\pi_{[v_i := 1]} [\![u]\!] \circledast \pi_{[v_i:=
      1]} [\![w]\!]) + (1 - v_i) \cdot (\pi_{[v_i := 0]} [\![u]\!] \circledast
    \pi_{[v_i:= 0]} [\![w]\!])\\
    = v_i \cdot (\pi_{[v_i := 1]} ([\![u]\!] \circledast [\![w]\!]))
    + (1 - v_i) \cdot (\pi_{[v_i := 0]} ([\![u]\!] \circledast
      [\![w]\!])) \\
    = v_i \cdot (\pi_{[v_i := 1]} [\![b]\!]) + (1 - v_i) \cdot (\pi_{[v_i := 0]} [\![b]\!])
  \end{multline*}

  The last two invariants hold because $\link(\mathit{node}) = \mathit{final}$ and
  $\mathit{final} = \langle v_i, \link(\link(l)), \link(\link(r))\rangle$ fulfills:
  \begin{multline*}
    [\![\mathit{final}]\!] = v_i \cdot (\delta_1 \dots \delta_{\var(r)} [\![r]\!]) + (1 - v_i) \cdot (\delta_1 \dots \delta_{\var(l)} [\![l]\!])\\
    \stackrel{1}{=} v_i \cdot (\delta_1 \dots \delta_{i - 1} \pi_{[v_i := 1]} [\![b]\!]) + (1 - v_i) \cdot (\delta_1 \dots \delta_{i - 1} \pi_{[v_i := 0]} [\![b]\!]) \\
    = \delta_1 \dots \delta_k [\![b]\!]
  \end{multline*}
  Where equation $(1)$ uses the facts that eBDDs are ordered, $\var(r) < v_i$
  and $\var(l) < v_i$. Note that \texttt{Reduce} does not change a node's
  arithmetization (it either returns the node or its child, if both children are
  equal). 

  The remaining structural invariants of standard BDDs are guaranteed by the
  induction hypothesis and \procname{Reduce} in the same manner as in the
  well-known \procname{Apply} procedure.
\end{proof}

Figure~\ref{fig:applyandnew} shows that \procname{ApplyEBDD} follows the same
recursive structure as \procname{Apply}. Because the arguments given by Prover
are final BDDs, it is easy to bound the runtime and space usage of our new
algorithm according to the well-known bounds of \procname{Apply}.

\begin{lemma}[\procname{ApplyEBDD($f, g, \circledast$)} {[complexity]}] Given
 final BDDs $f, g$, if the results of all recursive invocations of
 \procname{ApplyEBDD} are cached according to their arguments, then (a)
 \procname{ApplyEBDD($f, g, \circledast$)} takes the same time as
 \procname{Apply($f, g, \circledast$)} up to a constant factor, and (b) creates
 at most $3 \cdot |f||g|$ eBDD nodes.
 \label{lemma:complexity}
\end{lemma}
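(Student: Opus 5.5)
The plan is a simulation argument: I will show that the recursive call tree of \ApplyEBDD$(f,g,\circledast)$ coincides with that of \Apply$(f,g,\circledast)$, and then compare the work done per call. The first step is to check from the pseudocode in Figure~\ref{fig:applyandnew} that both algorithms take the same base case, perform the same variable comparison, and make recursive calls on the same cofactor pairs $(f_l,g_l)$ and $(f_r,g_r)$. These cofactors are determined by the final BDD arguments alone. Because the arguments $f,g$ are final BDDs, every recursive call also receives final BDD nodes as arguments: the children of final nodes are final. This holds because \ApplyEBDD\ never recurses on the binary-operation or standard nodes it creates. Consequently, if results are cached by argument pair $(f',g',\circledast)$, exactly the same set of argument pairs is expanded in both algorithms. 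The caching is legitimate because \ApplyEBDD\ does not mutate existing nodes, so a cached result remains valid.

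Next I bound the number of distinct expanded calls. By the standard argument for \Apply\ recalled in Appendix~\ref{apx:bdds}, every argument pair consists of a node of $f$ and a node of $g$, since each argument is a sub-BDD of $f$ or of $g$. With memoization, at most $|f|\cdot|g|$ calls are expanded; every other call is a cache hit or a base case costing $O(1)$.

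Per expanded call, \Apply\ performs $O(1)$ work plus one \Reduce, that is, one lookup or insertion in the unique table. \ApplyEBDD\ performs the same comparisons, two link dereferences $\link(\link(l))$ and $\link(\link(r))$, one \Reduce\ on the final node, and creates two further nodes (the standard node $\langle v_i,l,r\rangle$ and the binary-operation node $b$) with link assignments. Each of these steps is $O(1)$; the link dereferences are $O(1)$ by invariants 3--5 of Definition~\ref{def:invariants}, established in Lemma~\ref{lemma:computeebdds}. Hence each expanded call costs at most a constant factor more than in \Apply, which gives (a). For (b), each expanded call creates at most three nodes: at most one final node from \Reduce, plus \emph{node} and $b$. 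Base cases create no new nodes beyond the constants $0,1$. Multiplying by the $|f||g|$ bound on expanded calls yields at most $3|f||g|$ nodes.

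The main obstacle is making the identification of call trees rigorous. Specifically, the recursion must depend only on the final-BDD structure of the arguments and never on the auxiliary nodes, so that the memoization key set is literally the same as for \Apply. The pseudocode's asymmetric branch (the \texttt{ELSIF} compares $v_w<v_u$) looks like a typo for $g_i<f_i$. I would treat it as the symmetric case, matching \Apply, before carrying out the argument.
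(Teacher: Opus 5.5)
Your proposal is correct and follows the same route as the paper's proof. Both identify the recursion tree of \ApplyEBDD\ with that of \Apply\ (arguments are always final sub-BDDs of $f$ and $g$), observe constant work per expanded call, and bound node creation by three nodes (\emph{node}, \emph{final}, $b$) for each of the at most $|f||g|$ expanded calls. You just spell out more explicitly why the memoization keys coincide and why caching is sound, which the paper treats as obvious.
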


\begin{proof}
  From the fact that $f, g$ are final eBDD nodes we have that they comply to all
  invariants of ordered and reduced BDD nodes. It is obvious that the recursion
  tree of \texttt{ApplyEBDD($f, g, \circledast$)} is equal to that of
  \texttt{Apply($f, g, \circledast$)}. Also, each recursion of \texttt{ApplyEBDD}
  (excluding recursive calls) takes constant time, yielding proposition (a). 
  For (b) note from (a) that we have an upper bound of $|u||w|$ recursive
  invocations, and in each invocation we create at most three nodes:
  $\mathit{node},\ \mathit{final}$, and $b$.
\end{proof}

\section{Implementing \procname{Challenge} and \procname{ChallengeDistinct}}
\label{apx:challengeprover}

\begin{algorithm}[ht]
  \caption{\texttt{AnswerChallenge($w, \sigma, k$)}}
  \label{alg:answerchallenge}
  \begin{algorithmic}
    \REQUIRE $w$ \COMMENT{An eBDD calculated by \procname{ApplyEBDD}}
    \REQUIRE $\sigma$ \COMMENT{Partial assignment}
    \ENSURE Polynomial $\Pev\sigma \left(\delta_k \dotsb \delta_{\var(w)} [\![w]\!] \right)$
    \algrule
    \IF{$w \in \{0, 1\}$}
    \RETURN $[\![w]\!]$
    \ENDIF
    \IF{$w = \langle u \circledast v \rangle$ \AND $\var(w) \geq k$}
    \STATE $w \leftarrow \link(w)$
    \ELSIF{$w = \langle u \circledast v \rangle$}
    \RETURN $[\![\texttt{AnswerChallenge($u, \sigma, k$)} \circledast \texttt{AnswerChallenge($v, \sigma, k$)}]\!]$
    \ENDIF
    \STATE $\langle v_i, l, r \rangle = w$
    \IF{$\sigma(v_i) = c$}
    \RETURN $c \cdot \texttt{AnswerChallenge($r, \sigma, k$)} + (1 - c) \cdot \texttt{AnswerChallenge($l, \sigma, k$)}$
    \ELSE
    \RETURN $v_i \cdot \texttt{AnswerChallenge($r, \sigma, k$)} + (1 - v_i) \cdot \texttt{AnswerChallenge($l, \sigma, k$)}$
    \ENDIF
  \end{algorithmic}
\end{algorithm}

The main text states that, given two GBC nodes $\psi_1, \psi_2$ with BDDs $u_1,
u_2$, Prover computes the series of eBDDS $\Eval{g_0} = \Eval{\psi_1 \circledast
\psi_2},\ \Eval{g_{i + 1}} = \delta_{n - i} \dotsb \delta_n \Eval{\psi_1
\circledast \psi_2}$ using $\procname{ApplyEBDD($u_1, u_2, \circledast$)}$ and
uses them to answer challenges sent by Verifier. There are two types of
challenges Prover needs to answer: partial evaluation challenges and distinct
assignment challenges. In both cases Prover traverses BDDs corresponding to the
requested arithmetizations. Each BDD $g_{i + 1}$ can be accessed from the eBDD
root, returned by \procname{ApplyEBDD}, by interpreting each node $u$ with
$\var(u) > n - i$ as $\link(u)$, and $g_0 = \link(\link(w))$. This strategy is
implemented in \procname{AnswerChallenge($u, \sigma, k$)}
(Alg.~\ref{alg:answerchallenge}).

\begin{lemma}[\texttt{AnswerChallenge} {[Correctness]}] Given an eBDD $w$ adhering
 to the invariants of def.~\ref{def:invariants} and $0 \leq k$,
 $\procname{AnswerChallenge($w, \sigma, k$)} = \Pev\sigma \delta_k \dotsb
 \delta_{\var(w)} [\![w]\!]$.
 \label{lemma:evaluateebddcorrectness}
\end{lemma}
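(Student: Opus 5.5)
We argue by structural induction on the eBDD $w$, following the recursive calls of \procname{AnswerChallenge}, which terminate because each call descends to a child, a link, or an operand, and links only ever point from a binary operation node to a standard node or from a standard node to a final node (Definition~\ref{def:invariants}, invariants 3--5). The invariant we carry through the induction is exactly the statement of the lemma, for arbitrary partial $\sigma$ and arbitrary $k$. We use throughout that partial evaluation commutes with degree reduction of a different variable, and that $\delta_x$ and $\pi_{[x:=a]}$ are the identity when $x$ does not occur.

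\emph{Base case} $w\in\{0,1\}$: the polynomial is constant, so every $\delta$ and every $\Pev\sigma$ is the identity, matching the returned $\Eval{w}$.

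\emph{Decision-node case} $w=\langle v_i,l,r\rangle$ (standard or final). Since $\Eval{w}=v_i\Eval{r}+(1-v_i)\Eval{l}$ is linear in $v_i$, and $v_i$ does not occur in $l$ or $r$ by ordering, we have
\[\delta_k\dotsb\delta_{\var(w)}\Eval{w}=v_i\cdot\delta_k\dotsb\delta_{\var(r)}\Eval{r}+(1-v_i)\cdot\delta_k\dotsb\delta_{\var(l)}\Eval{l}.\]
Indeed, $\delta_{v_i}$ acts trivially, and the reductions for variables strictly between $\var(l)$ and $v_i$ (respectively $\var(r)$ and $v_i$) are vacuous on the child. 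Applying $\Pev\sigma$, it either substitutes $c=\sigma(v_i)$ or leaves $v_i$ symbolic, which are exactly the two branches of the algorithm. The induction hypothesis on $l$ and $r$ then closes this case.

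\emph{Binary-operation case} $w=\langle u\circledast v\rangle$. There are two subcases.
\begin{itemize}
\item If $\var(w)\ge k$, the algorithm replaces $w$ by $\link(w)$. By invariant 3, $\Eval{\link(w)}=\delta_{\var(w)}\Eval{w}$, since the formula $v\pi_1+(1-v)\pi_0$ is precisely the degree reduction in $v$. This lets us absorb the top $\delta_{\var(w)}$ and reduce to the decision-node case for $\link(w)$, whose variable is again $\var(w)$. The remaining reductions $\delta_k\dotsb\delta_{\var(w)-1}$ are then handled inside that case.
\item If $\var(w)<k$, the range $\delta_k\dotsb\delta_{\var(w)}$ is empty (or vacuous) as I read the indexing. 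We need $\Pev\sigma\Eval{w}=\Pev\sigma\Eval{u}\,\widehat{\circledast}\,\Pev\sigma\Eval{v}$, which holds because evaluation is a ring homomorphism. By invariant 2, $u$ and $v$ are final, so their recursive calls again involve no nontrivial reductions, and the induction hypothesis applies.
\end{itemize}

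I expect the main obstacle to be pinning down the index conventions: which direction "$\delta_k\dotsb\delta_{\var(w)}$" runs, whether the comparison is $\ge$ or $>$, and how the variable ordering (children have smaller variables) interacts with the range. Getting these slightly wrong would break the ``vacuous reduction'' steps. I would first fix the convention that the product denotes reductions for all variables $x_j$ with $k\le j\le\var(w)$, check the decision-node identity under it, and adjust the binary-operation subcase accordingly. A second subtlety is that $\link(w)$ is standard and may itself contain binary operation nodes below, so the induction must be well-founded on the pair (node depth, node kind) rather than on size alone.
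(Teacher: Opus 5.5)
Your proposal is correct and follows essentially the same route as the paper's proof. Both use structural induction on $w$ with three kinds of case:
\begin{itemize}
\item a base case for the leaves;
\item a decision-node case, using that $\Eval{w}$ is already linear in its top variable;
\item a binary-operation case, split on $\var(w)\ge k$ (pass to $\link(w)$ via invariant~3) versus $\var(w)<k$ (invariant~2 plus compatibility of $\Pev\sigma$ with $\widehat{\circledast}$).
\end{itemize}
Two of your points make explicit what the paper leaves implicit. First, you observe that invariant~3 is literally $\Eval{\link(w)}=\delta_{\var(w)}\Eval{w}$. Second, you note that the induction must be well-founded on something like (variable index, node kind), because the children of $\link(w)$ are not syntactic subterms of $w$.
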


\begin{proof}
  By induction on $w$ according to the inductive structure of eBDDs.

  \paragraph{Base $w \in \{0, 1\}$}: The conclusion follows directly.

    \paragraph{Step $w = \langle u \circledast v \rangle$}:
    \begin{itemize}
      \item Case $\var(w) \geq k$:
            \begin{align*}
              \langle v_j, l, r \rangle             & = \link(w)                                                                                                                \\
              v_j                                   & = \var(w)                                                                         &  & (\text{\ref{def:invariants}}) \\
              [\![\link(w)]\!]                      & = v_j \cdot [\![r]\!] + (1 - v_j) \cdot [\![l]\!]                                                                         \\
                                                    & = v_j \cdot \pi_{[v_j := 1]} [\![w]\!] + (1 - v_j)\cdot \pi_{[v_j := 0]}[\![w]\!] &  & (\text{\ref{def:invariants}}) \\
              \texttt{AnswerChallenge($l, \sigma, k$)} & = \Pev{\sigma[v_j \rightarrow 0]} \delta_k \dotsb \delta_{\var(l)} [\![w]\!]      &  & (\text{IH})                        \\
              \texttt{AnswerChallenge($r, \sigma, k$)} & = \Pev{\sigma[v_j \rightarrow 1]} \delta_k \dotsb \delta_{\var(r) } [\![w]\!]     &  & (\text{IH})
            \end{align*}
            Thus, using the fact that $v_j = \var(w) \geq \max(\var(l), \var(r)) \geq k$, if $v_j \notin \sigma$:
            \begin{multline*}
              v_j \cdot \texttt{AnswerChallenge($r, \sigma, k$)} + (1 - v_j) \cdot \texttt{AnswerChallenge($l, \sigma, k$)} \\
              = \Pev\sigma \delta_k \dotsb \delta_{\var(w)} [\![w]\!]
            \end{multline*}
            and if $\sigma(v_j) = c$:
            \begin{multline*}
              c \cdot \texttt{AnswerChallenge($r, \sigma, k$)} + (1 - c) \cdot \texttt{AnswerChallenge($l, \sigma, k$)} \\
              = \Pev\sigma \delta_k \dotsb \delta_{\var(w)} [\![w]\!]
            \end{multline*}
            which are the respective results returned by the algorithm in this case.
      \item Case $\var(w) < k$: We make use of the fact that all variables
            occurring in $\Eval{u}$ and $\Eval{v}$ are not in $\in [\var(w),
            \dots, k]$ due to the ordering constraint.
            \begin{align*}
              \texttt{AnswerChallenge($u, \sigma, k$)}                                                           & = \Pev\sigma  [\![u]\!]                                 &  & (\text{IH}) \\
              \texttt{AnswerChallenge($v, \sigma, k$)}                                                           & = \Pev\sigma  [\![v]\!]                                 &  & (\text{IH}) \\
              [\![\texttt{AnswerChallenge($u, \sigma, k$)} \hspace{5em} \\ 
              \circledast\ \texttt{AnswerChallenge($v, \sigma, k$)}]\!] & = \Pev\sigma [\![u]\!] \circledast \Pev\sigma [\![v]\!]                  \\
                                                                                                              & = \Pev\sigma[\![u \circledast v]\!]
            \end{align*}
            The conclusion follows, since the algorithm returns $\Pev\sigma
              [\![u]\!] \circledast \Pev\sigma [\![v]\!]$.
    \end{itemize}

  \paragraph{Step $w = \langle v_j, l, r \rangle$}: We have
  \begin{align*}
    \texttt{AnswerChallenge($l, \sigma, k$)} & = \Pev\sigma \delta_k \dotsb \delta_{\var(l)} [\![l]\!] &  & (\text{IH}) \\
    \texttt{AnswerChallenge($r, \sigma, k$)} & = \Pev\sigma \delta_k \dotsb \delta_{\var(r)} [\![r]\!] &  & (\text{IH})
  \end{align*}
  Also, $\delta_j [\![w]\!] = [\![w]\!]$ since $w$ is already in
  standard form and we can again rewrite the return value to match our proposition: If $v_j
    \notin \sigma$:
    \begin{multline*}
    v_j \cdot \texttt{AnswerChallenge($r, \sigma, k$)} + (1 - v_j) \cdot \texttt{AnswerChallenge($l, \sigma, k$)} \\
    = \Pev\sigma \delta_k \dotsb \delta_{\var(w)} [\![w]\!]
    \end{multline*}
  and if $\sigma(v_j) = c$:
  \begin{multline*}
    c \cdot \texttt{AnswerChallenge($r, \sigma, k$)} + (1 - c) \cdot \texttt{AnswerChallenge($l, \sigma, k$)} \\
    = \Pev\sigma \delta_k \dotsb \delta_{\var(w)} [\![w]\!]
  \end{multline*}
  which are the respective results returned by the algorithm in this case.
\end{proof}

Unlike partial evaluation challenges, \procname{ChallengeDistinct}
(Alg.~\ref{alg:distinct}) is only called by \prot\ on degree-reduced \gbcshort\ nodes.
Because degree-reduced nodes are represented using final BDDs, the link field of
our eBDDs is unused, and Prover generates a distinct assignment without the need
for an additional parameter $k$. It is easy to see that
\procname{ChallengeDistinct($w, r$)} runs in time $O(\min(|w|, |r|))$.

\begin{algorithm}
  \caption{\texttt{ChallengeDistinct($w, r$)}}
  \label{alg:distinct}
  \begin{algorithmic}
    \REQUIRE $w,\ r$ \COMMENT{Distinct, standard BDDs}
    \ENSURE Assignment $\sigma$ on which $[\![w]\!] \neq [\![r]\!]$
    \algrule
    \IF {$w \in \{0, 1\}$ \OR $r \in \{0, 1\}$}
    \RETURN $\sigma$ \COMMENT{Terminal case, we can return any total assignment}
    \ENDIF
    \STATE $\langle v_i, l_\varphi, r_\varphi\rangle = w$
    \STATE $\langle v_j, l_\psi, r_\psi\rangle = r$
    \IF {$v_i < v_j$}
    \STATE $l_\varphi \leftarrow w,\ r_\varphi \leftarrow w$
    \ELSIF {$v_j < v_i$}
    \STATE $l_\psi \leftarrow r,\ r_\psi \leftarrow r$
    \ENDIF
    \IF {$l_\varphi \neq l_\psi$}
    \STATE $\sigma \leftarrow \texttt{ChallengeDistinct($l_\varphi, l_\psi$)}$
    \RETURN $\sigma[\max(v_i, v_j) \rightarrow 0]$
    \ELSE
    \STATE $\sigma \leftarrow \texttt{ChallengeDistinct($r_\varphi, r_\psi$)}$
    \RETURN $\sigma[\max(v_i, v_j) \rightarrow 1]$
    \ENDIF
  \end{algorithmic}
\end{algorithm}

\section{Proving Proposition~\ref{prop:computebdd}}
\label{apx:proposition1}

Having shown that \procname{ApplyEBDD} computes a correct eBDD in
Appendix~\ref{apx:newname} and that Prover can use it to compute the
arithmetization of each gate in $\conv(\varphi)$ in
Appendix~\ref{apx:challengeprover}, we can now prove
Proposition~\ref{prop:computebdd} of the main text.

\spnewtheorem*{P1}{Proposition~\ref{prop:computebdd}}{\bfseries \upshape}{\itshape}
\begin{P1} Let $\psi_1,\psi_2$ denote nodes of
  $\conv(\varphi)$ and $u_1,u_2$ BDDs with $\Eval{u_i}=\Eval{\psi_i}$,
  $i\in\{1,2\}$. Then $\texttt{ApplyEBDD}(u_1, u_2, \circledast)$ satisfies
  $\Eval{w_0}=\Eval{\psi_1\circledast\psi_2}$ and
  $\Eval{w_{i+1}}=\delta_{x_{n-i}}\Eval{w_{i}}$ for every $0 \leq i \leq n-1$;
  moreover, $w_n$ is a BDD with $w_n=\texttt{Apply}(u_1,u_2, \circledast)$.
  Finally, the algorithm runs in time $O(T)$, where $T$ is the time taken by
  $\texttt{Apply($u_1, u_2, \circledast$)}$.
\end{P1}
\begin{proof}
 From Lemma~\ref{lemma:computeebdds} and
  Lemma~\ref{lemma:evaluateebddcorrectness} we have that the series of eBDDs
  $\Eval{w_0}=\Eval{\psi_1\circledast\psi_2}$ and
  $\Eval{w_{i+1}}=\delta_{x_{n-i}}\Eval{w_{i}}$ for every $0 \leq i \leq n-1$
  are computed by Prover via \procname{ApplyEBDD($u_1, u_2, \circledast$)}.
  Further, Lemma~\ref{lemma:complexity} bounds the algorithm's runtime to $O(T)$.
\end{proof}

\section{\protGC: a bottom-up version of \prot}
\label{apx:gccertify}

We describe \procname{\protGC($\varphi, \mathcal{C}$)}, a bottom-up modification
of \prot\ to allow Prover to remove intermediate eBDD nodes as Solver does.
\protGC\ essentially checks all claims about circuit $\varphi$ and then
propagates them `upwards' to its parent nodes, allowing $\varphi$ to be
removed. This means that \protGC\ is invoked multiple times in a sequence, until
no claims remain, while the top-down \prot\ is invoked only once at the end of
model checking. To accommodate garbage collection, Verifier is changed as
follows:
\begin{enumerate}
  \item At the first invocation of \texttt{\protGC($\varphi, \mathcal{C}$)},
        Verifier creates a random assignment $\sigma$ for all $n$ variables.
  \item Each time \prot\  samples a random value for a variable $v_k \R
        \F$, \protGC\ instead uses the random assignment $v_k \leftarrow
        \sigma(v_k)$ computed at the beginning.
  \item Verifier adds the claim $\Pev\sigma \Eval{\psi} = k$ to $\Claimset$ for
        each node $\psi \in \varphi$. Checking this claim allows Verifier to
        propagate claims to $\varphi$'s parents in the next invocation: Verifier
        replaces $\psi$ by a \emph{new GBC node type} $\varepsilon_k^\psi$ after
        running \protGC. Note that this node type does not have any children, so
        unreachable nodes can be garbage collected.
  \item In subsequent invocations, when checking a claim $\Pev{\sigma'}
        [\![\varepsilon_k^\psi]\!] = k'$ (corresponding to a previously
        removed node), Verifier rejects iff.\ $\sigma' \neq \sigma$ or $k
        \neq k'$.
\end{enumerate}

The main text explains how, if a dishonest Prover can `remember' old challenges,
then \protGC\ would no longer be sound. However, under the reasonable assumption
that Prover does not store previous communication with Verifier, \protGC\ is
sound and complete. The proof requires an additional Lemma that allows for sequential
execution of the protocol (\prot\ is run once, and its correctness proof only concerns
a single execution).
\begin{lemma}
  For a series of interactive proofs $\procname{\protGC}_j(\varphi_j,
    \mathcal{C}_j)$ for $j \in 0 \dotsc i$ and any Prover, assuming that Prover
    cannot remember previous challenges, i.e.\ it models an oracle, if any
    $\mathcal{C}_k$ contains a wrong \emph{assignment} claim about the $n$-{\gbcshort}
    $\varphi_k$, \procname{\protGC$_k$($\varphi_k, \mathcal{C}_k$)} will
    accept with probability at most $\left({4n \sum_{c = 0}^{k}
    \left|\varphi_c\right|}\right)/{\F}$. If $\mathcal{C}_k$ contains
    only wrong claims of non-assignment type,
    \procname{\protGC$_k$($\varphi_k, \mathcal{C}_k$)} will accept with
    probability at most $\left({4n \sum_{c = 0}^{k} \left|\varphi_c\right| +
    n}\right)/{\F}$ (soundness).
    
    If all claims in $\mathcal{C}_k$ are correct, it will accept with
    probability $1$ given an honest Prover (completeness).
    \label{lemma:protgccorrectness}
\end{lemma}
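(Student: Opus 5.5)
The plan is to argue by induction on the index $k$ of the invocation, reducing each invocation $\procname{\protGC}_k$ to a single run of \procname{CertifyAssignments} on the circuit $\varphi_k$, whose leaves now include the new node type $\varepsilon^\psi_{k'}$. The error bound for one run comes from Lemma~\ref{lemma:certifyassignments}. The bounds of earlier invocations enter through the $\varepsilon$-leaves.

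\emph{Completeness.} This is immediate. The honest Prover answers every challenge truthfully, so every $\varepsilon^\psi_{k'}$ records the true value $\Pev\sigma\Eval{\psi}$ under the fixed $\sigma$. Every leaf check then succeeds. Every \procname{Merge} and \procname{Propagate} step maps true claims to true claims with probability $1$, as shown in Appendix~\ref{subsc:roundapx}.

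\emph{Soundness, first step.} Here I would isolate where the oracle assumption is used. In \prot\ the random values $r$ are sampled after the Prover commits to its univariate polynomial $p$. In \protGC\ they are all fixed in advance as $\sigma$. The claim to show is this: because the Prover is an oracle, each answer $p$ is a function only of the current query. The query reveals at most the coordinates of $\sigma$ already substituted, and never the coordinate $\sigma(x_i)$ about to be used for the degree reduction or merge step of that query.

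To secure this, I would check that the queries issued in $\procname{\protGC}_k$ for a variable $x_i$ leave $x_i$ unassigned. Then, conditioned on the Prover's (memoryless) answer, $\sigma(x_i)$ is still uniform. The Schwartz--Zippel arguments of Lemma~\ref{lemma:certifyassignments} therefore go through unchanged: each \procname{Merge} loses at most $2n/|\F|$ and each \procname{Propagate} at most $2/|\F|$. Summing as in that lemma gives at most $4n|\varphi_k|/|\F|$ probability that invocation $k$ turns a false claim into all-true claims, plus possibly a false claim that reaches an $\varepsilon$-leaf.

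\emph{Soundness, accounting across invocations.} A false claim in $\mathcal{C}_k$ can reach one of three outcomes:
\begin{itemize}
\item It reaches an input gate $\top,\bot,x_j$, where it is caught.
\item It reaches $\varepsilon^\psi_{k'}$ with $\sigma'\neq\sigma$ or a mismatched value, where it is rejected.
\item It reaches $\varepsilon^\psi_{k'}$ with $\sigma'=\sigma$ and matching value $k$.
\end{itemize}
In the third case the stored claim $\Pev\sigma\Eval{\psi}=k$ itself was false. By the induction hypothesis, it survived invocation $c\le k-1$ with probability at most $4n\sum_{c<k}|\varphi_c|/|\F|$. A union bound then yields $4n\sum_{c\le k}|\varphi_c|/|\F|$.

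For non-assignment claims, \procname{Normalize} is run first. By Lemma~\ref{lemma:normalizecehp} it loses at most $n/|\F|$ more, which gives the second bound. Here too the random $\sigma$ for equivalence claims can be taken as the fixed global one, again by the oracle assumption.

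The main obstacle is making the independence argument rigorous in the first soundness step. A merged claim can carry coordinates of $\sigma$ into a query, and across invocations many queries jointly reveal all of $\sigma$. I would handle this by ordering the reveal of $\sigma$ per variable along the degree-reduction chain. I would also argue that a memoryless Prover's answer to a query with $x_i$ free is a fixed polynomial independent of $\sigma(x_i)$. This is exactly the principle in the $x\wedge x$ example of Section~\ref{subsec:gc}.
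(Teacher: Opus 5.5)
Your proposal follows essentially the paper's own proof: induction over the sequence of invocations; the oracle assumption, which makes each Prover polynomial independent of the pre-sampled coordinate $\sigma(x_i)$ so that the Schwartz--Zippel bounds of Lemma~\ref{lemma:certifyassignments} still hold; reduction of a false claim that matches a stored $\varepsilon^\psi_k$ to a false claim accepted in an earlier invocation; a union bound; and an extra $n/|\F|$ from \procname{Normalize}. You are in fact more explicit than the paper about why that independence holds, but you skip one completeness step the paper spells out: the $\varepsilon$-leaf check requires $\sigma'=\sigma$ exactly, which holds because \procname{Merge} overwrites every free coordinate with the corresponding value of the fixed $\sigma$.
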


To prove the above Lemma, we refer to the proof of \prot\ (\protGC\ only differs
by minor modifications). Remember that Lemma~\ref{lemma:certifyassignments}
bounds the soundness error of \procname{CertifyAssignments} to at most
$\left(4n|\varphi|\right) / \F$, and it can only reach $\left(4n|\varphi| +
n\right) / \F$ due to $\procname{Normalize}$ if there are only false
\emph{equivalence} claims (Lemma~\ref{lemma:normalizecehp}).
\begin{proof}
  We prove soundness and completeness
  for each invocation $\procname{\protGC}_j,\ j \in [0, i]$ via induction on
  $i$.

  \paragraph{Case $i = 0$:}

  We begin by proving soundness, i.e.\ if a claim in $\mathcal{C}_0$ is false,
  then \procname{\protGC$_0$($\varphi_0, \mathcal{C}_0$)} accepts with probability
  at most $4n|\varphi|+n/\F$. The original proof requires new reasoning for the
  following case: In any round in which a random value $v_k \R \F$
  was sampled in \prot, we now instead use the value sampled at the beginning of
  the protocol $v_k \leftarrow \sigma_R(v_k)$.

  Note that because it is the first invocation, Verifier samples a random
  assignment $\sigma_R$ and the circuit $\varphi_0$ does not include any GBC
  node $\varepsilon_k^\psi$. For a polynomial $p(v_k)$ in \prot\ (sent by
  Prover), the probability that a value randomly sampled by Verifier turned a
  false claim into a true claim is at most $2 / \F$ due to the
  Schwartz-Zippel lemma. In \protGC, this probability is also
  bound by $2 / \F$ due to the Schwartz-Zippel lemma, which we can use
  because the polynomial sent by Prover $p(v_k)$ and $\sigma_R(v_k)$ are
  independent (because Prover cannot `remember' the interaction, i.e.\ Prover is
  an oracle).

  Proving perfect completeness for an honest Prover requires no adjustment to
  the original proof.

  \paragraph{Case $i = j + 1$:}
  By the IH, we only need to bound the error probability for the next invocation
  $j + 1$. We again begin by proving soundness. The proof requires handling the
  new case of checking a false claim of the form $\Pev{\sigma'}
  \Eval{\varepsilon_k^\psi}$: Assume Verifier is checking false claim
  $\Pev{\sigma'} \Eval{\varepsilon_k^\psi} = k'$ on a node. Further Assume that
  $\sigma' = \sigma_R$ and $k = k'$, as otherwise Verifier rejects. Because
  $\Pev{\sigma_R} [\![\varepsilon_k^\psi]\!] = k$ is false, we have that
  $\Pev{\sigma_R} \Eval{\psi} \neq k$. However, the claim $\Pev{\sigma_R}
  \Eval{\psi} = k$ was verified previously by Verifier in invocation
  \procname{\protGC$_c$($\phi_c, \mathcal{C}_c$)} for $c \leq j$ before
  replacing $\psi$ by $\varepsilon_k^\psi$, which we can assume to have accepted
  with probability at most $4n \sum_{t = 0}^{c}|\varphi_t| / \F$ by the
  induction hypothesis.
  
  Now we can prove that $\protGC_i(\varphi_i, C_i)$ rejects with probability at
  least $1 - (4n|\varphi_i| + n)/\F$. In order to falsely accept, Verifier must
  either have replaced a false claim by a true claim as in the original protocol
  with prob.\ at most $\left(4n|\varphi_i| + n /\right) \F$ or all of the false
  claims about dead nodes were wrongly accepted with prob.\ at most $\left(4n
  \sum_{t = 0}^{j}|\varphi_t|\right) / \F$. By the union bound, this gives us an
  upper bound on falsely accepting of $\left( 4n \sum_{c = 0}^{i} |\varphi_c| +
  n \right) / \F$

  To show completeness for the new node type $\varepsilon$, assume that claim
  $\Pev\sigma [\![\varepsilon_k^\psi]\!] = k'$ is correct. The only possibility
  for Verifier to reject said claim is if $\sigma \neq \sigma_R$ or $k \neq k'$.
  We first show that $\sigma = \sigma_R$ when the claim gets propagated to node
  $\varepsilon_K^\psi$.

  Recall from the description of \protGC\ that Verifier added claims
  $\Pev{\sigma_R} \Eval{\phi} = k$ to each node $\phi$ in $\varphi_i$. If
  $\sigma \neq \sigma_R$ or $k \neq k'$, then Verifier will use \procname{Merge}
  at the beginning of visiting node $\varepsilon_k^\psi$ to generate a single
  assignment claim. Because every variable assignment is taken from $\sigma_R$
  in \procname{Merge}, the resulting assignment must be $\sigma_R$. By
  assumption that the claim is correct, $k = k'$. Therefore, Verifier accepts
  with probability $1$.

\end{proof}

\spnewtheorem*{L2}{Lemma~\ref{lemma:eccertifycorrectness}}{\bfseries \upshape}{\itshape}

Lemma~\ref{lemma:protgccorrectness} proves Lemma~\ref{lemma:eccertifycorrectness}
by noting that, with garbage collection, \checker\ performs the series of
certifications $\protGC_j(\varphi_j, \Claimset_j)$ for $j \in [0, i]$, and,
without garbage collection, \checker\ performs one certification $\prot(\varphi,
\Claimset)$ with $\bigcup \varphi_j = \varphi$ and $\varphi_j, \varphi_k$
disjoint.

\begin{L2}[\protGC] {[Soundness/Completeness]}
   If $\mathcal{C}$ contains a false claim about an {\gbcshort} $\conv(\varphi)$ with
  $n$ variables, then Verifier accepts with probability at most $\left({4n
  |\varphi| + n}\right)/{\F}$ for any Prover that acts as an oracle. If
  all claims in $\mathcal{C}$ are true, Verifier accepts with probability $1$
  for the honest prover (completeness).
\end{L2}

\end{document}